\def\tdotoggle{1}
\setlist[itemize]{noitemsep,label=$-$}
\setlist[enumerate]{noitemsep}
\definecolor{Gred}{RGB}{219, 50, 54}
\definecolor{Ggreen}{RGB}{60, 186, 84}
\definecolor{Gblue}{RGB}{72, 133, 237}
\definecolor{Gyellow}{RGB}{247, 178, 16}
\definecolor{ToCgreen}{RGB}{0, 128, 0}
\definecolor{myGold}{RGB}{231,141,20}
\definecolor{myBlue}{rgb}{0.19,0.41,.65}
\definecolor{myPurple}{RGB}{175,0,124}
\providecommand{\tdotoggle}{1}
\newcommand{\mytodo}[1]{\ifnum\tdotoggle=1{#1}\fi}
\newcommand{\tableoftodos}{\ifnum\tdotoggle=1 \listoftodos[Comments/To Do's] \fi}
\newcommand{\myignore}[1]{}
\newcommand{\TealBlue}[1]{\textcolor{TealBlue}{#1}}
\newcommand{\Peach}[1]{\textcolor{Peach}{#1}}
\newcommand{\Cyan}[1]{\textcolor{cyan}{#1}}
\newcommand{\Red}[1]{\textcolor{red}{#1}}
\newcommand{\Navy}[1]{\textcolor{Blue}{#1}}
\newcommand{\Blue}[1]{\textcolor{Blue}{#1}}
\newcommand{\Green}[1]{\textcolor{OliveGreen}{#1}}
\newcommand{\Black}[1]{\textcolor{black}{#1}}
\newcommand{\White}[1]{\textcolor{white}{#1}}
\newcommand{\Code}[1]{\Cyan{\texttt{#1}}}
\newcommand{\Gray}[1]{\textcolor{gray}{#1}}
\newcommand{\Magenta}[1]{\textcolor{magenta}{#1}}
\newcommand{\Maroon}[1]{\textcolor{Maroon}{#1}}
\newcommand{\tTealBlue}[1]{\ifnum\tdotoggle=1\TealBlue{#1}\else{#1}\fi}
\newcommand{\tPeach}[1]{\ifnum\tdotoggle=1\Peach{#1}\else{#1}\fi}
\newcommand{\tCyan}[1]{\ifnum\tdotoggle=1\Cyan{#1}\else{#1}\fi}
\newcommand{\tRed}[1]{\ifnum\tdotoggle=1\Red{#1}\else{#1}\fi}
\newcommand{\tNavy}[1]{\ifnum\tdotoggle=1\Navy{#1}\else{#1}\fi}
\newcommand{\tBlue}[1]{\ifnum\tdotoggle=1\Blue{#1}\else{#1}\fi}
\newcommand{\tGreen}[1]{\ifnum\tdotoggle=1\Green{#1}\else{#1}\fi}
\newcommand{\tBlack}[1]{\ifnum\tdotoggle=1\Black{#1}\else{#1}\fi}
\newcommand{\tWhite}[1]{\ifnum\tdotoggle=1\White{#1}\else{#1}\fi}
\newcommand{\tCode}[1]{\ifnum\tdotoggle=1\Code{#1}\else{#1}\fi}
\newcommand{\tGray}[1]{\ifnum\tdotoggle=1\Gray{#1}\else{#1}\fi}
\newcommand{\tMagenta}[1]{\ifnum\tdotoggle=1\Magenta{#1}\else{#1}\fi}
\newcommand{\tMaroon}[1]{\ifnum\tdotoggle=1\Maroon{#1}\else{#1}\fi}
\let\OLDthebibliography\thebibliography
\renewcommand\thebibliography[1]{
	\OLDthebibliography{#1}
	\setlength{\parskip}{1.3pt}
	}
\newcommand{\inparen}[1]{\left ( #1 \right )}
\newcommand{\infork}[1]{\left \{ \begin{matrix} #1 \end{matrix} \right .}
\newcommand{\inabs}[1]{\begin{vmatrix} #1 \end{vmatrix}}
\newcommand{\ceil}[1]{\left \lceil #1 \right \rceil}
\newcommand{\floor}[1]{\left \lfloor #1 \right \rfloor}
\newcommand{\set}[1]{\inbrace{#1}}
\renewcommand{\mod}[1]{~\inparen{\mathrm{mod}~#1}}
\let\boldm\bm
\renewcommand{\bm}{{\boldm m}}
\newcommand{\bx}{{\boldm x}}
\newcommand{\bA}{{\boldm A}}
\newcommand{\bB}{{\boldm B}}
\def\compactify{\itemsep=0pt \topsep=0pt \partopsep=0pt \parsep=0pt}
\let\latexusecounter=\usecounter
\newenvironment{Enumerate}
{\def\usecounter{\compactify\latexusecounter}
	\begin{enumerate}}
	{\end{enumerate}\let\usecounter=\latexusecounter}
\def\floor#1{\mathop{\left\lfloor#1\right\rfloor}}
\def\ceil#1{\mathop{\left\lceil#1\right\rceil}}
\def\abs#1{\left|#1\right|}
\def\p#1{\left(#1\right)}
\def\a#1{\langle#1\rangle}
\def\set#1{\left\{#1\right\}}
\newcommand{\parasc}[1]{\noindent \contour{black}{#1}}
\newcommand{\paragr}[1]{\noindent \textbf{\boldmath #1}}
\newcommand{\chara}{\mathds{1}}
\newcommand{\Z}{\ensuremath{\mathbb{Z}}}
\newcommand{\F}{\ensuremath{\mathbb{F}}}
\newcommand{\bbF}{\ensuremath{\mathbb{F}}}
\newcommand{\bbZ}{\ensuremath{\mathbb{Z}}}
\newcommand{\calC}{\mathcal{C}}
\newcommand{\calG}{\mathcal{G}}
\newcommand{\calM}{\mathcal{M}}
\newcommand{\calO}{\mathcal{O}}
\newcommand{\calQ}{\mathcal{Q}}
\newcommand{\calR}{\mathcal{R}}
\newcommand{\calT}{\mathcal{T}}
\newcommand{\calV}{\mathcal{V}}
\DeclarePairedDelimiter\bracks{[}{]}
\renewcommand{\vec}[1]{\boldsymbol{#1}}
\newcommand{\veca}{\ensuremath{\boldsymbol{a}}}
\newcommand{\vecf}{\ensuremath{\boldsymbol{f}}}
\newcommand{\vecg}{\ensuremath{\boldsymbol{g}}}
\newcommand{\vech}{\ensuremath{\boldsymbol{h}}}
\newcommand{\vecv}{\ensuremath{\boldsymbol{v}}}
\newcommand{\vecx}{\ensuremath{\boldsymbol{x}}}
\newcommand{\vecy}{\ensuremath{\boldsymbol{y}}}
\newcommand{\vecz}{\ensuremath{\boldsymbol{z}}}
\theoremstyle{plain}            \newtheorem{theorem}{Theorem}
\newtheorem{question}{Open Problem}
\newtheorem{lemma}{Lemma}[section]
\newtheorem{corollary}[lemma]{Corollary}
\newtheorem{claim}[lemma]{Claim}
\newtheorem{fact}[lemma]{Fact}
\newtheorem*{ntheorem}{Theorem}
\theoremstyle{definition}       \newtheorem{definition}[lemma]{Definition}
\theoremstyle{remark}           \newtheorem{remark}[lemma]{Remark}
\numberwithin{equation}{section}
\newenvironment{prevproof}[2]{\noindent {\bf {Proof of
			\autoref{#2}:}}}{\hfill $\blacksquare$\vskip \belowdisplayskip}
\newcommand{\bit}{\ensuremath{\set{0,1}}}
\DeclareMathOperator*{\poly}{\ensuremath\mathrm{poly}}
\newcommand{\problem}[1]{\ensuremath{\textsc{#1}}\xspace}
\newcommand{\class}[1]{\ensuremath{\mathsf{#1}}\xspace}
\renewcommand{\P}{\class{P}}
\newcommand{\FP}{\class{FP}}
\newcommand{\NP}{\class{NP}}
\newcommand{\FNP}{\class{FNP}}
\newcommand{\TFNP}{\class{TFNP}}
\newcommand{\PPP}{\class{PPP}}
\newcommand{\PPA}{\class{PPA}}
\newcommand{\PPAp}{\class{PPA}_{{p}}}
\newcommand{\PLS}{\class{PLS}}
\newcommand{\PPAD}{\class{PPAD}}
\newcommand{\PPADS}{\class{PPADS}}
\newcommand{\CLS}{\class{CLS}}
\newcommand{\PMOD}{\class{PMOD}}
\newcommand{\sis}{\problem{SIS}}
\newif\ifnotes\notestrue
\definecolor{mygrey}{gray}{0.50}
\newcommand{\notename}[2]{{\textcolor{red}{\footnotesize{\bf (#1:} {#2}{\bf
				) }}}}
\newcommand{\notename}[2]{{}}
\mathchardef\mdash="2D
\renewcommand{\epsilon}{\varepsilon}
\newcommand{\pnum}{p}
\newcommand{\qnum}{q}
\newcommand{\amp}{\textsf{\upshape \&}}
\newcommand{\bamp}{~\textsf{\upshape \&}~}
\newcommand{\leafp}{\ensuremath\textsc{Leaf}_{\pnum}}
\newcommand{\lonelyp}{\ensuremath\textsc{Lonely}_{\pnum}}
\newcommand{\bipartitep}{\ensuremath{\textsc{Bipartite}_{\pnum}}}
\newcommand{\twoMatchingsp}{\ensuremath{\textsc{TwoMatchings}_{\pnum}}}
\newcommand{\sucBipartitep}{\ensuremath{\textsc{SuccinctBipartite}_{\pnum}}}
\newcommand{\leafq}{\ensuremath\textsc{Leaf}_{\qnum}}
\newcommand{\lonelyq}{\ensuremath\textsc{Lonely}_{\qnum}}
\newcommand{\bipartiteq}{\ensuremath{\textsc{Bipartite}_\qnum}}
\newcommand{\leaf}{\ensuremath\textsc{Leaf}}
\newcommand{\lonely}{\ensuremath\textsc{Lonely}}
\newcommand{\bipartite}{\ensuremath{\textsc{Bipartite}}}
\newcommand{\modn}{\ensuremath{\textsc{Mod}}}
\newcommand{\twoMatchings}{\ensuremath{\textsc{TwoMatchings}}}
\newcommand{\sucBipartite}{\ensuremath{\textsc{SuccinctBipartite}}}
\newcommand{\reducible}{\preceq}
\newcommand{\Chevalley}{\ensuremath\textsc{Chevalley}}
\newcommand{\CWT}{CWT\xspace}
\newcommand{\newclass}[1]{{\text{\upshape\sffamily #1}}\xspace}
\renewcommand{\P}{\newclass{P}}
\newcommand{\M}{\newclass{M}}
\newcommand{\secref}[1]{\hyperref[#1]{\S\ref{#1}}}
\newcommand{\eqnref}[1]{\hyperref[#1]{Eq. \ref{#1}}}
\newcommand{\lChevalleyp}{\textsc{GeneralChevalley}_{\pnum}}
\newcommand{\cChevalleyp}{\textsc{ChevalleyWithSymmetry}_{\pnum}}
\newcommand{\eChevalleyp}{\textsc{Chevalley}_{\pnum}}
\newcommand{\lChevalley}{\textsc{GeneralChevalley}}
\newcommand{\cChevalley}{\textsc{ChevalleyWithSymmetry}}
\newcommand{\eChevalley}{\textsc{Chevalley}}
\newcommand{\CW}{\mathsf{CW}}
\newcommand{\bis}{\ensuremath\textsc{BIS}}
\newcommand{\ACz}{\mathsf{AC}^{0}}
\newcommand{\ACzp}{\mathsf{AC}^{0}_{\F_p}} 
\begin{document}

\title{On the Complexity of Modulo-$q$ Arguments\\
	and the Chevalley--Warning Theorem}

\author{\setlength\tabcolsep{.7em}
\begin{tabular}{cccc}
Mika G\"o\"os &
Pritish Kamath&
Katerina Sotiraki&
Manolis Zampetakis\\[-1mm]
\small\slshape Stanford&
\small\slshape TTIC&
\small\slshape MIT&
\small\slshape MIT
\end{tabular}}

\date{\vspace{2mm}\large \today}

\maketitle
\thispagestyle{empty}

\begin{abstract}\noindent
We study the search problem class $\PPA_q$ defined as a
modulo-$q$ analog of the well-known \emph{polynomial parity argument} class
$\PPA$ introduced by Papadimitriou ({\small JCSS 1994}). Our first result
shows that this class can be characterized in terms of $\PPA_p$ for prime $p$.

Our main result is to establish that an \textit{explicit} version of a
search problem associated to the Chevalley--Warning theorem is complete for
$\PPA_p$ for prime $p$. This problem is \emph{natural} in that it does not
explicitly involve circuits as part of the input. It is the first such
complete problem for $\PPA_p$ when $p \ge 3$.

Finally we discuss connections between Chevalley-Warning theorem and the
well-studied \emph{short integer solution} problem and survey the structural
properties of $\PPA_q$.
\end{abstract}

\setcounter{page}{0}

\setlength{\cftbeforesecskip}{6pt}
\setcounter{tocdepth}{2}
\addtocontents{toc}{\protect\thispagestyle{empty}}

{\small \tableofcontents}

\clearpage

\section{Introduction} \label{sec:introduction}

The study of \emph{total $\NP\!$ search problems} ($\TFNP$) was initiated by
Megiddo and Papadimitriou \cite{megiddo91total} and Papadimitriou
\cite{papadimitriou94parity} to characterize the complexity of search problems
that have a solution for every input and where a given solution can be
efficiently checked for validity. Meggido and Papadimitriou
\cite{megiddo91total} showed that the notion of $\NP$-hardness is inadequate to
capture the complexity of total $\NP$ search problems. By now, this theory has
flowered into a sprawling jungle of widely-studied syntactic complexity classes
(such as $\PLS$~\cite{johnson88local},
$\PPA/\PPAD/\PPP$~\cite{papadimitriou94parity},
$\CLS$~\cite{daskalakis11continuous}) that serve to classify the
complexities of many relevant search problems.

  The goal of identifying \textit{natural}\footnote{Following the terminology
of many $\TFNP$ papers, including
\cite{grigni2001sperner, filosratsikas18consensus, filosratsikas19splitting, sotiraki18ppp},
a natural problem is one that does not have explicitly a circuit or a Turing
machine as part of the input.} complete problems for these complexity classes
lies in the foundation of this sub-field of complexity theory and not only gives
a complete picture of the computational complexity of the corresponding search
problems, but also provides a better understanding of the complexity
classes. Such natural complete problems have also been an essential middle-step
for proving the completeness of other important search problems, the same way
that the $\NP$-completeness of $\textsc{Sat}$ is an essential middle step in
showing the $\NP$-completeness of many other natural problems. Some known
natural complete problems for $\TFNP$ subclasses are: the $\PPAD$-completeness
of $\textsc{NashEquilibrium}$ \cite{daskalakis09nash}, the $\PPA$-completeness
of $\textsc{ConsensusHalving}$, $\textsc{NecklaceSplitting}$ and
$\textsc{HamSandwich}$ problems
\cite{filosratsikas18consensus, filosratsikas19splitting} and the
$\PPP$-completeness of natural problems related to lattice-based cryptography
\cite{sotiraki18ppp}. Finally, the theory of total search problems has found
connections beyond its original scope to areas like communication complexity and
circuit lower bounds \cite{goos19adventures}, cryptography
\cite{bitansky2015cryptographic, komargodski2019white, choudhuri2019finding}
and the Sum-of-Squares hierarchy \cite{kothari2018sum}.

  Our main result is to identify the first natural complete problem for
the classes $\PPA_q$, a variant of the class $\PPA$. We also illustrate the
relevance of these classes through connections with important search problems
from combinatorics and cryptography.
\medskip

\noindent \textbf{Class \boldmath $\PPA_q$.} The class $\PPA_q$ was defined, in
passing, by Papadimitriou~\cite[p.~520]{papadimitriou94parity}. It is a
modulo-$q$ analog of the well-studied \emph{polynomial parity argument}
class~$\PPA$ (which corresponds to $q = 2$). The class embodies the following
combinatorial principle:
\begin{quote}\itshape
	\centering
If a bipartite graph has a node of degree not a multiple of $q$,\\
then there is another such node.
\end{quote}
In more detail, $\PPA_q$ consists of all total $\NP$ search problems
reducible\footnote{Here, we consider a {\em many-one reduction}, which
is a polynomial time algorithm with one oracle query to the said
problem. In contrast, a {\em Turing reduction} allows polynomially many
oracle queries. See \autoref{sec:intro-struc} for a comparison.} to the
problem $\bipartite_q$ defined as follows. An instance of this problem
is a balanced bipartite graph $G = (V \cup U, E)$,
where $V \cup U = \{0, 1\}^n$ together with a designated vertex
$v^{\star} \in V \cup U$. The graph $G$ is implicitly given via a circuit $C$
that computes the neighborhood of every node in $G$. Let $\deg(v)$ be the
degree of the node $v$ in $G$. A valid solution is a node $v \in \{0, 1\}^n$
such that, either
\begin{itemize}
	\item[$\triangleright$] $v = v^{\star}$ satisfying 		$\deg(v) \equiv 0 \mod{q}$ [{\em Trivial Solution}] ;
	 or \smallskip
	\item[$\triangleright$] $v \ne v^{\star}$ 	satisfying $\deg(v) \not\equiv 0 \mod{q}$.
\end{itemize}

\noindent In \autoref{sec:definitions} we provide some other total search problems
($\lonely_q$, $\leaf_q$) that are reducible to and from $\bipartite_q$. Any one
of these problems could be used to define $\PPA_q$. In fact, $\lonely_q$ and
$\leaf_q$ are natural variants of the standard problems $\lonely$ and $\leaf$
which are used to define the class $\PPA$.
\bigskip
\clearpage

\noindent \textbf{Our contributions.} We illustrate the importance of the
complexity classes $\PPA_q$ by showing that many important search problems whose
computational complexity is not well understood belong to $\PPA_q$
(see \secref{sec:intro-open} for details). These problems span a wide range of
scientific areas, from algebraic topology to cryptography. For some of these
problems we conjecture that $\PPA_q$-completeness is the right notion to
characterize their computational complexity. The study of $\PPA_q$ is also
motivated from the connections to other important and well-studied classes like
$\PPAD$.

In this paper, we provide a systematic study of the complexity classes $\PPA_q$. Our main result is the identification of the first natural complete problem for
$\PPA_q$ together with some structural results. Below we give a more precise
overview of our results.
\begin{itemize}[leftmargin=1.5cm,itemsep=5pt]
	\item[$\underset{\text{\autoref{sec:characterization}}}{\text{\secref{sec:intro-char}}}$:]
		We characterize $\PPA_q$ in terms of $\PPA_p$ for prime $p$.
	\item[$\underset{\text{\autoref{sec:chevalley}}}{\text{\secref{sec:intro-complete}}}$:]
		Our main result is that an \textit{explicit}\footnote{Following the
		terminology in \cite{belovs17nullstellensatz}, by {\em explicit} we mean that
		the system of polynomials, which is the input of the computational problems we define, are given as a sum of monic monomials.} version of the Chevalley-Warning theorem is complete

		\vspace{-5.5pt} for $\PPA_p$ for prime $p$. This problem is \emph{natural}
		in that it does not involve circuits as part of the input and is the first
		known natural complete problem for $\PPA_p$ when $p \ge 3$.
	\item[$\underset{\text{\autoref{sec:simplification}}}{\text{\secref{sec:intro-depth}}}$:]
	  As a consequence of the $\PPA_p$-completeness of our natural problem, we show that
		restricting

		\vspace{-5.5pt} the input circuits in the definition of $\PPA_p$ to just constant
		depth arithmetic formulas doesn't change the power of the class.
	\item[$\underset{\text{\autoref{sec:applications}}}{\text{\secref{sec:intro-sis}}}$:]
	  We show a connection between $\PPA_q$ and the Short Integer Solution
	  ($\sis$) problem from

		\vspace{-5.5pt} the theory of lattices. This connection implies that $\sis$
		with constant modulus $q$ belongs to $\PPA_q \cap \PPP$, but also
		provides a polynomial time algorithm for solving $\sis$ when the modulus $q$
		is constant and has only $2$ and $3$ as prime factors.
	\item[$\underset{\text{\autoref{sec:structural}}}{\text{\secref{sec:intro-struc}}}$:]
	  We sketch how existing results already paint a near-complete picture of the relative power

	  \vspace{-5.5pt}  of $\PPA_p$ relative to other $\TFNP$ subclasses (via inclusions and oracle separations). We also show that $\PPA_q$ is closed under Turing
		reductions.
\end{itemize}

\noindent In \autoref{sec:intro-open}, we include a list of open problems that
illustrate the broader relevance of $\PPA_q$. We note that  a concurrent and
independent work by Hollender~\cite{hollender19ppak} also establishes the
structural properties of $\PPA_q$ corresponding to \secref{sec:intro-char} and
\secref{sec:intro-struc}.

\subsection{Characterization via Prime Modulus} \label{sec:intro-char}

  We show, in \autoref{sec:characterization}, that every class $\PPA_q$ is built
out of the classes $\PPA_p$ for $p$ a prime. To formalize this result, we
recall the operator `\amp' defined by Buss and
Johnson \cite[\S6]{buss12propositional}. For any two syntactic complexity
classes $\M_0$, $\M_1$ with complete problems $S_0$, $S_1$, the class
$\M_0 \bamp \M_1$ is defined via its complete problem $S_0 \bamp S_1$ where, on
input $(x, b) \in \{0, 1\}^* \times \bit$, the goal is to find a solution for
$x$ interpreted as an instance of problem $S_b$. Namely, if $b = 0$ then the output has to be a solution of $S_0$ with
input $x$, and otherwise it has to be a solution of $S_1$ with input $x$.
Intuitively speaking, $\M_1 \bamp \M_2$ combines
the powers of both $\M_1$ and $\M_2$. Note that
$\M_1 \cup \M_2 \subseteq \M_1 \bamp \M_2$. We can now formally
express our characterization result (where $p|q$ is the set of primes $p$
dividing $q$).

\begin{theorem}\label{thm:prime-characterization}
$\PPA_q = \amp_{p|q}\, \PPA_p$.
\end{theorem}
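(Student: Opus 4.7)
The plan is to establish the two inclusions $\amp_{p \mid q} \PPA_p \subseteq \PPA_q$ and $\PPA_q \subseteq \amp_{p \mid q} \PPA_p$ separately.

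For the first inclusion, I would use monotonicity of the \amp operator (so it suffices to show $\PPA_p \subseteq \PPA_q$ for each prime $p \mid q$) and construct a many-one reduction from $\bipartite_p$ to $\bipartite_q$ by \emph{degree scaling}. Given a $\bipartite_p$ instance $(G, v^{\star})$, I would build $(G', v^{\star})$ in which each neighbor listed in $G$ is replicated $q/p$ times by the new neighborhood circuit, so that $\deg_{G'}(v) = (q/p)\, \deg_G(v)$. The elementary identity $(q/p)\, d \equiv 0 \pmod q \iff d \equiv 0 \pmod p$ then shows that the solution sets of the two instances coincide exactly, so $v^{\star}$ is trivially valid in $(G',v^{\star})$ precisely when it is in $(G,v^{\star})$, and nontrivial solutions correspond under the identity map.

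For the second inclusion, write $q = \prod_{p \mid q} p^{e_p}$ and proceed in two stages. In Stage~1 I reduce $\bipartite_q$ to $\amp_{p \mid q}\, \bipartite_{p^{e_p}}$ by a Chinese-remainder dispatch: on input $(G, v^{\star})$, compute $d := \deg_G(v^{\star}) \bmod q$ and, if $d \ne 0$, use CRT to pick a prime $p \mid q$ with $d \not\equiv 0 \pmod{p^{e_p}}$ and route $(G, v^{\star})$ into the $\bipartite_{p^{e_p}}$ subproblem; the oracle is then forced to return a vertex $\sigma \ne v^{\star}$ with $\deg_G(\sigma) \not\equiv 0 \pmod{p^{e_p}}$, which is automatically a valid $\bipartite_q$ solution. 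If $d=0$, any subproblem works since both possible oracle answers (the trivial $v^{\star}$, or a nontrivial $\sigma$) translate back to valid $\bipartite_q$ solutions. In Stage~2 I reduce each $\bipartite_{p^e}$ to $\bipartite_p$ by induction on $e$: the inductive step from $e$ to $e-1$ builds, from a $\bipartite_{p^e}$ instance, a graph $G'$ whose edges locally partition each $p$-divisible neighborhood of $G$ into $p$ bundles, so that nontrivial $\bipartite_{p^{e-1}}$ solutions of $G'$ correspond to vertices $v$ with $\deg_G(v) \not\equiv 0 \pmod{p^e}$.

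The main obstacle is Stage~2 --- implementing the ``divide degrees by $p$'' gadget in polynomial time using only local access to the neighborhood circuit of $G$, since global edge decompositions (such as a $p$-edge-coloring of a $p$-divisible bipartite graph) are not in general polytime constructible from an implicit representation. I plan to side-step this by instead carrying out the construction on the equivalent problems $\lonely_q$ or $\leaf_q$, whose path/tree structure admits local recursive constructions much more naturally than the bipartite graph formulation, and then translating back along the reductions between these problems to be set up in \autoref{sec:definitions}.
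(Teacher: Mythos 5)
Your overall plan matches the paper's at a high level: prove $\amp_{p \mid q}\PPA_p \subseteq \PPA_q$ by a blowup-type reduction, and $\PPA_q \subseteq \amp_{p \mid q}\PPA_p$ by a CRT dispatch followed by a collapse of prime powers. In particular your Stage~1 (route $(G,v^{\star})$ to a prime-power factor on which $\deg(v^{\star})$ is nonzero) is exactly the paper's coprime step. However, there are two concrete problems.

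For the first inclusion, the ``degree scaling'' reduction $\bipartite_p \preceq \bipartite_q$ does not work as written: the encoding of $\bipartite_q$ has the circuit output a $k$-\emph{subset} of $\bit^n$, not a multiset, so you cannot list a neighbor $q/p$ times. The natural repair of duplicating the opposite side's vertices $q/p$-fold scales $\deg(v)$ for $v\in V$ but leaves $\deg(u)$ unchanged for $u\in U$, creating spurious solutions (e.g.\ $q=6$, $p=2$, $\deg_G(u)=2$: $u$ is a $\bipartite_6$ solution of $G'$ but not a $\bipartite_2$ solution of $G$); duplicating both sides instead produces extra copies of $v^{\star}$ that can themselves become spurious nontrivial solutions. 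The paper sidesteps all of this by doing the blowup on $\lonely$ (replace each hyperedge $e$ by $e\times[r]$ over $V\times[r]$), where every degree is $0$ or $1$ and the only thing that must scale is the count of designated vertices.

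The larger gap is Stage~2, $\PPA_{p^k}\subseteq\PPA_p$, which the paper explicitly calls the crux of the theorem. You correctly diagnose the obstacle (no polytime, local ``divide the $p$-divisible neighborhood into $p$ bundles'' gadget from an implicit circuit) and propose to retreat to $\lonely$ or $\leaf$, but you give no construction there, and your proposed shape of argument (induction on $e$, reducing $p^e$ to $p^{e-1}$) is not what the paper does. The paper's reduction goes from $p^k$ to $p$ in a single step, and the mechanism is combinatorial rather than degree-division: letting $p^t$ be the largest power of $p$ dividing $|V^*|$, it takes $\overline{V}:=\binom{V}{p^t}$ with designated set $\overline{V}^*:=\binom{V^*}{p^t}$. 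Two Lucas-type facts make this work: $\binom{cp^t}{p^t}\not\equiv 0 \bmod p$ when $c\not\equiv 0 \bmod p$ (so $|\overline{V}^*|\not\equiv 0 \bmod p$), and $\binom{p^k}{i}\equiv 0 \bmod p$ for all $0<i<p^k$ (so for each original hyperedge $e$ and each $i\le p^t$ one can \emph{locally and canonically} partition $\binom{e}{i}$ into groups of size $p$, which is what makes the $p$-dimensional matching well-defined while keeping isolated vertices in correspondence). Without this mechanism, or a genuine replacement for it, your Stage~2 remains an unproved claim, and that is precisely the part carrying the weight of the theorem.
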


\noindent A special case of \autoref{thm:prime-characterization} is that
$\PPA_{p^k}=\PPA_p$ for every prime power $p^k$. Showing the inclusion
$\PPA_{p^k}\subseteq \PPA_p$ is the crux of our proof. This part of the theorem can be viewed as a total search problem
analog of the counting class result of Beigel and Gill~\cite{beigel92counting}
stating that $\mathsf{Mod}_{p^k}\P = \mathsf{Mod}_{p}\P$; ``an unexpected
result'', they wrote at the time. Throughout this paper, we use $q$ to denote any integer $\ge 2$ and $p$ to denote a prime integer.

\subsection{A Natural Complete Problem via Chevalley-Warning Theorem}
\label{sec:intro-complete}

There have been several works focusing on completeness results for the class $\PPA$ (i.e. $\PPA_2$). Initial works showed the
$\PPA$-completeness of (non-natural) total search problems corresponding to topological fixed point theorems
\cite{grigni2001sperner,aisenberg15tucker,deng16ppa}. Closer to our paper,
Belovs et al. \cite{belovs17nullstellensatz} show the $\PPA$-completeness of computational analogs of Combinatorial Nullstellensatz and the Chevalley--Warning Theorem, but which explicitly
involve a circuit as part of the input. More recently, breakthrough results
showed $\PPA$-completeness of problems without a circuit or a Turing Machine in
the input such as $\textsc{Consensus-Halving}$, $\textsc{Necklace-Splitting}$
and $\textsc{Ham-Sandwich}$
\cite{filosratsikas18consensus, filosratsikas19splitting} resolving an open
problem since the definition of $\PPA$ in \cite{papadimitriou94parity}.

  Our main contribution is to provide a natural complete problem for $\PPA_p$,
for every prime $p$; thereby also yielding a new complete problem for $\PPA$.
Our complete problem is an extension of the problem $\eChevalleyp$, defined by
Papadimitriou \cite{papadimitriou94parity}, which is a search problem
associated to the celebrated Chevalley-Warning Theorem. We first
present an abstract way to understand the proof of the Chevalley-Warning Theorem
that motivates the definition of our natural complete problem for
$\PPA_p$.

\subsubsection{Max-Degree Monic Monomials and Proof of Chevalley-Warning Theorem} \label{sec:maximum-degree}

  In 1935, Claude Chevalley \cite{chevalley35demonstration} resolved a
hypothesis stated by Emil Artin, that all finite fields are quasi-algebraically
closed. Later, Ewald Warning~\cite{warning36bemerkung}
proved a slight generalization of Chevalley's theorem. This
generalized statement is usually referred to as the Chevalley-Warning Theorem
(\CWT, for short). Despite its initial algebraic motivation, \CWT has found
profound applications in combinatorics and number theory as we discuss in
\secref{sec:intro-sis} (and \autoref{sec:applications}).

We now explain the statement of the Chevalley-Warning Theorem, starting with some notations. For any field $\bbF$ and any polynomial $f$ in a polynomial ring $\bbF[x_1, \ldots, x_n]$ we use $\deg(f)$ to represent the degree of $f$. We use $\vecx$ to succinctly denote the set of all variables $(x_1, \ldots, x_n)$ (the number of variables will always be $n$) and $\vecf$ to succinctly denote a system of polynomials $\vecf = (f_1, \ldots, f_m) \in \bbF[\vecx]^m$. We will often abuse notations to use $\bx$ to also denote assignments over $\bbF_p^n$. For instance, let $\calV_{\vecf} \coloneqq \set{\bx \in \bbF_p^n : f_i(\bx) = 0 \text{ for all } i \in [m]}$ be the set of all common roots of $\vecf$.

\theoremstyle{plain}
\newtheorem*{cwthm}{Chevalley-Warning Theorem}
\begin{cwthm}[\cite{chevalley35demonstration,warning36bemerkung}]
For any prime\footnote{While most of the results in this section
generalize to prime powers, we only consider prime fields for simplicity.} $p$ and polynomial system
$\vecf \in \bbF_p[\vecx]^m$ satisfying\vspace{-3.5mm}
\begin{equation}
\tag{CW Condition}
\sum_{i = 1}^m \deg(f_i) < n,\vspace*{-3mm}
\label{eq:CWcondition}
\end{equation}
it holds that $|\calV_{\vecf}| \equiv 0 \mod{p}$.
\end{cwthm}

\noindent Given a polynomial system $\vecf \in \bbF_p[\vecx]^m$,
the key idea in the proof of the Chevalley-Warning Theorem is the polynomial
\[ \CW_{\vecf}(\bx) \coloneqq \prod_{i = 1}^m \inparen{1 - f_i(\vecx)^{p - 1}} \mod{\set{x_i^p - x_i}_i}\,. \] Observe that $\CW_{\vecf}(\vecx) = 1$ if $\vecx \in \calV_{\vecf}$ and is $0$ otherwise. Thus, $|\calV_{\vecf}| \equiv \sum_{\vecx \in \bbF_p^n} \CW_{\vecf}(\bx) \mod{p}$.
The following definition informally describes a special type of monomial of
$\CW_{\vecf}$ that is of particular interest in the proof. For the precise definition, we
refer to \autoref{sec:chevalley}.

\begin{definition}[\textsc{Max-Degree Monic Monomials} (Informal)]
	\label{def:max-degreeMonomialInformal}
    Let $\vecf \in \bbF_p[\vecx]^m$.
    			A \textit{monic monomial} of $\CW_{\vecf}$ refers to a monic monomial obtained when symbolically expanding $\CW_{\vecf}$ as a sum of monic monomials. A monic monomial is said to be of \textit{max-degree} if it is
	$\prod_{j = 1}^n x_j^{p - 1}$.

\end{definition}

   In the above definition, it is important to consider the
\emph{symbolic expansion} of $\CW_{\vecf}$ and ignore any cancellation of coefficients that might occur. Observe that, although the
expansion of $\CW_{\vecf}$ is exponentially large in the description size of
$\vecf$, each monic monomial of $\CW_{\vecf}$ can be succinctly described as a combination of monic monomials of the polynomials $f_1, \ldots, f_m$. We formally
discuss this in \autoref{sec:chevalley}.

Using the definition of max-degree monic monomials, we state the main technical
lemma underlying the proof of \CWT (with proof in \autoref{sec:chevalley}).
\theoremstyle{plain}
\newtheorem*{cwthmprinciple}{Chevalley--Warning Lemma}
\begin{cwthmprinciple}
For any prime $p$ and $\vecf \in \bbF_p[\vecx]^m$,
\begin{equation}
	\tag{CW Lemma}
	|\calV_{\vecf}| ~\equiv~ (-1)^n \cdot  |\set{\mathrm{max\text{-}degree~monic~monomials~of~} \CW_{\vecf}}| \mod{p}
	\label{eq:mainPrinciple}
\end{equation}
\end{cwthmprinciple}

\noindent The Chevalley-Warning Theorem now follows by observing that if $\sum_{i = 1}^m \deg(f_i) < n$ then the number of max-degree monic monomials of $\CW_{\vecf}$ is zero. Hence, we get that $|\calV_{\vecf}| \equiv 0 \mod{p}$.

\subsubsection{Proofs of Cancellation}
\label{sec:proofsOfCancellation}

  From the proof sketch of CWT in the previous section, a slight generalization of CWT follows. In particular, $|\calV_{\vecf}| \equiv 0 \mod{p}$ if and only if
\begin{equation}
	\tag{Extended CW Condition}
	\inabs{\set{\mathrm{max\text{-}degree~monic~monomials~of~} \CW_{\vecf}}} ~\equiv~ 0 \mod{p},
		\label{eq:generalizedCondition}
\end{equation}
Thus, any condition on $\vecf$ that implies the \eqref{eq:generalizedCondition} can
 replace \eqref{eq:CWcondition} in the Chevalley-Warning
Theorem. Note that the \eqref{eq:generalizedCondition} is equivalent to all the max-degree monic monomials in $\CW_{\vecf}$ cancelling out.
Thus, we call any such condition on $\vecf$ that implies \eqref{eq:generalizedCondition} to be a
``\textit{proof of cancellation}'' for the system $\vecf$.

We can now reinterpret the result of Belovs et
al. \cite{belovs17nullstellensatz} in this framework of ``proof of cancellation'' conditions. In particular, \cite{belovs17nullstellensatz} considers the
case $p = 2$ and defines the problem $\textsc{PPA-Circuit-Chevalley}$, in which a ``proof of
cancellation'' is given in a specific form of circuits. These circuits describe the
system $(f_1, \ldots, f_m)$ in the $\textsc{PPA-Circuit-Chevalley}$ problem. It is then shown that $\textsc{PPA-Circuit-Chevalley}$ is $\PPA_2$-complete.

\subsubsection{Computational Problems Based on Chevalley-Warning Theorem}
\label{sec:computationalChevalleyWarning}

  Every ``proof of cancellation'' that is \textit{syntactically refutable} can be used to define
  a total search problem that lies in $\PPA_p$. By {\em syntactically refutable} we mean that whenever the
  ``proof of cancellation'' is false, there exists a small witness that certifies so. In this
  section, we define three computational problems with their corresponding ``proof of
  cancellation'': (1) the $\eChevalleyp$ problem defined by \cite{papadimitriou94parity}, (2) the
  $\lChevalleyp$ problem that is a generalization of $\eChevalleyp$, and (3) the problem
  $\cChevalleyp$ that we show to be $\PPA_p$-complete. All these problems are
  defined for every prime modulus $p$ and are natural in the sense that they do not explicitly
  involve a circuit or a Turing Machine in their input. In particular, the polynomial systems in the input are {\em explicit} in that they are given as a sum of monic monomials. \bigskip

\paragraph{Chevalley.}
  This is the direct computational analog of the Chevalley-Warning Theorem and was
defined by Papadimitriou \cite{papadimitriou94parity} as the following total
search problem:
\medskip

\noindent $\underbar{\eChevalley}_{\pnum}$

\noindent Given an explicit polynomial system $\vecf \in \bbF_p[\vecx]^m$, and an
$\vecx^{\star} \in \calV_{\vecf}$, output one of the following:\vspace*{-0.8mm}
\begin{itemize}
	\item[$\triangleright$] [{\em Refuting witness}] \eqref{eq:CWcondition} is not satisfied.
	\item[$\triangleright$] $\vecx \in \calV_{\vecf} \smallsetminus \set{\vecx^{\star}}$.
\end{itemize}

\noindent We will particularly consider a special case where all the $f_i$'s have zero constant term
({\em zecote}, for short). In this case, $\vecx^{\star} = \mathbf{0} \in \calV_{\vecf}$, so there is no need to explicitly include $\vecx^*$ in the input.

\paragraph{General Chevalley.} As mentioned already, we can define a search problem corresponding to any syntactically refutable condition that implies the
\eqref{eq:generalizedCondition}. One such condition is to directly assert that\vspace{-2.5mm}
\begin{equation}
	\tag{General CW Condition}
	\{\mathrm{max\text{-}degree~monic~monomials~of~} \CW_{\vecf}\} ~=~ \emptyset. \label{eq:generalCWCondition}
\end{equation}
In particular, note that \eqref{eq:CWcondition} implies this condition. Moreover, this condition is syntactically refutable by a max-degree monic monomial, which is efficiently representable as a combination of at most $m(p-1)$ monomials of the $f_i$'s. Thus, we can define the following total search problem generalizing $\eChevalleyp$.\medskip

\noindent $\underbar{\lChevalley}_{\pnum}$

\noindent Given an explicit polynomial system $\vecf \in \bbF_p[\vecx]^m$ and an
$\vecx^{\star} \in \calV_{\vecf}$, output one of the following:\vspace{-0.5mm}
\begin{itemize}
	\item[$\triangleright$] [{\em Refuting Witness}] A max-degree monic monomial of $\CW_{\vecf}$.
	\item[$\triangleright$] $\vecx \in \calV_{\vecf} \smallsetminus \set{\vecx^{\star}}$.
\end{itemize}
\medskip

\noindent While $\lChevalleyp$ generalizes $\eChevalleyp$, it does not
capture the full generality of \eqref{eq:generalizedCondition}. However \eqref{eq:generalizedCondition} is not syntactically refutable (in fact, it is $\mathsf{Mod}_p\mathsf{P}$--complete to decide\footnote{Circuit-SAT can be encoded as satisfiability of a polynomial system $\vecf \in \bbF_p[\vecx]^m$ by including a polynomial for each gate along with $\set{x_i^2 - x_i = 0}$ to ensure Booleanity. Thus, number of satisfiable assignments to the Circuit-SAT is $\equiv |\calV_{\vecf}| \mod{p}$, which is $0 \mod{p}$ iff the final coefficient of the max-degree monomial is $0$.} if the final coefficient of the max-degree monomial is $0$).

A natural question then is whether $\lChevalley_p$, or even $\Chevalley_p$, could already be $\PPA_p$--complete. We believe this to be unlikely because \eqref{eq:generalCWCondition} seems to fail in capturing other simple conditions that are syntactically refutable and yet imply \eqref{eq:generalizedCondition}. Namely, consider a permutation $\sigma \in S_n$ of the variables $x_1, \ldots, x_n$ of order $p$ (i.e. $\sigma^p$ is the identity permutation). Suppose that for every $\vecx \in \overline{\calV_{\vecf}}$, it holds that $\sigma(\vecx) \in \overline{\calV_{\vecf}} \smallsetminus \set{\vecx}$; in other words $\vecx, \sigma(\vecx), \sigma^2(\vecx), \ldots, \sigma^{p-1}(\vecx)$ are all distinct and in $\overline{\calV_{\vecf}}$ (where, $\sigma(\vecx)$ denotes the assignment obtained by permutating the variables of the assignment $\vecx$ according to $\sigma$); observe that this condition is syntactically refutable. This implies that the elements of $\overline{\calV_{\vecf}}$ can be partitioned into groups of size $p$ (given by the orbits of the action $\sigma$) and hence $|\overline{\calV_{\vecf}}| \equiv 0 \mod{p}$. Thus, such a $\sigma$ provides a syntactically refutable proof that $|\calV_{\vecf}| \equiv 0 \mod{p}$ and hence that \eqref{eq:generalizedCondition} hold.

Hence, we further generalize 
$\lChevalleyp$ into a problem that incorporates this additional ``proof of cancellation'' in the form of a permutation $\sigma \in S_n$.

\paragraph{Chevalley with Symmetry.}
We consider a union of two polynomial systems $\vecg  \in \bbF_p[\vecx]^{m_g}$ and $\vech \in \bbF_p[\vecx]^{m_h}$. Even if both $\vecg$ and
$\vech$ satisfy \eqref{eq:CWcondition}, the combined system $\vecf :=
(g_1, \ldots, g_{m_g}, h_1, \ldots, h_{m_h})$ might not satisfy \eqref{eq:CWcondition}
and it might even be the case that $|\calV_{\vecf}|$ is not a multiple of $p$. Thus, we need to bring in some additional conditions.

We start by observing that since $|\calV_{\vecf}| + |\overline{\calV_{\vecf}}| = p^n$, it holds that $|\calV_{\vecf}| \equiv 0 \mod{p}$ if and only if $|\overline{\calV_{\vecf}}| \equiv 0 \mod{p}$. Also note that, $|\overline{\calV_{\vecf}}| = |\overline{\calV_{\vecg}}| + |(\calV_{\vecg} \cap \overline{\calV_{\vech}})|$.

If $\vecg$ satisfies the \eqref{eq:generalCWCondition} then we have that $|\calV_{\vecg}| \equiv |\overline{\calV_{\vecg}}| \equiv 0 \mod{p}$. A simple way to enforce that $|\calV_{\vecg} \cap \overline{\calV_{\vech}}| \equiv 0 \mod{p}$ is to enforce a ``symmetry'', namely that its elements can be grouped into groups of size $p$ each. We impose this grouping with a permutation $\sigma \in S_n$ of the variables $x_1, \ldots, x_n$ of order $p$ such that for any $\vecx \in \calV_{\vecg} \cap \overline{\calV_{\vech}}$, it holds that $\sigma(\vecx) \in (\calV_{\vecg} \cap \overline{\calV_{\vech}}) \smallsetminus \set{\vecx}$; or in other words that $\vecx, \sigma(\vecx), \sigma^2(\vecx), \ldots, \sigma^{p-1}(\vecx)$ are all distinct and contained in $\calV_{\vecg} \cap \overline{\calV_{\vech}}$.

We now define the
following natural total search problem.

\bigskip

\noindent $\underbar{\cChevalley}_{\pnum}$

\noindent Given two explicit polynomial systems $\vecg \in \bbF_p[\vecx]^{m_g}$ and $\vech \in \bbF_p[\vecx]^{m_h}$, and an $\vecx^{\star} \in \calV_{\vecf}$ (where $\vecf \coloneqq (\vecg, \vech)$) and a
permutation $\sigma \in S_n$ of order $p$, output one of the following:\vspace{-0.5mm}
\begin{itemize}
	\item[$\triangleright$] [{\em Refuting Witness} -- 1] A max-degree monic monomial of $\CW_{\vecg}$.
  \item[$\triangleright$] [{\em Refuting Witness} -- 2] $\vecx \in \calV_{\vecg} \cap \overline{\calV_{\vech}}$ such that $\sigma(\vecx) \notin (\calV_{\vecg} \cap \overline{\calV_{\vech}}) \smallsetminus \set{\vecx}$.
	\item[$\triangleright$] $\vecx \in \calV_{\vecf} \smallsetminus \set{\vecx^{\star}}$.
\end{itemize}
\medskip

\noindent The above problem is natural, because the input consists of a system of polynomial in an
explicit form, i.e. as a sum of monic monomials, together with a permutation in $S_n$ given say in
one-line notation. Also, observe that when $\vech$ is empty, the above problem coincides with $\lChevalley_p$ (since $\overline{\calV_{\vech}} = \emptyset$ when $\vech$ is empty). Our main result is the following (proved in \autoref{sec:chevalley}).

\begin{restatable}{theorem}{completeness}
\label{thm:combinedChevalleyPPApCompleteness}
For any prime $p$, $\cChevalleyp$ is
$\PPA_p$-complete.
\end{restatable}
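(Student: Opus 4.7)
I plan to prove the theorem in two directions: (i) membership $\cChevalleyp \in \PPA_p$ and (ii) $\PPA_p$-hardness of $\cChevalleyp$. Each direction will be a polynomial-time many-one reduction, and correctness will be verified by separately accounting for the three classes of valid $\cChevalleyp$ solutions (an element of $\calV_\vecf \setminus \set{\vecx^\star}$, a refuting max-degree monic monomial of $\CW_\vecg$, or a $\sigma$-orbit break inside $\calV_\vecg \cap \overline{\calV_\vech}$).

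For the membership direction, my plan is to reduce $\cChevalleyp$ to $\bipartite_p$. Given an instance $(\vecg,\vech,\vecx^\star,\sigma)$, I will build a bipartite graph $G$ whose left side contains a vertex $v_\vecx$ for every $\vecx \in \bbF_p^n$ and whose right side contains one vertex $m_T$ for each short bit-string description of a ``monomial trace'' obtained when expanding $\CW_\vecg = \prod_i(1 - g_i^{p-1})$ as a sum of monic monomials of the $g_i$'s. Edges will be designed so that, using the identity behind the Chevalley--Warning Lemma, every $v_\vecx$ has degree $0 \pmod{p}$ unless $\vecx$ encodes a solution (either an element of $\calV_\vecf$ or a $\sigma$-orbit break in $\calV_\vecg \cap \overline{\calV_\vech}$), and every $m_T$ has degree $0 \pmod{p}$ unless $T$ is a max-degree monic monomial; an additional small gadget forces $v_{\vecx^\star}$ to have degree $1 \pmod{p}$, guaranteeing the existence of another offender that the $\bipartite_p$ oracle must return. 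Decoding that offender back yields a $\cChevalleyp$ solution of one of the three types.

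For hardness, I would reduce from $\bipartite_p$ itself. Given an input circuit $C$ describing the neighborhood of each vertex of a balanced bipartite graph on $\set{0,1}^n$ together with a designated vertex $v^\star$, I introduce one $\bbF_p$-variable per wire of $C$, add Booleanity constraints $x_i^2 - x_i = 0$, and add one degree-$\le 2$ polynomial per gate so that $\calV_\vecf$ is in bijection with the ``locally consistent'' labelings of $C$ at a chosen pair of vertices. I then split the system as $(\vecg,\vech)$, padding $\vecg$ with enough dummy variables so that it satisfies \eqref{eq:CWcondition} (and hence \eqref{eq:generalCWCondition}), and choosing $\sigma$ to cyclically rotate $p$ disjoint copies of the wire variables, so that every intermediate assignment in $\calV_\vecg \cap \overline{\calV_\vech}$ automatically lives in an authentic $\sigma$-orbit of size $p$. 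Hard-coding $\vecx^\star$ so it encodes $v^\star$, solutions of $\cChevalleyp$ on $(\vecg,\vech,\vecx^\star,\sigma)$ pull back to solutions of $\bipartite_p$ on $(C,v^\star)$; the padding of $\vecg$ and the cyclic design of $\sigma$ rule out the two kinds of spurious refuting witnesses.

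The key obstacle will be the hardness direction when $p \ge 3$. For $p = 2$, Belovs et al.\ obtain $\PPA$-completeness using only the degree condition \eqref{eq:CWcondition}, which is already strong enough to encode an arbitrary parity argument. For $p \ge 3$ this degree condition is far too weak on its own to encode an arbitrary $\bipartite_p$ instance, and the new ingredient is precisely the permutation $\sigma$. The crux of the proof will be constructing the $p$-fold copying of the wire variables together with a $\sigma$ of exact order $p$ such that (a) genuine non-solutions of the encoded graph fall into $\sigma$-orbits of size $p$, (b) graph solutions $v \neq v^\star$ avoid this orbit structure in a controlled way, and (c) the choice of $\vecx^\star$ pins the reduction to the correct designated vertex. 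Making these three alignments work simultaneously is exactly what the added permutation in $\cChevalleyp$ provides over $\lChevalleyp$, and is what I expect to drive the completeness proof.
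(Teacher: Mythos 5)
Your high-level plan --- membership via a bipartite graph over assignments and monomials, hardness via circuit-to-polynomial gadgets together with a cyclic permutation of order $p$ --- runs parallel to the paper's, but there are two concrete gaps, one in each direction.

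For membership, reducing to $\bipartite_p$ directly does not compile, because the assignment vertices $v_{\vecx}$ have exponential degree: there is an edge to $m_T$ whenever the monic monomial $t_T(\vecx) \ne 0$, and the number of monic monomials of $\CW_{\vecg}$ is $\prod_{i} L_i$ where $L_i$ counts the monomials of $1 - g_i^{p-1}$, which grows exponentially in $m_g$. A $\bipartite_p$ circuit must output a polynomial-length neighbor list and therefore cannot describe this graph. The paper resolves this by introducing $\sucBipartite_p$, where the circuit only reports the multiplicity of the edge between a given pair of vertices, and two ``grouping'' circuits $\phi_V, \phi_U$ certify, locally for each edge, a partition of the (possibly exponentially many) edges at each endpoint into groups of size $p$. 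Your proof needs a comparable device; the chessplayer-style groupings are what make the membership argument go through, and defining and establishing completeness of $\sucBipartite_p$ is a nontrivial extra step.

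For hardness, ``padding $\vecg$ with enough dummy variables so that it satisfies \eqref{eq:CWcondition}'' is not sound. A circuit of size $s$ yields total degree roughly $2s$ over roughly $n+s$ variables, so for $s$ larger than $n$ you need on the order of $s$ free dummies, and then \emph{every} assignment of those dummies lies in $\calV_{\vecf}$, producing exponentially many spurious elements of $\calV_{\vecf}\setminus\set{\vecx^{\star}}$ that do not decode to a $\bipartite_p$ solution. The paper avoids padding entirely by proving a \emph{Labeled Chevalley--Warning Theorem} (\autoref{thm:labeledChevalleyWarningTheorem}): label the unique degree-$1$ output monomial of each gate-polynomial with $-1$ and the remaining monomials with $+1$; the labeling graph then inherits the circuit's DAG structure, and the labeled degree sum $\sum_i \deg^{\lambda}(g_i) = m_g$ is strictly below the number of variables, which already forces $\calM_{\vecg} = \emptyset$ even though the literal CW condition fails. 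This labeling lemma is the crux of the hardness argument and is missing from your proposal; without it (or an equivalent) there is no way to simultaneously rule out type-0(a) refuting witnesses and keep $\calV_{\vecf}$ small enough to decode.
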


\subsection{Complete Problems via Small Depth Arithmetic Formulas} \label{sec:intro-depth}

While the $\cChevalleyp$ problem may seem somewhat contrived, the importance of its $\PPAp$-completeness is illustrated by our next result (proved in \autoref{sec:simplification}) showing that we can reformulate
any of the proposed definitions of $\PPAp$, by restricting the circuit in
the input to be just constant depth arithmetic formulas with gates
$\times \mod{p}$ and $+ \mod{p}$ (we call this class $\ACz_{\F_p}$). This result is analogous to the
$\NP$-completeness of $\textsc{SAT}$ which basically shows that
$\textsc{CircuitSAT}$ remains $\NP$-complete even if we restrict the input
circuit to be a (CNF) formula of depth $2$.

\begin{theorem}\label{thm:simplification}
	$\lonely_p$/$\bipartite_p$/$\leaf_p$ with $\ACz_{\F_p}$ input circuits are $\PPAp$--complete.
\end{theorem}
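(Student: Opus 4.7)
The proof has two directions. Containment in $\PPA_p$ is immediate: any $\ACz_{\F_p}$ formula unfolds into a polynomial-size Boolean circuit, so the restricted problem is a special case of the general (circuit-input) formulation that defines $\PPA_p$. The substantive direction is hardness, and the plan is to reduce $\cChevalleyp$, which is $\PPA_p$-complete by \autoref{thm:combinedChevalleyPPApCompleteness}, to $\bipartite_p$ with $\ACz_{\F_p}$ neighbor circuits, and then to transfer the hardness to $\lonely_p$ and $\leaf_p$ via the standard equivalences of \autoref{sec:definitions}, after checking that those gadget reductions use only bounded-fan-in local rewiring and therefore do not blow up formula depth.

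The key observation driving the reduction is that every atomic operation needed to certify a solution of $\cChevalleyp$ is natively $\ACz_{\F_p}$. An input polynomial $g_i$ or $h_j$ is given as a sum of at most polynomially many monic monomials of bounded degree; evaluating it on $\vecx$ is a depth-$2$ $\F_p$-arithmetic formula (products inside, a single sum outside). Applying the permutation $\sigma$ is free -- it is a reindexing of inputs at depth $0$. Testing $f(\vecx)=0$ is implementable by computing $1-f(\vecx)^{p-1}$, which is constant depth because $p$ is a fixed constant. Thus the neighborhood function of any graph whose adjacency is defined by polynomial evaluations, $\sigma$-applications, and such zero-tests lives in $\ACz_{\F_p}$.

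Concretely, I would unfold the $\PPA_p$-membership proof of $\cChevalleyp$ as a direct construction of a balanced bipartite graph $G$ on $\bit^N$ whose vertices encode (a) assignments $\vecx \in \bbF_p^n$, (b) gadgets pairing up $\sigma$-orbits inside $\calV_{\vecg} \cap \overline{\calV_{\vech}}$, and (c) the cancellation pairing of non-max-degree monic monomials of $\CW_{\vecg}$. The edges are arranged so that the degree of every vertex is $\equiv 0 \pmod{p}$ except at vertices corresponding to (i) $\vecx^{\star}$, (ii) some $\vecx \in \calV_{\vecf} \smallsetminus \set{\vecx^{\star}}$, (iii) an $\vecx$ violating the $\sigma$-symmetry, or (iv) a max-degree monic monomial of $\CW_{\vecg}$, which are precisely the three solution types of $\cChevalleyp$. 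Each neighborhood then decomposes into a bounded number of polynomial evaluations, a constant number of $\sigma$-applications, and constant-depth $\F_p$-arithmetic, assembled into an $\ACz_{\F_p}$ formula.

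The main obstacle is the bookkeeping at each vertex: representing $\bbF_p$ elements as $O(\log p)=O(1)$-bit strings, validating such encodings, and translating between binary representations and $\bbF_p$ values must all be done at constant depth. This is where the constancy of $p$ is essential -- every ``if/then'' dispatch among the types (a)--(c) and every local branching inside a gadget depends only on a bounded number of $\bbF_p$ elements, so it can be expressed by a constant-size lookup encoded arithmetically. Once this encoding infrastructure is in place, the rest of the construction propagates depth additively by a constant, and the standard template reductions between $\bipartite_p$, $\lonely_p$, and $\leaf_p$ preserve the $\ACz_{\F_p}$ depth bound, yielding the theorem for all three formulations.
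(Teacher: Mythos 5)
Your high-level intuition that ``every atomic operation needed to certify a solution of $\cChevalleyp$ is natively $\ACz_{\F_p}$'' is correct and matches the spirit of the paper's argument, but there is a concrete gap in the middle of your plan: you propose to ``unfold the $\PPA_p$-membership proof of $\cChevalleyp$ as a direct construction of a balanced bipartite graph $G$ on $\bit^N$'' and hand it to $\bipartite_p$. That cannot work as stated, because the graph produced by the membership proof of $\cChevalleyp$ (\autoref{lem:combinedChevalleyPPAp}) has \emph{exponential} degree: an assignment vertex $\vecx \in U$ is joined (with multiplicities) to every monomial appearing in the symbolic expansion of $\CW_{\vecg} = \prod_i (1 - g_i^{p-1})$, and there are exponentially many such monomials in the input size. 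The $\bipartite_p$ format requires the neighbor circuit to output all $k$ neighbors of a queried vertex, so the degree must be polynomially bounded. No amount of $\ACz_{\F_p}$ cleverness fixes this; the target problem is simply the wrong one.

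The paper's route avoids exactly this obstruction by passing through the succinct variant $\sucBipartite_p$ (\autoref{def:succ-bipartite}), in which one supplies an \emph{edge-counting} circuit $\calC(v,u)$ together with two local \emph{grouping} functions $\phi_V,\phi_U$ instead of a neighbor-listing circuit; this is what lets a vertex have exponential degree while remaining verifiable. \autoref{thm:simplificationPPAp} proves $\sucBipartite_p[\ACzp]$ is $\PPAp$-complete, by (i) exploiting that the $\cChevalleyp$ instances arising from the hardness reduction out of $\lonelyp$ have polynomials of degree at most $2$ with at most $3$ monomials and $3$ variables (so $\CW_{g_i}$ has $O(1)$ monomials and constant-size lookup tables suffice), and (ii) building $\calC$, $\phi_V$, $\phi_U$ in $\ACzp$ via selectors and constant-size arithmetic lookups. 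Only afterwards does the paper deduce \autoref{cor:simplificationPPAp} for $\lonelyp$, $\leafp$, $\bipartitep$, by checking that the reductions in \autoref{sec:equivalences} between $\sucBipartite_p$ and those problems (via $\twoMatchings_p$) are themselves local and $\ACz$-implementable. Your proposal would be repaired by replacing the target with $\sucBipartite_p$, spelling out the $\ACzp$ implementations of $\calC$ and the grouping functions, and then verifying depth preservation along the chain $\sucBipartite_p \to \twoMatchings_p \to \lonelyp \to \leafp/\bipartitep$ rather than only the $\bipartitep \leftrightarrow \lonelyp \leftrightarrow \leafp$ triangle.
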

\noindent We hope that this theorem will be helpful in the context of proving $\PPAp$-hardness of other problems.
There it would be enough to consider only constant depth arithmetic formulas (and hence $\mathsf{NC}^{1}$ Boolean formulas) in the
definitions of $\PPAp$ as opposed to unbounded depth circuits. Such a
simplification has been a key-step for proving hardness results for other
$\TFNP$ subclasses, e.g. in the $\PPAD$-hardness proofs of $\textsc{approximate}$-$\textsc{Nash}$
(cf. \cite{rubinstein16}).

\subsection{Applications of Chevalley-Warning}
\label{sec:intro-sis}

  Apart from its initial algebraic motivation, the Chevalley-Warning
theorem has been used to derive several non-trivial combinatorial results. Alon
et al. \cite{alon1984regular} show that adding an extra edge to any $4$-regular graph forces it to contain a 3-regular subgraph. More generally, they prove that
certain types of ``almost'' regular graphs contain regular subgraphs. Another
application of \CWT is in proving \textit{zero-sum theorems}
similar to the Erd\"os-Ginzburg-Ziv Theorem. A famous such application
is the proof of Kemnitz's conjecture by Reiher \cite{Reiher07}.

We define two computational problems that we show are reducible to $\Chevalley_p$ and suffice for
proving most of the combinatorial applications of the Chevalley-Warning Theorem mentioned above (for a certain range of parameters $n$ and $m$). Both involve finding solutions to a system of linear equations modulo $q$, given as $\bA \vecx \equiv \vec{0} \mod{q}$ for $\bA \in \bbZ^{m \times n}$.
  \begin{itemize}
  	\item[$\triangleright$] $\bis_q$: Find $\vecx \in \bit^n$ satisfying $\vecx \ne \vec{0}$ and $\bA \vecx \equiv \vec{0} \mod{q}$.
    \smallskip
    \item[$\triangleright$] $\sis_q$: Find $\vecx \in \set{-1, 0, 1}^n$ satisfying $\vecx \ne \vec{0}$ and $\bA \vecx \equiv \vec{0} \mod{q}$.
  \end{itemize}
  \noindent The second problem is a special case of the well-known {\em short integer solution} problem in
  $\ell_\infty$ norm. Note that, when $n > m \cdot \log_2 q$, the totality of $\sis_q$ is guaranteed by pigeonhole principle;
  that is, $\sis_q$ is in $\PPP$ in this range of parameters. We are interested in identifying the range
  of parameters that places this problem in $\PPA_q$ --- see \hyperref[def:bis]{Definitions \ref{def:bis}} and \ref{def:sis} for the precise range of parameters $n$ and $m$ that we consider. In \autoref{thm:sis/bisPPAq}, we prove a formal version of the following:

\begin{ntheorem}[Informal] ~~~
\label{nthm:sis}
  For a certain range of parameters $n, m$, it holds that
  \begin{enumerate}
    \item For all primes $p$ : $\bis_p$ and $\sis_p$ are Karp-reducible to $\eChevalleyp$, hence are in $\PPA_p$. \\[-10pt]
    \item For all $q$ :  $\bis_q$ and $\sis_q$ are Turing-reducible to any $\PPA_q$--complete problem. \\[-10pt]
    \item For all $k$ : $\bis_{2^k}$ is solvable in  polynomial time. \\[-10pt]
    \item For $k$ and $\ell$ : $\sis_{2^k3^{\ell}}$ is solvable in  polynomial time. \\[-10pt]
  \end{enumerate}
\end{ntheorem}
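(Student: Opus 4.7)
For part (1), the plan is to exploit the polynomial indicator $y \mapsto y^{p-1}$, which by Fermat's little theorem takes values in $\{0,1\} \subseteq \bbF_p$. Given $\bA \in \bbZ^{m\times n}$, I would reduce $\bis_p$ to $\eChevalleyp$ by forming the polynomial system $f_i(\vecy) := \sum_{j=1}^n a_{ij}\, y_j^{p-1} \in \bbF_p[\vecy]$, one per row $i \in [m]$. Any common root $\vecy^\star \in \bbF_p^n$ produces $x_j^\star := (y_j^\star)^{p-1} \in \{0,1\}$ satisfying $\bA\vecx^\star \equiv \vec{0} \mod{p}$, with $\vecx^\star \neq \vec{0}$ whenever $\vecy^\star \neq \vec{0}$. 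Since $\vec{0}$ is a trivial common root, any nontrivial root returned by $\eChevalleyp$ yields a $\bis_p$ solution; the CW condition $\sum_i \deg(f_i) = m(p-1) < n$ defines the admissible parameter regime. For $\sis_p$ with $p \ge 3$, I would double the variable set and encode $x_j^\star := (y_j^\star)^{p-1} - (z_j^\star)^{p-1} \in \{-1,0,1\}$, keeping total degree $m(p-1)$ over $2n$ variables. The delicate point is that a nontrivial root $(\vecy^\star,\vecz^\star)$ may still collapse to $\vecx^\star = \vec{0}$, which I would handle by a small padding trick forcing some coordinate to be nonzero. The case $p = 2$ is immediate since $\bis_2 = \sis_2$.

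For part (2), I would combine part (1) with the characterization $\PPA_q = \amp_{p \mid q} \PPA_p$ from \autoref{thm:prime-characterization}. By the Chinese remainder theorem, the congruence $\bA\vecx \equiv \vec{0} \mod{q}$ splits into independent congruences modulo each prime power $p^{k_p}$ dividing $q$, and the collapse $\PPA_{p^{k_p}} = \PPA_p$ (also from \autoref{thm:prime-characterization}) lets each component be answered by a $\PPA_p$ oracle accessed through the \amp-operator from any single $\PPA_q$-complete oracle. The main obstacle is that the different prime-power oracles will generally return incompatible candidate vectors, so the Turing reduction must iteratively refine one candidate $\vecx$ by making repeated oracle queries on modified instances until all prime-power congruences hold simultaneously.

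For part (3), I would give a polynomial-time algorithm for $\bis_{2^k}$ by iteratively lifting a mod-$2$ solution. The base case $k=1$ is Gaussian elimination over $\bbF_2$: the kernel of $\bA \mod{2}$ has an $\bbF_2$-basis automatically in $\{0,1\}^n$, and any nonzero basis element solves $\bis_2$. For the lift from $k-1$ to $k$, given $\vecx \in \{0,1\}^n$ with $\bA\vecx \equiv \vec{0} \mod{2^{k-1}}$, write $\bA\vecx = 2^{k-1}\vecc$ and search for a bit-flip pattern $\vecd \in \{0,1\}^n$ in the $\bbF_2$-kernel of $\bA$ such that $\vecx \oplus \vecd$ kills $\vecc \mod{2}$. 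Using $\bA(\vecx \oplus \vecd) = \bA\vecx + \bA\vecd - 2\bA\diag(\vecx)\vecd$, the condition $\bA(\vecx\oplus\vecd) \equiv \vec{0} \mod{2^k}$ reduces to another linear-algebra problem over $\bbF_2$ and can be solved by Gaussian elimination. The main obstacle is that the XOR operation is not $\bbF_2$-linear modulo $2^k$ (carries depend on $\vecx$), so the lifting step must carefully track these carries and rely on the slack in the parameter range to guarantee solvability.

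For part (4), I would run the algorithm of part (3) in parallel with a 3-adic analogue: Gaussian elimination over $\bbF_3$ as the base case (naturally producing kernel vectors in $\bbF_3^n = \{-1,0,1\}^n$) and an analogous lift modulo $3^\ell$ where, because $\sis$ permits $-1$, perturbations may live in $\{-1,0,1\}^n$ rather than $\{0,1\}^n$. The two adic streams must be coordinated via CRT so that a single $\vecx \in \{-1,0,1\}^n$ satisfies the system both $\mod{2^k}$ and $\mod{3^\ell}$; this coordination is the main obstacle. A natural approach is to first solve the instance modulo $2^k$, then exploit the integer kernel of the resulting $\vecx$ to adjust its residue modulo $3^\ell$ while staying within $\{-1,0,1\}^n$, using the slack in the parameter range to absorb the adjustments.
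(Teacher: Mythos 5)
Your Part~(1) reduction of $\bis_p$ to $\eChevalleyp$ via $f_i(\vecx)=\sum_j a_{ij}x_j^{p-1}$ is exactly the paper's, but for $\sis_p$ your variable-doubling approach has an unresolved collapse: a nontrivial common root $(\vecy^\star,\vecz^\star)$ with $y_j^{p-1}=z_j^{p-1}$ for all $j$ (a large set of such pairs) yields $\vecx^\star=\vec 0$, and you gesture at a ``padding trick'' without supplying one. The paper sidesteps this cleanly by using the quadratic-residue indicator $x_j\mapsto x_j^{(p-1)/2}$, which maps $\bbF_p$ surjectively onto $\{-1,0,1\}$ with $0$ only at $0$, so a nontrivial root immediately produces a nontrivial short vector without any doubling.

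For Parts~(2)--(4) you are missing the paper's central device, and what you substitute does not work. You propose to split $q=\prod p^{k_p}$ via CRT and then ``iteratively refine'' a candidate until all prime-power congruences hold; but each oracle returns its own binary (or $\{-1,0,1\}$) vector, and there is no CRT-style recombination that returns to the $\bit^n$ cube --- you name this obstacle yourself but never resolve it, and I do not see how iterated re-querying terminates. Similarly, your Hensel-lift for $\bis_{2^k}$ must handle carries induced by XOR, and your claim that the parameter slack absorbs them is not substantiated. The paper's mechanism is a single, self-contained column-splitting reduction $\bis_{q_1 q_2}\reducible\bis_{q_1}\bamp\bis_{q_2}$: partition the columns of $\bA$ into $n_1$ blocks $\bA_1,\ldots,\bA_{n_1}$, obtain short solutions $\vecy_i$ with $\bA_i\vecy_i\equiv\vec 0\ (\mathrm{mod}\ q_2)$, form the matrix $\bB$ whose $i$-th column is $\bA_i\vecy_i/q_2$, obtain a short solution $\vecz$ with $\bB\vecz\equiv\vec 0\ (\mathrm{mod}\ q_1)$, and output the ``tensor'' vector $\vecx=(z_1\vecy_1,\ldots,z_{n_1}\vecy_{n_1})$, which is automatically short since each coordinate is a product $z_i\cdot(\vecy_i)_j$. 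This single lemma, instantiated recursively with the polynomial-time base cases $\bis_2$, $\sis_2$, $\sis_3$ (Gaussian elimination, as you correctly identify), simultaneously yields Parts~(2), (3) and~(4). Without that multiplicative decomposition your proposal for these three parts has a genuine gap.
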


\noindent Even though the $\sis_q$ problem is well-studied in lattice theory, not many
results are known in the regime where $q$ is a constant and the number of
variables depends linearly on the number of equations. Part (1) of the
above theorem establishes a reduction from $\sis_\pnum$ to
$\eChevalley_\pnum$ for prime $p$.
Part (2) follows by a bootstrapping method that allows us to combine algorithms for $\sis_{q_1}$ and $\sis_{q_2}$ to give an algorithm for $\sis_{q_1q_2}$ (for a certain regime for parameters $n$ and $m$).
Finally Parts (3) and (4) results follow by using this bootstrapping method along with the observation that Gaussian elimination provides valid solutions for $\bis_2$ (hence also $\sis_2$) and for $\sis_3$.

\subsection{Structural properties} \label{sec:intro-struc}

\paragraph{Relation to other classes.}
Buss and Johnson \cite{buss12propositional,johnson11reductions} had defined a class $\PMOD_q$ which turns out to be slightly weaker than $\PPA_q$ (refer to \autoref{sec:structural}). Despite this slight difference between the definitions of $\PPA_q$ and $\PMOD_q$, we can still deduce statements about $\PPA_q$ from the work of \cite{johnson11reductions}. In particular, it follows that $\PPAD \subseteq \PPA_q$ (refer to \autoref{sec:ppad-inclusion}).

More broadly, a near-complete picture of the power of $\PPA_q$ relative to other subclasses of $\TFNP$ is summarized in \autoref{fig:classes}. These relationships (inclusions and oracle separations) mostly follow from prior work in proof complexity~\cite{beame98more,buss01linear,johnson11reductions,goos19adventures} (refer to \autoref{sec:oracle-sep}).

\begin{figure}[tp]
\centering
\begin{tikzpicture}
\tikzset{
	inner sep=0,outer sep=3,
	mydash/.style={dashed,color=white!30!black},
	new/.style={line width=1pt,color=myGold},
	johnson/.style={line width=1pt,color=myPurple}}

\begin{scope}[scale=0.7]
\node (FP) at (0,0) {$\FP$};
\node (CLS) at (0,1.5) {$\CLS$};
\node (PPAD) at (-2.25,3) {$\PPAD$};
\node (PPADS) at (-4.5,5) {$\PPADS$};
\node (PLS) at (4.5,7.5) {$\PLS$};
\node (PPP) at (-4.5,7.5) {$\PPP$};
\node (PPA) at (-1.5,7.5) {$\PPA$};
\node at (0,7.5) {$\cdots$};
\node (PPAp) at (1.5,7.5) {$\PPA_p$};
\node (capPPAp) at (-0.25, 5) {$\bigcap_p \PPA_p$};
\node (TFNP) at (0,10) {$\TFNP$};
\end{scope}

\path[-{Stealth[length=5pt]},line width=.4pt,black]
(FP) edge (CLS)
(CLS) edge (PPAD)
(CLS) edge[bend right=15] (PLS)
(capPPAp) edge (PPA)
(capPPAp) edge (PPAp)
(PPAD) edge (PPADS)
(PPADS) edge (PPP)
(PLS) edge (TFNP)
(PPP) edge (TFNP)
(PPA) edge (TFNP)
(PPAp) edge (TFNP)
(CLS) edge[mydash,bend right=28] (FP)
(PPAD) edge[mydash,bend right=15] (PLS)
(PPADS) edge[mydash,bend right=10] (PPA)
(PPP) edge[mydash,bend right=28] (PPADS)
(PLS) edge[mydash,bend left=27] (PPA)
(PPA) edge[mydash,bend left=0] (PPP)
(PPAD) edge[new] (capPPAp)
(capPPAp) edge[mydash,new,bend left=16] (PPAD)
(PPADS) edge[mydash,new,bend right=10] (PPAp)
(PLS) edge[mydash,new,bend left=10] (PPAp)
(PPAp) edge[mydash,new,bend left=23] (PPP);

\path[{Stealth[length=5pt]}-{Stealth[length=5pt]},line width=.4pt,black]
(PPA) edge[mydash,new,bend right=15] (PPAp);
\end{tikzpicture}

\vspace{2mm}
\caption{The landscape of $\TFNP$ subclasses. A solid arrow $\M_1\rightarrow\M_2$ denotes $\M_1\subseteq\M_2$, and a dashed arrow $\M_1\dashrightarrow\M_2$ denotes an oracle separation: $\M_1^{\calO} \nsubseteq\M_2^{\calO}$ relative to some oracle $\calO$. The relationships involving $\PPA_p$ are highlighted in \textcolor{myGold}{yellow}. See \autoref{sec:structural} for details.
}
\label{fig:classes}
\vspace{2mm}
\end{figure}

\paragraph{Closure under Turing reductions.}
Recall that $\TFNP$ subclasses are defined as the set of all total search problems that are {\em many-one} reducible (aka Karp--reducible) to the corresponding complete problems. One can ask whether more power is gained by allowing {\em Turing reductions}, that is, polynomially many oracle queries to the corresponding complete problem. Buss and Johnson \cite{buss12propositional} showed that $\PLS$, $\PPAD$, $\PPADS$, $\PPA$ are closed under Turing reductions (with a notable exception of $\PPP$, which remains open). We show this for $\PPA_p$ when $p$ is a prime.

\begin{theorem}\label{thm:turing-closure}
$\FP^{\PPA_p} = \PPA_p$ for every prime $p$.
\end{theorem}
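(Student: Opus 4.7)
I would adapt the Buss--Johnson closure argument~\cite{buss12propositional} for $\FP^{\PPA} = \PPA$ to the modulo-$p$ setting. Fix a polynomial-time oracle Turing machine $M$ that on input $x \in \bit^n$ makes $k = \poly(n)$ adaptive queries to a $\bipartite_p$-oracle, producing an output $y(x; a_1, \ldots, a_k)$ as a function of $x$ and the sequence of oracle answers $a_1, \ldots, a_k$. The goal is to construct a single $\bipartite_p$-instance $G^{\star}(x)$, specified by polynomial-size neighborhood circuits, such that any solution of $G^{\star}(x)$ can be converted in polynomial time into a correct output of $M^{\bipartite_p}(x)$.

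\textbf{Step 1 (Pairwise amplification).} First I would prove the closure lemma $\PPA_p \bamp \PPA_p \subseteq \PPA_p$. Given $\bipartite_p$-instances $G_1, G_2$ with designated trivial vertices $v_1^{\star}, v_2^{\star}$ of degrees $d_1, d_2 \not\equiv 0 \mod{p}$, form the disjoint union of $G_1, G_2$ (rebalancing the bipartition as needed) and attach a small bridge gadget: introduce an auxiliary vertex $w$ joined to $v_i^{\star}$ by $(p - d_i) \mod{p}$ parallel paths for $i = 1,2$, plus a constant number of padding edges to ensure $w$ itself is guaranteed to have degree $\not\equiv 0 \mod{p}$ regardless of $d_1, d_2$. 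This forces $\deg(v_1^{\star}), \deg(v_2^{\star}) \equiv 0 \mod{p}$ in the combined graph and makes $w$ the only new anomalous vertex. Declaring $w$ the combined trivial vertex, every non-trivial solution of the combined instance is a non-trivial solution of $G_1$ or $G_2$, from which a solution of the chosen sub-instance can be extracted in polynomial time.

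\textbf{Step 2 (Adaptive chaining).} Next I would iterate the amplification along the adaptive query tree of $M$. Define $G^{\star}(x)$ on vertex-labels $(j, \vec{a}, v)$ where $j \in [k{+}1]$, $\vec{a} = (a_1, \ldots, a_{j-1})$ is a candidate history of oracle answers, $v$ is a vertex of the instance $I_j(\vec a)$ that $M$ would query at step $j$ given history $\vec a$, and the block $j = k{+}1$ is a ``halting gadget'' whose anomalous vertex encodes the output $y(x; \vec a)$. Within each block, edges follow $I_j(\vec a)$; bridge gadgets as in Step~1 link the trivial vertex of $I_{j+1}(\vec a, a_j)$ to the vertex of $I_j(\vec a)$ representing the answer $a_j$, so that any valid oracle answer ``flows'' the single degree anomaly forward through the computation tree. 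The trivial vertex of $G^{\star}$ is $(1, \emptyset, v_1^{\star})$. Any non-trivial solution of $G^{\star}$ must therefore lie either in the halting gadget -- yielding a valid $M$-output -- or be a non-trivial solution of some embedded $I_j(\vec a)$, which is itself a usable $\bipartite_p$-oracle answer (allowing us to restart the simulation of $M$ with that answer and iterate at most $k$ times until we land in a halting gadget).

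\textbf{Main obstacle.} The principal difficulty is that the vertex set of $G^{\star}(x)$ is exponentially large in $k$, so $G^{\star}$ must be described implicitly via polynomial-size neighborhood circuits. Given a descriptor $(j, \vec a, v)$, the circuit simulates $M$ on $(x, a_1, \ldots, a_{j-1})$ in polynomial time to recover the circuit for $I_j(\vec a)$, then evaluates it at $v$; this is polynomial because both $M$ and each $I_j$ are polynomial-size. A secondary subtlety absent from the $p = 2$ case is that cancelling a degree-$d$ anomaly $\mod{p}$ requires up to $p-1$ parallel bridge edges (or equivalent bipartite gadgets), and one must verify that no spurious anomalies are introduced outside the designated bridge vertex of each block. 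Since the trivial-vertex degree of each sub-instance $I_j(\vec a)$ is efficiently computable from its circuit, these local corrections can be made uniformly; the resulting $\bipartite_p$-instance $G^{\star}(x)$ then simulates $M^{\bipartite_p}(x)$ via a single oracle call, establishing $\FP^{\PPA_p} \subseteq \PPA_p$, with the reverse inclusion being immediate.
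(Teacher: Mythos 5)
Your high-level plan (chain the adaptive query tree so a single degree anomaly ``flows'' through the blocks and lands in the halting gadget) is the same idea as the paper's, which builds the chained instance over the vertex set $V_1 \cup (V_1\times V_2) \cup \cdots \cup (V_1\times\cdots\times V_t)$. However, there is a genuine gap: the bridge gadget you describe does not control degrees correctly, and the crucial ingredient that uses primality of $p$ is missing.

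The paper works with $\lonely_p$ and relies on \autoref{lem:restricted-lonely}: because $p$ is prime, every $\lonely_p$ instance can be normalized to have \emph{exactly} $p-1$ designated vertices (using the multiplicative inverse of $|V^*|$ mod $p$). This is what makes the ``handoff'' between consecutive query blocks seamless: when $v_k$ is a valid (isolated) answer to the $k$-th query, you can form a $p$-uniform hyperedge containing $v_k$ together with the $p-1$ designated vertices of the next query instance, and every vertex in that hyperedge becomes non-isolated, with no spurious anomalies. In your $\bipartite_p$ formulation the analogous handoff requires cancelling a degree-$d_j$ anomaly at an answer vertex $a_j$ while simultaneously cancelling a degree-$d^*_{j+1}$ anomaly at the next trivial vertex; these residues are in general unrelated, and a parity-style conservation argument shows it cannot be done cleanly: any bridge gadget sits in a bipartite graph, so the total degree added on each side of the bipartition is the same, which forces $(p-d_j)+(p-d^*_{j+1})\equiv 0 \mod{p}$, i.e.\ $d_j \equiv -d^*_{j+1}$, if the gadget's internal vertices are to avoid becoming new anomalies. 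Your ``auxiliary vertex $w$ with padding edges'' runs into exactly this obstruction: $\deg(w) = (p-d_1)+(p-d_2)$ already depends on $d_1,d_2$, and there is no fixed amount of padding (and nowhere for the padding edges to go without creating further anomalies) that makes $w$ the unique new solution uniformly in $d_1,d_2$. You acknowledge a ``secondary subtlety'' here and assert that ``local corrections can be made uniformly,'' but this is precisely the hard part, and it is not resolved.

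Two smaller issues: your Step~1 claims to prove $\PPA_p\bamp\PPA_p\subseteq\PPA_p$, but this inclusion is trivial by the definition of `$\amp$' (you are told which sub-instance to solve). The nontrivial combining problem is the `$\otimes$' operator (solve \emph{both}), which is what your gadget is implicitly targeting; this mislabeling obscures what actually needs to be proved. Finally, you do not mention the simplifying assumption the paper makes (that the designated vertices are always isolated, else the query is unnecessary), which is needed so that the construction's invariants hold at the start of each block. The cleanest fix for all of these issues is to switch to the $\lonely_p$ formulation as in the paper, normalize to $p-1$ designated vertices via \autoref{lem:restricted-lonely}, and chain with single $p$-uniform hyperedges; working directly with $\bipartite_p$ requires a substantially more careful gadget than the one you sketch.
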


\noindent By contrast, it follows from \cite[\S6]{buss12propositional} that $\PPA_q$ is not closed under {\em black-box} Turing reductions for non-prime powers $q$. See \autoref{sec:turing-closure} for details.

\subsection{Open questions} \label{sec:intro-open}

\paragraph{Factoring.} It has been shown that $\textsc{Factoring}$ reduces to $\PPP$-complete problems as well as to $\PPA$-complete problems \cite{bureshoppenheim06factoring,jerabek16integer}, albeit under randomized reductions (which can be derandomized assuming the Generalized Reimann Hypothesis). It has been asked whether in fact $\textsc{Factoring}$ could be reduced to $\PPAD$-complete problems \cite{jerabek16integer}. As a step towards this problem, we propose the following question.

\begin{question}
Is $\textsc{Factoring}$ in $\PPA_p$ for all primes $p$
(perhaps under randomized reductions)?
\end{question}

\noindent This is clearly an easier problem since $\PPAD\subseteq \PPA_p$. Interestingly, note that there exists an oracle $\calO$ relative to which $\bigcap_p \PPA_p^{\calO} \nsubseteq \PPAD^{\calO}$. Thus, the above problem, even if established for all prime $p$, is still weaker than showing that $\textsc{Factoring}$ reduces to $\PPAD$-complete problems.

\paragraph{Necklace Splitting.} The $q\textsc{-Necklace-Splitting}$ problem is defined as follows: There is an open necklace\footnote{an ``open necklace'' means that the beads form a string, not a cycle} with $q \cdot a_i$ beads of color $i$, for $i \in [n]$. The goal is to cut the necklace in $(q-1) \cdot n$ places and partition the resulting substrings into $k$ collections, each containing precisely $a_i$ beads of color $i$ for each $i \in [n]$.

The fact that such a partition exists was first shown in the case of $q=2$ by Goldberg and West \cite{goldberg85bisection} and by Alon and West \cite{alon86borsukulam}. Later, Alon \cite{alon87splitting} proved it for all $q \ge 2$. As mentioned before, Filos-Ratsikas and Goldberg \cite{filosratsikas19splitting}
showed that the $2\textsc{-Necklace-Splitting}$ problem is $\PPA$-complete. Moreover, they put forth the following question (which we strengthen further).

\begin{question}
Is $q\textsc{-Necklace-Splitting}$ in $\PPA_q$? More strongly, is it $\PPA_q$-complete?
\end{question}

\noindent While we do not know how to prove/disprove this yet, we point out that it was also shown in \cite{filosratsikas19splitting} that $2^k\textsc{-Necklace-Splitting}$ is in fact in $\PPA_2$. This is actually well aligned with this conjecture since we showed that $\PPA_{2^k} = \PPA_2$ (\autoref{thm:prime-characterization}).

\paragraph{B\'ar\'any-Shlosman-Sz\"ucs theorem.} Alon's proof of the $q$-Necklace-Splitting theorem \cite{alon87splitting} was topological and used a certain generalization of the Borsuk-Ulam theorem due to B\'ar\'any, Shlosman and Sz\"ucs \cite{barany81topological}. Since the computational $\textsc{Borsuk-Ulam}$ problem is $\PPA$-complete, we could ask a similar question about this generalization.

\begin{question}
Is $\textsc{B\'ar\'any-Shlosman-Sz\"ucs}_p$ problem in $\PPA_p$ (perhaps even $\PPA_p$-complete)?
\end{question}

\paragraph{Applications of Chevalley-Warning Theorem.} We conclude with some interesting directions for further exploring the
connections of $\eChevalley$ with other computational problems.

\begin{question}
	  Does $\sis_q$ admit worst-to-average case reductions
	to other lattice problems in our range of parameters? Or is it
	average-case hard assuming standard cryptographic
	assumptions, e.g. the ``learning with errors'' assumption?
\end{question}

\noindent If resolved positively, the above would serve as evidence of the average-case hardness for the class $\PPA_p$, similar to the evidence that we have for $\PPA$ by reduction from $\textsc{Factoring}$.

\begin{question}
	  For all primes $p$, is $\eChevalley_p$ reducible to $\bis_{p}$?
\end{question}

\begin{question}
	  For all $q$, is there a non-trivial regime of parameters $n$, $m$ where $\bis_{q}$ is solvable in polynomial time?
\end{question} 

\section{\boldmath The class $\PPA_q$}
\label{sec:definitions}

\paragraph{\boldmath Search Problems in $\FNP$ and $\TFNP$.} A search problem
in $\FNP$ is defined by a polynomial time computable relation
$\calR\subseteq \bit^* \times \bit^*$, that is, for every $(x,y)$, it is
possible to decide whether $(x,y) \in \calR$ in $\poly(|x|, |y|)$ time. A
solution to the search problem on input $x$ is a $y$ such that
$|y| = \poly(|x|)$ and $(x,y) \in \calR$. For convenience, define
$\calR(x) \coloneqq \set{y : (x,y) \in \calR}$. A search problem is {\em total}
if for every input $x \in \bit^*$, there exists $y \in \calR(x)$ such that $|y|
\le \poly(|x|)$. $\TFNP$ is the class of all total search problems in $\FNP$.

\paragraph{Reducibility among search problems.} A search problem $\calR_1$ is
{\em Karp-reducible} (or {\em many-one reducible}) to a search problem
$\calR_2$, or $\calR_1 \preceq \calR_2$ for short, if there exist polynomial-time computable functions $f$ and $g$ such that given any instance $x$ of $\calR_1$, $f(x)$ is an instance of $\calR_2$ such that for any $y \in \calR_2(f(x))$, it holds that $g(x, f(x), y) \in \calR_1(x)$.

On the other hand, we say that $\calR_1$ is {\em Turing-reducible} to $\calR_2$, or $\calR_1 \preceq_T \calR_2$ for short, if there exists a polynomial-time oracle Turing machine that on input $x$ to $\calR_1$, makes oracle queries to $\calR_2$, and outputs a $y \in \calR_1(x)$. In this paper, we primarly deal with Karp-reductions, except in \autoref{sec:turing-closure}, where we compare the two different notions of reductions in the context of $\PPA_q$.

\paragraph{\boldmath $\PPA_q$ via complete problems.} We describe several total search problems (parameterized by $q$) that we show to be inter-reducible. $\PPA_q$ is then defined as the set of all search problems reducible to either one of the search problems defined below.

Recall that Boolean circuits take inputs of the form $\bit^n$ and operate using ($\wedge$, $\vee$, $\lnot$) gates. In addition, we'll also consider circuits acting on inputs in $[q]^n$. We interpret the input to be of the form $(\bit^{\ceil{\log q}})^n$, where the circuit will be evaluated only on inputs where each block of $\ceil{\log q}$ bits represents a element in $[q]$. In the case where $q$ is a prime, we could also represent the circuit as $C:\bbF_q^n \to \bbF_q^n$ with arbitrary gates of the form $g: \bbF_q^2 \to \bbF_q$. However, we can simulate any such gate with $\poly(q)$ many $+$ and $\times$ operations (over $\bbF_q$) along with a constant $(1)$ gate. Hence, in the case of prime $q$, we'll assume that such circuits are composed of only $(+, \times, 1)$ gates.\\[1mm]

\newcommand{\mydef}[7]{
\noindent
\begin{minipage}{\linewidth}
\begin{definition}
	\label{def:#1} \contour{black}{(#2)}\vspace{-5pt}
	\begin{itemize}[leftmargin=2.15cm,itemsep=4pt]
		\item[\slshape Principle:] #3
		\item[\slshape Object:] #4
		\item[\slshape Inputs:] #5
		\item[\slshape Encoding:] #6
		\item[\slshape Solutions:] #7
	\end{itemize}
\end{definition}
\end{minipage}\vspace{5mm}
}

\mydef{bipartite}{$\bipartite_q$}
{A bipartite graph with a non-multiple-of-$q$ degree node has another such node.}
{Bipartite graph $G = (V \cup U, E)$. Designated vertex $v^* \in V$}
{$\triangleright$ $C:\bit^n \to (\bit^n)^k$, with $(\bit^n)^k$ interpreted as a $k$-subset of $\bit^n$\\
	$\triangleright$ $v^* \in \set{0}\times\bit^{n-1}$ (usually $0^n$)}
{$V\coloneqq\set{0}\times\bit^{n-1}$, $U\coloneqq\set{1}\times\bit^{n-1}$,\\
	$E \coloneqq \set{(v,u) : v \in V \cap C(u) \text{ and } u \in U \cap C(v)}$}
{$v^*$ if $\deg(v^*) \equiv 0 \mod{q}$ and\\
	$v \ne v^*$ if $\deg(v) \not\equiv 0 \mod{q}$}

\mydef{lonely}{$\lonely_q$}
{A $q$-dimensional matching on a non-multiple-of-$q$ many vertices has an isolated node.}
{$q$-dimensional matching $G = (V, E)$. Designated vertices $V^* \subseteq V$ with $|V^*| \le q-1$}
{$\triangleright$ $C:[q]^n \to [q]^n$\\
	$\triangleright$ $V^* \subseteq [q]^n$ with $|V^*| \le q-1$}
{$V\coloneqq[q]^n$. For distinct $v_1, \ldots, v_q$, edge $e\coloneqq\set{v_1, \ldots, v_q}\in E$ if $C(v_i) = v_{i+1}$, $C(v_q) = v_1$}
{$v \in V^*$ if $\deg(v) = 1$ and\\
	$v \notin V^*$ if $\deg(v) = 0$}

\mydef{leaf}{$\leaf_q$}
{A $q$-uniform hypergraph with a non-multiple-of-$q$ degree node has another such node.}
{$q$-uniform hypergraph $G = (V, E)$. Designated vertex $v^* \in V$}
{$\triangleright$ $C:\bit^n \to (\bit^{nq})^q$, with $(\bit^{nq})^q$ interpreted as $q$ many $q$-subsets of $\bit^n$\\
	$\triangleright$ $v^* \in \bit^n$ (usually $0^n$)}
{$V\coloneqq\bit^n$. For distinct $v_1, \ldots, v_q$, edge $e\coloneqq\set{v_1, \ldots, v_q}\in E$ if $e \in C(v)$ for all $v \in e$}
{$v^*$ if $\deg(v) \equiv 0 \mod{q}$ and\\
	$v \ne v^*$ if $\deg(v) \not\equiv 0 \mod{q}$}

\noindent We remark that $\lonely_q$ and $\leaf_q$ are modulo-$q$ analogs of the $\PPA$-complete problems $\lonely$ and $\leaf$ \cite{papadimitriou94parity,beame98relative}. We prove the following theorem in \autoref{sec:equivalences}.

\begin{theorem}\label{thm:PPAqEquivalentProblems}
The problems $\bipartiteq$, $\lonelyq$ and $\leafq$ are inter-reducible.
\end{theorem}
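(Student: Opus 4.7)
The plan is to give a cycle of Karp reductions
\[
\lonelyq \preceq \leafq \preceq \bipartiteq \preceq \lonelyq.
\]

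For $\leafq \preceq \bipartiteq$ I would take the bipartite \emph{incidence graph} $G'$ of the input $q$-uniform hypergraph $H$: the two sides are the vertex set $V$ and the edge set $E$, with $v \sim e$ iff $v \in e$. Every edge-side vertex has degree exactly $q \equiv 0 \mod{q}$, so the non-multiple-of-$q$ degree vertices of $G'$ all lie on the $V$-side with $\deg_{G'}(v) = \deg_H(v)$. Hence placing the designated vertex $v^{\star}$ on the $V$-side makes every $\bipartiteq$ solution automatically a $\leafq$ solution. The neighborhood circuit of $G'$ is built from the circuit of $H$ by evaluating it at $v \in V$ (to enumerate the at-most-$q$ hyperedges containing $v$) and by reading off the $q$ vertices of an edge $e$ directly from its encoding. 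Routine power-of-two padding balances the two sides.

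For $\bipartiteq \preceq \lonelyq$ I would use the ``half-edge'' trick. The vertices of the new $q$-dim matching are the half-edges $\{(v,u) : \{v,u\} \in E\}$. At each $v$, the bipartite neighborhood circuit supplies a canonical ordering $u_1, u_2, \ldots$ of the neighbors, and the half-edges at $v$ are packed into consecutive hyperedges $\{(v, u_{jq+1}), \ldots, (v, u_{jq+q})\}$. A half-edge is isolated exactly when it lies in the leftover block at its endpoint, which happens iff that endpoint has degree $\not\equiv 0 \mod{q}$. Taking the at-most-$q-1$ residual half-edges at the designated $v^{\star}$ to be the set $V^{\star}$ of the $\lonelyq$ instance, any $\lonelyq$ solution is an isolated half-edge at some $v \ne v^{\star}$, whose endpoint in $G$ is the required $\bipartiteq$ solution (including the case $\deg_G(v^{\star}) \equiv 0 \mod{q}$: then $V^{\star} = \emptyset$ and the counting argument forces an isolated half-edge at some other vertex).

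For $\lonelyq \preceq \leafq$, the subtle point is that a $q$-dim matching is already a $q$-uniform hypergraph, but the two solution notions are dual: $\lonelyq$ asks for an isolated vertex while $\leafq$ asks for one of degree $\not\equiv 0 \mod{q}$. For each $v \in [q]^n$ one can locally decide (via $C$ and $V^{\star}$) whether $v$ is a $\lonelyq$ solution, and I would augment the matching with fresh ``petal'' hyperedges attached to $v$ whose number depends on the local status: no petals if $v \in V^{\star}$ (giving $\deg_{H'}(v) = \deg_M(v) \in \{0,1\}$); $q{-}1$ petals if $v \notin V^{\star}$ is matched (giving $\deg_{H'}(v) = q \equiv 0 \mod{q}$); and one petal if $v \notin V^{\star}$ is unmatched (giving $\deg_{H'}(v) = 1$). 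To prevent the padding vertices from becoming spurious $\leafq$ solutions, the paddings are pooled into blocks and crosslinked by auxiliary balancing hyperedges so that each padding ends up in exactly $q$ hyperedges of $H'$. Choosing $v^{\star} \in V^{\star}$ makes $\deg_{H'}(v^{\star}) \in \{0,1\}$, so a non-$v^{\star}$ $\leafq$ solution is either a genuine $\lonelyq$ solution in $[q]^n$ or a padding vertex from whose name the corresponding $\lonelyq$ solution can be recovered.

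The main technical obstacle is the padding-balancing in the last reduction: the degrees of the padding vertices must be brought to $\equiv 0 \mod{q}$ without creating new non-zero-degree vertices, which requires a small explicit combinatorial design (one convenient choice is a resolvable-design style grid that makes the counting $N(q{-}1)^2/q$ integer by adding at most $q{-}1$ dummy petals). The other two reductions are essentially bookkeeping once the encodings of vertices and edges as bit strings are fixed consistently.
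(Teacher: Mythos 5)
Your reduction cycle $\lonely_q \preceq \leaf_q \preceq \bipartite_q \preceq \lonely_q$ is structurally different from the paper's, which instead proves $\leaf_q \asymp \leaf'_q \asymp \bipartite_q$ and $\leaf_q \asymp \lonely_q$ via an intermediate problem $\leaf'_q$ ($\leaf_q$ with polynomially bounded, rather than $\le q$, degrees). Your incidence-graph step $\leaf_q \preceq \bipartite_q$ matches the paper's $\leaf'_q \preceq \bipartite_q$, and the half-edge step $\bipartite_q \preceq \lonely_q$ is sound in spirit (the paper does it indirectly through $\leaf'_q$), except for one false parenthetical: when $\deg(v^\star) \equiv 0 \mod q$ it is \emph{not} true that ``the counting argument forces an isolated half-edge at some other vertex'' --- if every vertex has degree a multiple of $q$ there are no isolated half-edges at all. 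That case must be handled by outputting $v^\star$ directly before calling the oracle.

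The genuine gap is in $\lonely_q \preceq \leaf_q$. You attach no petals to $V^\star$ and choose $v^\star \in V^\star$, giving $\deg_{H'}(v^\star) \in \{0,1\}$. If $v^\star$ is unmatched, then $\deg_{H'}(v^\star) = 0 \equiv 0 \mod q$, so $v^\star$ is itself a valid ``trivial'' answer to the $\leaf_q$ instance --- the oracle may legitimately return $v^\star$, from which no $\lonely_q$ solution can be extracted in polynomial time. The paper avoids this by manufacturing a fresh designated vertex $v^\star$ whose degree is deliberately $q-k \not\equiv 0 \mod q$, so the $\leaf_q$ instance has no trivial answer; your petal construction needs a similar device. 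Beyond that, the padding-balancing is only asserted: the divisibility condition must relate the total number of petals (which depends on how many non-$V^\star$ vertices are matched vs.\ unmatched) to $q$, and the ``resolvable-design grid'' quantity $N(q-1)^2/q$ is not derived. You also cannot simultaneously lean on ``each padding appears in exactly $q$ hyperedges'' (so paddings are never solutions) and ``a padding's name lets you recover a $\lonely_q$ solution'' as your fallback --- pick one and verify it. The paper's explicit auxiliary-vertex gadget $\{w_{i,j}\}$ is precisely the kind of careful construction this step requires.
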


\begin{remark}[Simplifications in describing reductions.]\label{rem:simplifications}
	We will use the following simple conventions repeatedly, in order to simplify the descriptions of reductions between different search problems.
	\begin{enumerate}
						\item We will often use ``algorithms'', instead of ``circuits'' to encode our hypergraphs. It is standard to simulate polynomial-time algorithms by polynomial sized circuits.
		\item While our definitions require vertex sets to be of a very special form, e.g. $\bit^n$ or $[q]^n$, it will hugely simplify the description of our reductions to let vertex sets be of arbitrary sizes. This is not a problem as long as the vertex set is efficiently indexable, that is, elements of $V$ must have a $\poly(n)$ length representation and we must have a poly-time computable bijective map $\varphi : V \to [|V|]$, whose inverse is also poly-time computable. We could then use $\varphi$ to interpret the first $|V|$ elements of $\bit^n$ (or $[q]^n$) as vertices in $V$.

		Note that, we need to ensure that no new solutions are introduced in this process. In the case of $\bipartite_q$ or $\leaf_q$, we simply leave the additional vertices isolated and they don't contribute any new solutions. In the case of $\lonely_q$ we need to additionally ensure that $|V| \equiv 0 \mod{q}$, so that we can easily partition the remaining vertices into $q$-uniform hyperedges thereby not introducing any new solutions.
		\item The above simplification gives us that all our problems have an {\em instance-extension property} (cf. \cite{bureshoppenheim04relativized}) -- this will be helpful in proving \autoref{thm:turing-closure}.
		\item To simplify our reductions even further, we'll often describe the edges/hyperdges directly instead of specifying how to compute the neighbors of a given vertex. This is only for simplicity and it will be easy to see how to compute the neighbors of any vertex locally.
	\end{enumerate}
\end{remark}

\begin{figure}[t]

	\centering
	\begin{tikzpicture}[scale=0.8, transform shape]
	\def\hgap{2}
	\def\vgap{2.25}
	\draw[fill=Gblue!15, draw=none] (-7,2.4*\vgap) rectangle (1.8,-0.7);
	\draw[fill=Ggreen!15, draw=none] (9,2.4*\vgap) rectangle (1.8,-0.7);
	\draw[fill=orange!15, draw=none] (9,0.3*\vgap) rectangle (1.8,-0.7);
	\node (bip) at (-5.5,2*\vgap) {$\bipartite_p$};
	\node (leaf) at (-5.5,0) {$\leaf_p$};
	\node (leaf-p) at (-5.5,\vgap) {$\leaf'_p$};
	\node (lone) at (-1,0) {$\lonely_p$};
		\node (sbip) at (-1,2*\vgap) {$\sucBipartite_p$};
	\node (twomat) at (-1,\vgap) {$\twoMatchings_p$};

	\def\vgap{1.5}
	\node (combCh) at (5.5,4.5) {$\cChevalley_p$};
	\node (labelCh) at (6,4.5-\vgap) {$\lChevalley_p$};
	\node (Ch) at (6, 4.5-2*\vgap) {$\Chevalley_p$};
	\node (SIS) at (6, 4.5-3*\vgap) {$\sis_p$};

	\path[{Stealth[length=5pt]}-{Stealth[length=5pt]},line width=.4pt,black]
	(bip) edge (leaf-p)
	(leaf-p) edge (leaf)
	(leaf) edge (lone);

	\path[-{Stealth[length=5pt]},line width=.4pt,black]
	(bip) edge (sbip)
	(sbip) edge (twomat)
	(twomat) edge (lone)
	(lone) edge[bend right=0] ([shift={(-1.5,0)}]combCh.south)
	(combCh) edge (sbip)
	(labelCh) edge ([shift={(0.5,0)}]combCh.south)
	(Ch) edge (labelCh)
	(SIS) edge (Ch);
	\end{tikzpicture}
	\caption{Total search problems studied in this work. An arrow
	$A \rightarrow B$ denotes a reduction $A \reducible B$ that we establish.
	Problems in the \textcolor{Gblue}{blue} region are non-natural problems,
	which are all complete for $\PPA_p$. Problems in the
	\textcolor{Ggreen}{green} region are natural problems of which
	$\cChevalley_p$ is the one we show to be $\PPA_p$--complete. The problem
	in the \textcolor{orange}{orange} region is a cryptographically relevant problem.
				}
	\label{fig:ppa-landscape}
\end{figure}
 
\section{Characterization via Primes} \label{sec:characterization}

In this section we prove \autoref{thm:prime-characterization}, namely $\PPA_q = \amp_{p|q}\, \PPA_p$. The theorem follows by combining the following two ingredients.
\begin{itemize}[leftmargin=1.5cm,itemsep=3pt]
\item[\secref{sec:coprime}:]
$\PPA_{q r} = \PPA_q\bamp\PPA_r$ for any coprime $q$ and $r$.
\item[\secref{sec:prime-power}:]
$\PPA_{p^k} = \PPA_p$ for any prime power $p^k$.
\end{itemize}

\subsection{Coprime case} \label{sec:coprime}
\paragraph{\boldmath$\PPA_{qr} \supseteq \PPA_q \bamp\PPA_r$.}
We show that $\lonely_q \bamp\lonely_r$ reduces to $\lonely_{qr}$. Recall that an instance of $\lonely_q \bamp\lonely_r$ is a tuple $(C,V^*, b)$ where $(C,V^*)$ describes an instance of either $\lonely_q$ or $\lonely_r$ as chosen by $b\in\{0,1\}$. Suppose wlog that $b=0$, so the input encodes a $q$-dimensional matching $G = (V,E)$ over $V = [q]^n$ with designated vertices $V^*\subseteq V$, $|V^*| \not\equiv 0 \mod{q}$. We can construct a $qr$-dimensional matching $\overline{G} = (\overline{V},\overline{E})$ on vertices $\overline{V}\coloneqq V \times [r]$ as follows: For every hyperedge $e \coloneqq \set{v_1, \ldots, v_q} \in E$, we include the hyperedge $e \times [r]$ in $\overline{E}$. We let the designated vertices of $\overline{G}$ be $\overline{V}^* \coloneqq V^* \times [r]$. Note that $|\overline{V}^*| \not\equiv 0 \mod{qr}$. It is easy to see that a vertex $(v,i)$ is isolated in $G'$ iff $v$ is isolated in $G$. This completes the reduction since $\overline{V}$ is efficiently indexable, and the neighbors of any vertex in $\overline{V}$ are locally computable using black-box access to $C$. 

\paragraph{\boldmath $\PPA_{qr} \subseteq \PPA_q \bamp\PPA_r$.}
We show that $\bipartite_{qr}$ reduces to $\bipartite_q \bamp\bipartite_r$. Our input instance of $\bipartite_{qr}$ is a circuit $C : \bit^n \to (\bit^n)^k$ that encodes a bipartite graph $G = (V\cup U, E)$ with a designated node $v^* \in V$. If $\deg(v^*) \equiv 0 \mod{qr}$, then we already have solved the problem and no further reduction is necessary. Otherwise, if $\deg(v^*) \not\equiv 0 \mod{qr}$, we have, by the coprime-ness of $q$ and $r$, that either $\deg(v^*) \not\equiv 0 \mod{q}$ or $\deg(v^*) \not\equiv 0 \mod{r}$. In the first case (the second case is analogous), we can simply view $(G,v^*)$ as an instance of $\bipartite_q$, since vertices with degree $\not\equiv 0 \mod{q}$ in $G$ are also solutions to $\bipartite_{qr}$.

\subsection{Prime power case} \label{sec:prime-power}

$\PPA_{p^k} \supseteq \PPA_{p}$ follows immediately from our proof of $\PPA_{qr} \supseteq \PPA_q \bamp\PPA_r$, which didn't require that $q$ and $r$ be coprime. It remains to show $\PPA_{p^k} \subseteq \PPA_{p}$. We exploit the following easy fact.

\begin{fact}\label{fact:primes}
For all primes $p$, it holds that,\vspace*{-1mm}
\begin{align}
\text{for integers } t, c > 0: && \binom{c \cdot p^{t}}{p^t} \equiv 0 \mod{p} &\quad \text{if and only if }\ c \equiv 0 \mod{p}\label{eq:lucas-1}\\
\text{for integer } k > 0: && \binom{p^k}{i} \equiv 0 \mod{p} &\quad \text{for all }\ 0 < i < p^k\label{eq:lucas-2}
\end{align}

\end{fact}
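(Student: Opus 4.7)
The natural tool here is Lucas' theorem: for any prime $p$, if $m = \sum_i m_i p^i$ and $n = \sum_i n_i p^i$ are the base-$p$ expansions (with $0 \le m_i, n_i < p$), then $\binom{m}{n} \equiv \prod_i \binom{m_i}{n_i} \pmod{p}$. Both claims reduce to inspecting base-$p$ digits after applying this formula.

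For \eqref{eq:lucas-1}, write $c = \sum_{j \ge 0} c_j p^j$ in base $p$. Then the base-$p$ expansion of $c p^t$ is obtained by shifting: the digit of $c p^t$ at position $i$ is $0$ for $i < t$ and $c_{i-t}$ for $i \ge t$. The only nonzero digit of $p^t$ is a $1$ at position $t$. Lucas' theorem then yields $\binom{c p^t}{p^t} \equiv \binom{c_0}{1} \prod_{i \ne t} \binom{(\cdot)}{0} \equiv c_0 \pmod{p}$, and since $c_0 \equiv c \pmod p$ this is zero modulo $p$ if and only if $c \equiv 0 \pmod p$.

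For \eqref{eq:lucas-2}, observe that $p^k$ has base-$p$ digits $0$ at every position except position $k$, where it has a $1$. For any $0 < i < p^k$, at least one of the digits $i_0, i_1, \ldots, i_{k-1}$ is strictly positive. The corresponding factor in the Lucas product is $\binom{0}{i_j} = 0$, forcing $\binom{p^k}{i} \equiv 0 \pmod{p}$.

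The only mild subtlety is citing/stating Lucas correctly and being careful with digit positions; neither step presents a real obstacle. (Part \eqref{eq:lucas-2} also follows from Kummer's theorem, since adding $i$ and $p^k - i$ to obtain $p^k$ produces at least one carry in base $p$ whenever $0 < i < p^k$; one could substitute this argument if preferred.)
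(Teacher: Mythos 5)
Your proof is correct, and it is exactly the approach the paper has in mind: the paper states this as a bare \emph{Fact} without a written proof, but the equation labels (\texttt{eq:lucas-1}, \texttt{eq:lucas-2}) make clear that the authors intend Lucas' theorem as the justification, which is precisely what you use. Your digit-shift argument for \eqref{eq:lucas-1} in fact establishes the slightly stronger congruence $\binom{cp^t}{p^t} \equiv c \pmod p$, and the carry-free observation for \eqref{eq:lucas-2} (including the optional Kummer variant) is sound.
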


\begin{figure}[t]
\centering
\begin{tikzpicture}
\tikzset{
	vert/.style = {circle, fill=black, minimum size=5pt, inner sep=0pt, outer sep=0pt},
	hedge/.style = {rectangle, rounded corners=4pt, above right, draw, minimum height=1.8cm, minimum width=1.8cm},
	medge/.style = {rectangle, rounded corners=4pt, draw, minimum height=0.4cm, minimum width=1.45cm, thick, right}
}
\node[vert] (A0) at (0,0) {};
\node[vert] (A1) at ([shift={(1,0)}]A0) {};
\node[vert] (A2) at ([shift={(1,1)}]A0) {};
\node[vert] (A3) at ([shift={(0,1)}]A0) {};
\node[hedge] at ([shift={(-0.4,-0.4)}]A0) {};
\node[medge, minimum width=1.45cm, draw=Gred] (A01) at ([shift={(-0.25,0)}]A0) {};
\node[medge, minimum width=1.45cm, draw=Gred] (A23)at ([shift={(-0.25,0)}]A3) {};
\draw[ultra thick, Gred] (A01) -- (A23);

\node[vert] (B0) at ([shift={(3.5,0)}]A0) {};
\node[vert] (B1) at ([shift={(1,0)}]B0) {};
\node[vert] (B2) at ([shift={(1,1)}]B0) {};
\node[vert] (B3) at ([shift={(0,1)}]B0) {};
\node[hedge] at ([shift={(-0.4,-0.4)}]B0) {};
\node[medge, minimum width=2.95cm, draw=Gblue] (B1C0) at ([shift={(-0.25,0)}]B1) {};
\node[medge, minimum width=2.95cm, draw=Gblue] (B2C3)at ([shift={(-0.25,0)}]B2) {};
\draw[ultra thick, Gblue] (B1C0) -- (B2C3);

\node[vert] (C0) at ([shift={(3.5,0)}]B0) {};
\node[vert] (C1) at ([shift={(1,0)}]C0) {};
\node[vert] (C2) at ([shift={(1,1)}]C0) {};
\node[vert] (C3) at ([shift={(0,1)}]C0) {};
\node[hedge] at ([shift={(-0.4,-0.4)}]C0) {};
\node[medge, minimum width=2.45cm, draw=Ggreen] (C1D0)at ([shift={(-0.25,0)}]C1) {};
\node[medge, minimum width=2.65cm, draw=Ggreen, rotate=-27] (C2D0)at ([shift={(-0.2,0.1)}]C2) {};
\draw[ultra thick, Ggreen, out=130, in=-100] ([shift={(0,0.12)}]C1.north) -- ([shift={(0.05,-0.15)}]C2.south);

\node[vert] (D0) at ([shift={(2,0)}]C1) {};
\node[vert] (D1) at ([shift={(2,0)}]D0) {};
\node[vert] (D2) at ([shift={(2,1)}]D0) {};
\node[vert] (D3) at ([shift={(0,1)}]D0) {};
\node[medge, minimum width=2.45cm, draw=Gyellow] (D1D2)at ([shift={(-0.25,0)}]D3) {};

\node[rectangle, rounded corners=4pt, above, draw, minimum height=1.8cm, minimum width=0.8cm, dashed] at ([shift={(0,-0.4)}]D1) {};
\node at ([shift={(0.9,0.5)}]D1) {$V^*$};

\end{tikzpicture}
\caption{Illustration of the proof of $\PPA_{p^k} \subseteq \PPA_{p}$ for $p=2$, $k=2$, $n=2$, $t=1$. In black, we indicate the $4$-dimensional matching $G$. In color, we highlight some of the vertices of $\overline{G}$ and the edges between them. The vertices of $\overline{G}$ in \textcolor{Gred}{red}, \textcolor{Gblue}{blue} and \textcolor{Ggreen}{green} are paired up and hence are non-solutions; whereas the vertex in \textcolor{Gyellow}{yellow} is isolated and not in $\overline{V}^*$ and hence a solution.}
\label{fig:ppa-prime-pow}
\end{figure}

\noindent We reduce $\lonely_{p^k}$ to $\lonely_p$. Our instance of $\lonely_{p^k}$ is $(C,V^*)$ where $C$ implicitly encodes a ${p^k}$-dimensional matching $G = (V = [p^k]^n, E)$ and a designated vertex set $V^* \subseteq V$ such that $|V^*| \not\equiv 0 \mod{p^k}$.

Let $p^t$, $0 \le t < k$, be the largest power of $p$ that divides $|V^*|$. Through local operations we construct a $p$-dimensional matching hypergraph $\overline{G} = (\overline{V}, \overline{E})$ over vertices $\overline{V} \coloneqq \binom{V}{p^t}$ (set of all size-$p^t$ subsets of $V$) with designated vertices $\overline{V}^*\coloneqq\binom{V^*}{p^t}$. From \eqnref{eq:lucas-1}, we get that $|\overline{V}| \equiv 0 \mod{p}$ and $|\overline{V}^*| \not\equiv 0 \mod{p}$.

We will describe an algorithm that on vertex $\overline{v} \in \overline{V}$ outputs a hyperedge of $p$ vertices that contains $\overline{v}$ (if any). To this end, first fix an algorithm that for any set $e \coloneqq \set{u_1, \ldots, u_{p^k}} \subseteq V$ and for any $1 \le i \le p^t$, computes some ``canonical'' partition of the set $\binom{e}{i}$ into subsets of size $p$, and moreover assigns a canonical cyclic order within each such subset. This is indeed possible because of \eqnref{eq:lucas-2}, since $t < k$.\\[-0.2cm]

\noindent Given a vertex $\overline{v} \coloneqq \set{v_1, \ldots, v_{p^t}} \in \overline{V}$,\vspace{-0.2cm}
\begin{itemize}[label=$\triangleright$,leftmargin=0.6cm]
\item Compute all edges $e_1, \ldots, e_\ell \in E$ that include some $v \in \overline{v}$.
\item For edge $e_j$, define $S_j \coloneqq e_j \cap \overline{v}$ and let $S_j^1, \ldots, S_j^{p-1}$ be the remaining subsets in the same partition as $S_j$ in the canonical partition of $\binom{e_j}{|S_j|}$, listed in the canonical cyclic order starting at $S_j$. Also, let $S_0$ be the set of untouched vertices in $\overline{v}$. Observe that $\overline{v} = S_0 \cup S_1 \cup \ldots \cup S_\ell$.
\item Output neighbors of $\overline{v}$ as the vertices $\overline{v}_1, \ldots, \overline{v}_{p-1}$ where $\overline{v}_i \coloneqq S_0 \cup S_1^i \cup \ldots \cup S_{\ell}^i$.
\end{itemize}

\noindent It is easy to see that $\overline{v}$ is isolated in $\overline{G}$ iff all $v \in \overline{v}$ are isolated in $G$. Moreover, any isolated vertex in $\overline{V} \smallsetminus \overline{V}^*$ contains at least one isolated vertex in $V \smallsetminus V^*$; and a non-isolated vertex in $\overline{V}^*$ contains at least one non-isolated vertex in $V^*$ (in fact $p^t$ many).

The edges of $\overline{G}$ can indeed be computed efficiently with just black-box access to $C$. In order to complete the reduction, we only need that $\overline{V}$ is efficiently indexable. This is indeed standard; see \cite[\S2.3]{kreher98combinatorial} for a reference. See \autoref{fig:ppa-prime-pow} for an illustration of the proof.

\begin{remark}
Note that the size of the underlying graph blows up polynomially in our reduction. We do not know whether a reduction exists that avoids such a blow-up, although we suspect that the techniques of \cite{beame98more} can be used to show that some blow-up is necessary for black-box reductions.
\end{remark}

\section{A Natural Complete Problem}
\label{sec:chevalley}

\newcommand{\myshortdef}[6]{
	\noindent
	\begin{minipage}{\linewidth}
		\begin{definition}
			\label{def:#1} \contour{black}{(#2)}\vspace{-5pt}
			\begin{itemize}[leftmargin=2.15cm,itemsep=4pt]
				\item[\slshape Principle:] #3
				\item[\slshape Input:] #4
				\item[\slshape Condition:] #5
				\item[\slshape Output:] #6
			\end{itemize}
		\end{definition}
	\end{minipage}\vspace{3mm}
}

\newcommand{\myshortdefpr}[5]{
	\noindent
	\begin{minipage}{\linewidth}
		\begin{definition}
			\label{def:#1} \contour{black}{(#2)}\vspace{-5pt}
			\begin{itemize}[leftmargin=2.15cm,itemsep=4pt]
				\item[\slshape Principle:] #3
				\item[\slshape Input:] #4
				\item[\slshape Output:] #5
			\end{itemize}
		\end{definition}
	\end{minipage}\vspace{3mm}
}

We start with some notation that will be useful for the
presentation of our results.
\smallskip

\paragr{Notations.} For any polynomial $g \in \bbF_p[\vecx]$, we
define $\deg(g)$ to be the degree of $g$. We define
\textit{the expansion to monic monomials} of $g$ as $\sum_{\ell = 1}^L t_{\ell}(\vecx)$, 
where $t_{\ell}(\vecx)$ is a monic monomial in
$\bbF_p[\vecx]$, i.e. a monomial with coefficient $1$. For example,
the expansion of the polynomial
$g(x_1, x_2) = x_1 \cdot (2 x_1 + 3 x_2)$ is
given by $x_1^2 + x_1^2 + x_1 x_2 + x_1 x_2 + x_1 x_2$.

For a polynomial system $\vec{f} := (f_1, \ldots, f_m) \in \bbF_p[\vecx]^m$, its affine variety $\calV_{\vecf} \subseteq \bbF_p^n$ is defined as
$\calV_{\vecf} \coloneqq \set{\vecx \in \bbF_p^n \mid \vecf(\vecx) = \vec{0}}$.
Let $\overline{\calV_{\vecf}} \coloneqq \bbF_p^n \setminus \calV_{\vecf}$. If the constant term of each $f_i$ is $0$, we say that $\vecf$ is {\em zecote}, standing for ``Zero Constant Term'' (owing to lack of known terminology and creativity on our part).

\subsection{The Chevalley-Warning Theorem} \label{sec:ChevalleyWarningProof}

  We repeat the formal statement of Chevalley-Warning Theorem together with its
proof.

\begin{cwthm}[\cite{chevalley35demonstration,warning36bemerkung}]
    For any prime $p$ and a polynomial system
  $\vecf \in \bbF_p[\vecx]^m$ satisfying $\sum_{i = 1}^m \deg(f_i) < n$
  \emph{(CW Condition)}, $|\calV_{\vecf}| \equiv 0 \mod{p}$.
\end{cwthm}

  We describe the proof of \CWT through \autoref{lem:mainLemmaInProofOfCWT}.
Even though there are direct proofs, the following presentation helps motivate the generalizations we study in future sections. Given a polynomial system $\vecf \in
\bbF_p[\vecx]^m$, a key idea in the proof is the polynomial $\CW_{\vecf}(\vecx)
\coloneqq \prod_{i = 1}^m \CW_{f_i}(\vecx)$ where each $\CW_{f_i}(\vecx) := (1 - f_i(\vecx)^{p - 1})$. Observe that $\CW_{\vecf}(\vecx) = 1$ if
$\vecx \in \calV_{\vecf}$ and is $0$ otherwise. The following definition
describes the notion of a max-degree monomial of $\CW_{\vecf}$ that plays an important role in the proof.

\begin{definition}[\textsc{Max-Degree Monic Monomials}]
\label{def:max-degreeMonomial}
For any prime $p$, let $\vecf \in \bbF_p[\vecx]^m$ and let the expansion into monic monomials of $\CW_{f_i}(\vecx)$ be $\sum_{\ell = 1}^{r_i} t_{i, \ell}(\vecx)$. Let also $U_i = \{(i, \ell) \mid \ell \in [r_i]\}$ and $U = \bigtimes_{i = 1}^m U_i$, we define the following quantities.
  \begin{Enumerate}
    \item A \textit{monic monomial} of $\CW_{\vecf}$ is a product $t_S(\vecx) = \prod_{i = 1}^m t_{s_i}(\vecx)$ for $S = (s_1, \ldots, s_m) \in U$.
	    \item A \textit{max-degree monic monomial} of $\CW_{\vecf}$ is any monic monomial $t_S(\vecx)$, such that\\
    $t_S(\vecx) \equiv \prod_{j = 1}^n x_j^{p - 1} \mod{\set{x_i^p - x_i}_{i \in [n]}}$.
    \item We define $\calM_{\vecf}$ to be the set of max-degree monic monomials of $\CW_{\vecf}$, i.e. \\
	$\calM_{\vecf} \coloneqq \set{S \in U \mid t_S \text{ is a max-degree monic monomial of } \CW_{\vecf}}$.
  \end{Enumerate}
\end{definition}

\noindent In words, the monomials $t(S)$ are precisely the ones that arise when symbolically expanding $\CW_{\vecf}(\bx)$. We illustrate this with an example: Let $p = 3$ and $f_1(x_1, x_2) = x_1 + x_2$ and $f_2(x_1, x_2) = x_1^2$. Then modulo $\set{x_1^3 - x_1, x_2^3 - x_2}$, we have
\begin{align*}
\CW_{(f_1,f_2)}(x_1, x_2) &~=~ (1 - (x_1 + x_2)^2)(1 - (x_1^2)^2)\\
&~=~ (1 - x_1^2 - 2 x_1 x_2 - x_2^2) \cdot (1 - x_1^2)\\
&~=~ (1 + x_1^2 + x_1^2 + x_1x_2 +x_2^2 + x_2^2) \cdot (1 + x_1^2 + x_1^2)
\end{align*}

\noindent Thus there are $18$ $(= 6 \times 3)$ monic monomials in the system $(f_1, f_2)$. The monomial corresponding to $S = ((1,5),(2,2))$ is a maximal monomial since the $5$-th term in $\CW_{f_1}$ is $x_2^2$ and $2$-nd term in $\CW_{f_2}$ is $x_1^2$. Using the above definitions, we now state the main technical lemma of the proof of \CWT.

\begin{lemma}[Main Lemma in the proof of CWT] \label{lem:mainLemmaInProofOfCWT}
   For any prime $p$ and any system of polynomials
  $\vecf \in \bbF_p[\vecx]^m$, it holds that
  $\abs{\calV_{\vecf}} \equiv (-1)^n \abs{\calM_{\vecf}} \mod{p}$.
\end{lemma}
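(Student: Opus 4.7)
The strategy is the classical character-sum proof of Chevalley--Warning, organized so that every monomial in the monic expansion of $\CW_{\vecf}$ contributes an individually tracked term modulo $p$.

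First I would observe that, by Fermat's little theorem applied in $\bbF_p$, for every $a \in \bbF_p$ we have $a^{p-1} = 1$ if $a \neq 0$ and $a^{p-1} = 0$ if $a = 0$. Applying this coordinate-wise gives
\[ \CW_{\vecf}(\vecx) \;=\; \prod_{i=1}^{m}\bigl(1 - f_i(\vecx)^{p-1}\bigr) \;=\; \begin{cases} 1 & \text{if } \vecx \in \calV_{\vecf}, \\ 0 & \text{otherwise,} \end{cases} \]
so that $|\calV_{\vecf}| \equiv \sum_{\vecx \in \bbF_p^n} \CW_{\vecf}(\vecx) \pmod{p}$. This reduces the lemma to evaluating this exponential sum.

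Next I would expand $\CW_{\vecf}$ symbolically as $\CW_{\vecf}(\vecx) = \sum_{S \in U} t_S(\vecx)$ using the decomposition $\CW_{f_i} = \sum_{\ell=1}^{r_i} t_{i,\ell}$ from \autoref{def:max-degreeMonomial}, and swap the order of summation to get
\[ \sum_{\vecx \in \bbF_p^n} \CW_{\vecf}(\vecx) \;=\; \sum_{S \in U}\, \sum_{\vecx \in \bbF_p^n} t_S(\vecx). \]
The remaining task is to identify which monic monomials $t_S$ contribute a non-zero inner sum modulo $p$.

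The computational heart of the argument is the elementary identity, for an integer $e \ge 0$,
\[ \sum_{x \in \bbF_p} x^{e} \;\equiv\; \begin{cases} -1 \pmod{p} & \text{if } e > 0 \text{ and } (p-1) \mid e, \\ \phantom{-}0 \pmod{p} & \text{otherwise,} \end{cases} \]
which follows by picking a generator of $\bbF_p^{\times}$ and summing a geometric progression. Since for any polynomial $g$, the function $\vecx \mapsto g(\vecx)$ on $\bbF_p^n$ is unchanged by reducing $g$ modulo $\{x_i^p - x_i\}_{i \in [n]}$, I may replace $t_S$ with its reduction, which has every exponent in $\{0, 1, \ldots, p-1\}$. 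By the above identity applied coordinate-wise, $\sum_{\vecx} t_S(\vecx) \equiv (-1)^n \pmod{p}$ exactly when every exponent in the reduction equals $p-1$, that is, when $t_S \equiv \prod_{j=1}^{n} x_j^{p-1} \pmod{\{x_i^p - x_i\}_i}$; otherwise the sum vanishes modulo $p$.

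Combining these pieces yields
\[ |\calV_{\vecf}| \;\equiv\; \sum_{S \in U} \sum_{\vecx} t_S(\vecx) \;\equiv\; (-1)^n \cdot |\calM_{\vecf}| \pmod{p}, \]
which is the claim. There is no real obstacle here: the only mildly delicate point is being careful that the reduction modulo $\{x_i^p - x_i\}$ preserves point-values (so that summation is unaffected) while possibly creating cancellations of coefficients, which is why the lemma counts monic monomials in $\calM_{\vecf}$ with multiplicity, rather than the coefficient of $\prod_j x_j^{p-1}$ in the reduced form of $\CW_{\vecf}$.
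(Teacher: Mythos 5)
Your proposal is correct and follows essentially the same route as the paper's proof: reduce $|\calV_{\vecf}|$ to the character sum $\sum_{\vecx}\CW_{\vecf}(\vecx)$, swap the order of summation over the symbolic expansion, and use the single-variable power-sum identity to kill all but the max-degree terms. The only cosmetic difference is that you spell out the reduction modulo $\set{x_i^p - x_i}$ and the $(p-1)\mid e$ criterion explicitly, and write $(-1)^n$ in place of the paper's $(p-1)^n$ (which are of course congruent modulo $p$).
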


\begin{proof}
As noted earlier, $\CW_{\vecf}(\vecx) = 1$ if $\bx \in \calV_{\vecf}$ and is $0$ otherwise. Thus, it follows that $\abs{\calV_{\vecf}} \equiv \sum_{\bx \in \bbF_p^n} \CW_{\vecf}(\bx) \mod{p}$. For any monic monomial $m(\bx) = \prod_{j=1}^n x_j^{d_j}$, it holds that $\sum_{\bx \in \bbF_p^n} m(\bx) = 0$ if $d_j < p - 1$ for some $x_j$. On the other hand, for the monic max-degree monomial $m(\bx) = \prod_{j = 1}^n x_j^{p-1}$, it holds that $\sum_{\bx \in \bbF_p^n} m(\bx) = (p-1)^n$. Thus, we get that $|\calV_{\vecf}| \equiv \sum_{\bx \in \bbF_p^n} \CW_{\vecf}(\bx) \mod{p} \equiv \sum_{S\in U} \sum_{\bx \in \bbF_p^n} t_S(\vecx)  \mod{p} \equiv (-1)^n |\calM_{\vecf}| \mod{p}$.
\end{proof}

\noindent The proof of Chevalley-Warning Theorem follows easily from
\autoref{lem:mainLemmaInProofOfCWT}.

\begin{proof}[Proof of Chevalley-Warning Theorem.]
We have that $\deg(\CW_{\vecf}) \le (p - 1) \sum_{i = 1}^m \deg(f_i)$. Thus, if $\vecf$ satisfies \eqref{eq:CWcondition}, then $\deg(\CW_{\vecf}) < (p - 1) n$ and hence $\abs{\calM_{\vecf}} = 0$. \CWT now follows from \autoref{lem:mainLemmaInProofOfCWT}.
\end{proof}

\subsection{The Chevalley-Warning Theorem with Symmetry}
\label{sec:ChevalleyWarningWithSymmetry}

  In this section, we formalize the intuition that we built in
\hyperref[sec:proofsOfCancellation]{Sections~\ref{sec:proofsOfCancellation}} and \ref{sec:computationalChevalleyWarning} to prove the more general
statements to lead to the same conclusion as the Chevalley-Warning
Theorem.

  First, we prove a theorem that argues about the cardinality of $\calV_{\vecf}$
directly using some symmetry of the system of polynomials $\vecf$. Then,
combining this symmetry-based argument with the \eqref{eq:generalCWCondition} we get
the generalization of the Chevalley-Warning Theorem. Our natural
$\PPA_p$-complete problem is based on this generalization.

The theorem statements are simplified using the definition of
\textit{free action} of a group. For a permutation over $n$ elements $\sigma \in S_n$, we define
$\langle \sigma \rangle$ to be the sub-group generated by $\sigma$ and
$|\sigma|$ to be the order of $\langle\sigma \rangle$. For $\vecx \in \bbF_p^n$, $\sigma(\vecx)$ denotes the assignment obtained by permutating the variables of the assignment $\vecx$ according to $\sigma$.

\begin{definition}[\textsc{Free Group Action}]
    Let $\sigma \in S_n$ and $\calV \subseteq \bbF_p^n$, then we say that
  $\langle \sigma \rangle$ acts freely on $\calV$ if, for every
  $\vecx \in \calV$, it holds that $\sigma(\vecx) \in \calV$ and
  $\vecx \neq \sigma(\vecx)$.
\end{definition}

  Our first theorem highlights the
use of symmetry in arguing about the size of $\abs{\calV_{\vecf}}$.

\begin{theorem} \label{thm:symmetricChevalleyTheorem}
    Let $\vecf \in \F_p[\vecx]^m$ be a system of polynomials.
  If there exists a permutation $\sigma \in S_n$ with
  $\abs{\sigma} = p$ such that $\a{\sigma}$ acts freely on
  $\overline{\calV}_{\vecf}$, then $\abs{\calV_{\vecf}} \equiv 0 \mod{p}$.
\end{theorem}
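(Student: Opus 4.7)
The plan is to use the orbit structure of the action of $\langle \sigma \rangle$ on $\overline{\calV_{\vecf}}$ to show that $|\overline{\calV_{\vecf}}| \equiv 0 \mod{p}$, and then conclude via the identity $|\calV_{\vecf}| + |\overline{\calV_{\vecf}}| = p^n$.

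First, I would observe that since $|\sigma| = p$ and $p$ is prime, every subgroup of $\langle \sigma \rangle$ is either trivial or all of $\langle \sigma \rangle$. For any $\vecx \in \overline{\calV_{\vecf}}$, its stabilizer under $\langle \sigma \rangle$ is a subgroup, and by the freeness assumption we have $\sigma(\vecx) \neq \vecx$, so the stabilizer cannot be all of $\langle \sigma \rangle$; it must therefore be trivial. By the orbit-stabilizer correspondence, the orbit of $\vecx$ has size $|\langle \sigma \rangle| = p$.

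Next, the freeness assumption also guarantees that $\sigma$ maps $\overline{\calV_{\vecf}}$ to itself, so orbits are contained in $\overline{\calV_{\vecf}}$. Hence $\overline{\calV_{\vecf}}$ partitions into $\langle \sigma \rangle$-orbits, each of size exactly $p$, giving $|\overline{\calV_{\vecf}}| \equiv 0 \mod{p}$. Combined with $|\calV_{\vecf}| + |\overline{\calV_{\vecf}}| = p^n \equiv 0 \mod{p}$ (since $n \ge 1$; the case $n = 0$ is vacuous as $S_0$ is trivial and cannot contain an element of order $p$), this yields $|\calV_{\vecf}| \equiv 0 \mod{p}$.

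The argument is essentially a routine application of orbit-stabilizer, so there is no serious obstacle; the only mild subtlety is explicitly invoking the primality of $p$ to rule out intermediate stabilizers and thereby force every orbit to have full size $p$, which is precisely where the hypothesis $|\sigma| = p$ (rather than an arbitrary order) is used.
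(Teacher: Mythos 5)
Your proof is correct and takes essentially the same approach as the paper's: partition $\overline{\calV_{\vecf}}$ into $\langle\sigma\rangle$-orbits, argue each orbit has size exactly $p$, conclude $|\overline{\calV_{\vecf}}| \equiv 0 \mod{p}$, and finish with $|\calV_{\vecf}| + |\overline{\calV_{\vecf}}| = p^n$. You spell out the orbit-stabilizer step and the role of primality in ruling out intermediate stabilizers, which the paper leaves implicit, but this is just a more detailed rendering of the same argument.
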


\begin{proof}
    Since $\sigma$ acts freely on $\overline{\calV}_{\vecf}$, we can partition
  $\overline{\calV}_{\vecf}$ into orbits of any
  $\vecx \in \overline{\calV}_{\vecf}$ under actions of $\a{\sigma}$, namely
  sets of the type $\set{\sigma^i(\vecx)}_{i \in [p]}$ for
  $\vecx \in \overline{\calV}_{\vecf}$. Since $\a{\sigma}$ acts freely on
  $\overline{\calV}_{\vecf}$, each such orbit has size $p$. Thus, we can
  conclude that $\abs{\overline{\calV}_{\vecf}} \equiv 0 \pmod p$ from which
  the theorem follows.
\end{proof}

\begin{remark} \label{rem:syntacticVerificationOfSymmetry}
  For any polynomial system $\vecf$ and any
  permutation $\sigma$, we can check in linear time if $\abs{\sigma} = p$ and we
  can syntactically refute that $\langle \sigma \rangle$ acts freely on
  $\overline{\calV}_{\vecf}$ with an $\vecx \in \bbF_p^n \smallsetminus \set{\vec{0}}$ such that $\vecf(\sigma(\vecx)) = \vec{0}$ or $\sigma(\vecx) = \vecx$.
\end{remark}
\bigskip

We now state and prove an extension of \CWT that captures both
the argument from \autoref{lem:mainLemmaInProofOfCWT} and the symmetry
argument from \autoref{thm:symmetricChevalleyTheorem}.

\begin{theorem}[\textsc{Chevalley-Warning with Symmetry Theorem}]
\label{thm:combinedChevalleyTheorem}
	  Let $\vecg \in \F_p[\vecx]^{m_g}$ and
	$\vech \in \F_p[\vecx]^{m_h}$ be two systems of polynomials, and
	$\vecf \coloneqq (\vecg, \vech)$. If there exists a permutation $\sigma \in S_n$ with $\abs{\sigma} = p$ such that (1) $\calM_{\vecg} = \emptyset$ and (2) $\a{\sigma}$ acts freely on
	$\calV_{\vecg} \cap \overline{\calV_{\vech}}$, then $\abs{\calV_{\vecf}} = 0 \mod{p}$.
\end{theorem}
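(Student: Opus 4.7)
The plan is to decompose $\calV_{\vecg}$ according to whether or not $\vech$ also vanishes, and then show each piece is $\equiv 0 \mod p$ using one of the two hypotheses. Concretely, observe that since $\vecf = (\vecg, \vech)$ we have $\calV_{\vecf} = \calV_{\vecg} \cap \calV_{\vech}$, and $\calV_{\vecg}$ splits as the disjoint union
\[
\calV_{\vecg} ~=~ (\calV_{\vecg} \cap \calV_{\vech}) ~\sqcup~ (\calV_{\vecg} \cap \overline{\calV_{\vech}}) ~=~ \calV_{\vecf} ~\sqcup~ (\calV_{\vecg} \cap \overline{\calV_{\vech}}),
\]
so it suffices to show that both $|\calV_{\vecg}|$ and $|\calV_{\vecg} \cap \overline{\calV_{\vech}}|$ are divisible by $p$.

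For the first, I would invoke \autoref{lem:mainLemmaInProofOfCWT} applied to the system $\vecg$: it gives $|\calV_{\vecg}| \equiv (-1)^n |\calM_{\vecg}| \mod p$, which is $0 \mod p$ by hypothesis (1) that $\calM_{\vecg} = \emptyset$. For the second, I would repeat the orbit argument from the proof of \autoref{thm:symmetricChevalleyTheorem}, but applied to the set $\calV_{\vecg} \cap \overline{\calV_{\vech}}$ instead of $\overline{\calV_{\vecf}}$. Namely, hypothesis (2) says that $\langle \sigma \rangle$ acts freely on $\calV_{\vecg} \cap \overline{\calV_{\vech}}$, so this set partitions into $\langle \sigma \rangle$-orbits; since $|\sigma| = p$ is prime and the action is free, every orbit has size exactly $p$, giving $|\calV_{\vecg} \cap \overline{\calV_{\vech}}| \equiv 0 \mod p$.

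Combining, $|\calV_{\vecf}| = |\calV_{\vecg}| - |\calV_{\vecg} \cap \overline{\calV_{\vech}}| \equiv 0 \mod p$, which is the desired conclusion. There is no real obstacle here: the theorem is essentially a bookkeeping combination of the two independently established ``proofs of cancellation'' (the max-degree monomial argument for $\vecg$ and the symmetry argument for the leftover set $\calV_{\vecg} \cap \overline{\calV_{\vech}}$), and all that is needed is to notice the set-theoretic decomposition of $\calV_{\vecg}$ above. The only mild subtlety is to verify that the symmetry hypothesis is being invoked on the correct set (i.e., on $\calV_{\vecg} \cap \overline{\calV_{\vech}}$, not on $\overline{\calV_{\vecf}}$), which aligns with the motivation given in \secref{sec:computationalChevalleyWarning}.
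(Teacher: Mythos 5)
Your proposal is correct and follows essentially the same approach as the paper: both proofs combine the two facts $|\calV_{\vecg}| \equiv 0 \mod p$ (from $\calM_{\vecg} = \emptyset$ via \autoref{lem:mainLemmaInProofOfCWT}) and $|\calV_{\vecg} \cap \overline{\calV_{\vech}}| \equiv 0 \mod p$ (from the free $p$-action). The only cosmetic difference is that you decompose $\calV_{\vecg} = \calV_{\vecf} \sqcup (\calV_{\vecg} \cap \overline{\calV_{\vech}})$ and subtract directly, whereas the paper decomposes $\overline{\calV_{\vecf}} = \overline{\calV_{\vecg}} \sqcup (\calV_{\vecg} \cap \overline{\calV_{\vech}})$ and passes through the complement; yours is marginally more direct.
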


\begin{remark} \label{rem:conditionsOfChevalleyWithSymmetry}
    We point to the special form of Condition 2. By definition,
  $\calV_{\vecf} = \calV_{\vecg} \cap \calV_{\vech}$, hence if $\a{\sigma}$
  were to act freely on $\overline{\calV_{\vecg}} \cup \overline{\calV_{\vech}}$ (or even $\calV_{\vecg} \cap \calV_{\vech}$), then we could just
  use \autoref{thm:symmetricChevalleyTheorem} to get that
  $\abs{\calV_{\vecf}} \equiv 0 \mod{p}$. In the above theorem,
  we only require that $\a{\sigma}$ acts freely on
  $\calV_{\vecg} \cap \overline{\calV_{\vech}}$. Observe that
  \autoref{thm:symmetricChevalleyTheorem} follows as a special case of \CWT with
  Symmetry by setting $m_g = 0$. Additionally, by setting $m_h = 0$ we get the generalization of \CWT corresponding to the
  \eqref{eq:generalCWCondition} as presented in
  \autoref{sec:computationalChevalleyWarning}.
\end{remark}

\begin{proof}[Proof of \autoref{thm:combinedChevalleyTheorem}]
  	If $\CW_{\vecg}$ does not have any max-
	degree monic monomials, we have $\abs{\calV_{\vecg}} \equiv 0 \mod{p}$ (similar to proof of \CWT) and, since
	$\overline{\calV_{\vecg}} = \F_p^n \setminus \calV_{\vecg}$, we have $\abs{\overline{\calV_{\vecg}}} \equiv 0 \mod{p}$. Also,
	since $\a{\sigma}$ acts freely on $\calV_{\vecg} \cap
	\overline{\calV_{\vech}}$, we have
	$\abs{\calV_{\vecg} \cap \overline{\calV_{\vech}}} \equiv 0 \mod{p}$ (similar to the proof of
	\autoref{thm:symmetricChevalleyTheorem}). Hence,
	$\abs{\overline{\calV_{\vecf}}} = \abs{\overline{\calV_{\vecg} \cap
	\calV_{\vech}}}  = \abs{\overline{\calV_{\vecg}} \cup
	\overline{\calV_{\vech}}} = \abs{\overline{\calV_{\vecg}}} +
	\abs{\calV_{\vecg} \cap \overline{\calV_{\vech}}} \equiv  0 \mod{p}$. Thus,
	$\abs{\calV_{\vecf}} \equiv 0 \mod{p}$.
\end{proof}

\subsection{Computational Problems Related to Chevalley-Warning Theorem}
\label{sec:computationalChevalleyWarningFormal}

  We now follow the intuition developed in the previous section and in
\autoref{sec:intro-complete} to formally define the computational problems
$\eChevalleyp$, $\lChevalleyp$, and $\cChevalleyp$.
\medskip

\myshortdef{explicit-chevalley}{$\eChevalleyp$}
{Chevalley-Warning Theorem.}
{$\vecf \in \bbF_p[\vecx]^m$ : an explicit zecote polynomial system.}
{$\sum_{i = 1}^m \deg(f_i) < n$.}
{$\vecx \in \bbF_p^n$ such that $\vecx \neq \vec{0}$ and $\vecf(\vecx) = \vec{0}$.}
\bigskip

\myshortdefpr{general-chevalley}{$\lChevalleyp$}
{General Chevalley-Warning Theorem via \eqref{eq:generalCWCondition}.}
{$\vecf \in \bbF_p[\vecx]^m$ : an explicit zecote polynomial system.}
{\begin{Enumerate}[leftmargin=*]
  \item[0.] A max-degree monic monomial $t_S(\vecx)$ of $\CW_{\vecf}$, or
  \item $\vecx \in \bbF_p^n$ such that $\vecx \neq \vec{0}$ and $\vecf(\vecx) = \vec{0}$.
\end{Enumerate}}

\bigskip

\myshortdef{combined-chevalley}{$\cChevalleyp$}
{Chevalley-Warning Theorem with Symmetry (\autoref{thm:combinedChevalleyTheorem}).}
{$\triangleright$ $\vecg \in \bbF_p[\vecx]^{m_g}$ and $\vech \in \bbF_p[\vecx]^{m_h}$ : explicit zecote polynomial systems\\
 $\triangleright$ $\sigma \in S_n$ : a permutation over $[n]$.}
{$|\sigma| = p$.}
{\begin{Enumerate}[leftmargin=*]
  \item[0.] (a) A max-degree monic monomial $t_S(\vecx)$ of $\CW_{\vecg}$, or
  \item[] (b) $\vecx \in \calV_{\vecg} \cap \overline{\calV_{\vech}}$ such that
        $\sigma(\vecx) \not\in (\calV_{\vecg} \cap \overline{\calV_{\vech}}) \smallsetminus \set{\vecx}$, or
  \item $\vecx \in \bbF_p^n$ such that $\vecx \neq \vec{0}$ and $\vecf(\vecx) = \vec{0}$.
\end{Enumerate}}

\begin{remark} \label{rem:computationalDefinitions}
    Some observations about the above computational problems follow:
  \begin{Enumerate}
    \item In the problems $\lChevalleyp$ and $\cChevalleyp$,  we assume that, if the output is a max-degree monic monomial, this is given via the multiset of indices $S$ that describes the monomial as formalized in \autoref{def:max-degreeMonomial}.
    \item We have $\eChevalleyp \preceq \lChevalleyp \preceq \cChevalleyp$. Thus, inclusion of $\cChevalleyp$ in $\PPA_p$ implies that the problems $\eChevalleyp$ and $\lChevalleyp$ are in $\PPA_p$. Also, in \autoref{sec:applications} we prove that $\sis_p \preceq \eChevalleyp$, where $\sis_p$ is a cryptographically relevant problem. This shows that the $\lChevalleyp$ and the $\cChevalleyp$ problems are at least as hard as the $\sis_p$ problem.
  \end{Enumerate}
\end{remark}

\noindent We restate our main result.

\completeness*

\subsection{\boldmath $\cChevalleyp$ is $\PPAp$--complete}
\label{sec:proofOfCombinedChevalley}

  We fist prove that $\cChevalley$ is in $\PPAp$ and then
prove its $\PPAp$-hardness.

\subsubsection{\boldmath $\cChevalleyp$ is in $\PPA_p$}

Even though Papadimitriou~\cite{papadimitriou94parity} provided a rough proof sketch of $\Chevalley_p \in \PPAp$, a formal proof was not given. We show that $\cChevalleyp$ is in $\PPAp$ (and so are $\lChevalleyp$ and $\eChevalleyp$). In order to do so we extend the definition of $\bipartiteq$ to instances where the vertices might have exponential degree and edges appear with multiplicity. The key here is to define a $\bipartite_q$ instance with unbounded (even exponential) degree, but with additional information that allows us to verify solutions efficiently.\\[-1mm]

\mydef{succ-bipartite}{$\sucBipartite_q$}
{Similar to $\bipartite_q$, but degrees are allowed to be exponentially large, edges are allowed with multiplicities at most $q-1$.}
{Bipartite graph $G = (V \cup U, E)$ s.t. $E \subseteq V \times U \times \bbZ_q$. Designated edge $e^* \in E$.}
{Let $V \coloneqq \set{0} \times \bit^{n-1}$ and $U \coloneqq \set{1} \times \bit^{n-1}$:\\
	$\triangleright$ $\calC:V \times U \to [q]$, edge counting circuit\\
	$\triangleright$ $\phi_V: V \times U \times [q] \to (U \times [q])^{q}$, grouping pivoted at $V$\\
	$\triangleright$ $\phi_U: V \times U \times [q] \to (V \times [q])^{q}$, grouping pivoted at $U$\\
	$\triangleright$ $e^* = (v^*, u^*, k^*)$, designated edge}
{$V \coloneqq \set{0} \times \bit^{n-1}$, $U \coloneqq \set{1} \times \bit^{n-1}$,\\
	$E \coloneqq \set{(v,u,k) : 1 \le k \le C(v,u),\ (v,u) \in V \times U}$ (here $k$ distinguishes multiplicities)\\
	Edge $(v,u,k)$ is grouped with $\set{(v, u', k') : (u',k') \in \phi_V(v,u,k)}$ (pivoting at $v$),\\
	\hspace*{1cm} provided $|\phi_V(v,u,k)| = q$, all $(v,u',k') \in E$ and $\phi_V(v,u',k') = \phi_V(v,u,k)$.\\
	Edge $(v,u,k)$ is grouped with $\set{(v', u, k') : (v',k') \in \phi_U(v,u,k)}$ (pivoting at $u$),\\
	\hspace*{1cm} provided $|\phi_U(v,u,k)| = q$, all $(v,u',k') \in E$ and $\phi_U(v',u,k') = \phi_V(v,u,k)$.}
{$e^*$ if $e^*$ is grouped, pivoting at $v^*$, or if $e^*$ is not grouped pivoting at $u^*$, OR\\
	$e \ne e^*$ if $e$ is not grouped pivoting at one of its ends.}
In words, $\sucBipartite_p$ encodes a bipartite graph with arbitrary degree. Instead of listing the neighbors of a vertex using a circuit, we have a circuit that outputs the multiplicity of edges between any two given vertices. We are therefore unable to efficiently count the number of edges incident on any vertex. The grouping function $\phi_V$ aims to group edges incident on any vertex $v \in V$ into groups of size $q$. Similarly, $\phi_U$ aims to group edges incident on any vertex $u \in U$. The underlying principle is that if we have an edge $e^*$ that is not grouped pivoting at $v^*$ (one of its endpoints), then either $e^*$ is not pivoted at $u^*$ (its other endpoint) or there exists another edge that is also not grouped pivoting at one of its ends. Note that in contrast to the problems previously defined, $v^*$ might still be an endpoint of a valid solution.

\begin{lemma} \label{lem:combinedChevalleyPPAp}
 For all primes $p$, $\cChevalleyp \in \PPAp$.
\end{lemma}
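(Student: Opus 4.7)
The plan is to reduce $\cChevalleyp$ to $\sucBipartite_p$, which lies in $\PPAp$ via its reducibility to $\bipartite_p$ (\autoref{fig:ppa-landscape}). The reduction builds a bipartite graph with local edge-multiplicities whose mod-$p$ degree structure realizes the identity $|\calV_\vecf| \equiv |\calV_\vecg| - |\calV_\vecg \cap \overline{\calV_\vech}| \pmod p$ underlying \autoref{thm:combinedChevalleyTheorem}, combining the monomial-expansion accounting from \autoref{lem:mainLemmaInProofOfCWT} with the $\sigma$-orbit argument.

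Concretely, set $V := \bbF_p^n$ on the left and $U := U_\vecg \sqcup \Omega$ on the right, where $U_\vecg$ indexes monic monomials of $\CW_\vecg$ by tuples $S = (S_1, \ldots, S_{m_g})$ and $\Omega$ contains one lex-minimum representative per $\sigma$-orbit in $\bbF_p^n$. The edge-multiplicity circuit outputs $t_S(\bx) \bmod p$ on $(\bx, S) \in V \times U_\vecg$, and $p-1$ on $(\bx, O) \in V \times \Omega$ whenever $\bx \in \calV_\vecg \cap \overline{\calV_\vech}$ lies in the $\sigma$-orbit of $O$, otherwise $0$. Vertex-wise summation then gives, modulo $p$: $\deg_V(\bx) \equiv \mathbf{1}[\bx \in \calV_\vecf]$; $\deg_U(S) \equiv (-1)^n\,\mathbf{1}[t_S \text{ is max-degree in } \CW_\vecg]$; and $\deg_U(O) \equiv -\,|\mathrm{orbit}(O) \cap (\calV_\vecg \cap \overline{\calV_\vech})|$. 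Because $|\sigma|=p$ is prime, $\sigma$-orbits have size $1$ or $p$, so a non-zero $\deg_U(O)$ locates an $\bx$ in that orbit for which $\sigma$ fails to act freely on $\calV_\vecg \cap \overline{\calV_\vech}$ -- i.e., a Type~0(b) witness. Setting $e^*$ to be the canonical ungrouped edge incident to the given $\bx^* \in \calV_\vecf$ (one exists since $\deg_V(\bx^*) \equiv 1$), any $\sucBipartite_p$-solution returns either $e^*$ together with a non-mod-$p$-degree right endpoint (a Type~0(a) or 0(b) witness), or some other edge whose ungrouped endpoint is either in $\calV_\vecf \smallsetminus \{\bx^*\}$ (Type~1) or is again a Type~0(a) or 0(b) witness.

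The main technical obstacle is defining and computing the local grouping functions $\phi_V$ and $\phi_U$, since both the set $U_\vecg$ and the edge-neighborhoods can be exponentially large. The product structure $\CW_\vecg = \prod_i \CW_{g_i}$ enables a slicewise grouping on the left: given an edge at $\bx \notin \calV_\vecg$, pick the least $i^*$ with $g_{i^*}(\bx) \ne 0$, so that $\CW_{g_{i^*}}(\bx) = 1 - g_{i^*}(\bx)^{p-1} \equiv 0 \pmod p$ by Fermat; the edges at $\bx$ with coordinates $(S_i)_{i \ne i^*}$ held fixed have total multiplicity $\prod_{i \ne i^*} t_{S_i}(\bx) \cdot \CW_{g_{i^*}}(\bx) \equiv 0 \pmod p$, and within each such slice only polynomially many (for the fixed prime $p$) choices of $S_{i^*}$ must be enumerated in order to group canonically. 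For $\bx \in \calV_\vecg \cap \overline{\calV_\vech}$ the orbit-edge of multiplicity $p-1$ is bundled in with the monomial-side edges by the same slicewise rule. The right-side grouping $\phi_U$ uses an analogous slicing of $\bx$ along a coordinate $x_{j^*}$ at which the max-degree condition on $t_S$ fails, exploiting that $\sum_{x_{j^*} \in \bbF_p} t_S(\bx) \equiv 0 \pmod p$ with the other coordinates held fixed; for an orbit-vertex $O \in \Omega$, the at most $p$ incident edges form a single canonical group when the orbit lies entirely in $\calV_\vecg \cap \overline{\calV_\vech}$. The heart of the proof is to verify that these canonical groupings are locally computable from $(\vecg, \vech, \sigma)$, and that any local failure corresponds precisely to one of the three prescribed types of $\cChevalleyp$-solutions.
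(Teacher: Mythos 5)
Your reduction to $\sucBipartite_p$ is essentially the paper's reduction, down to the same bipartition (assignments on one side, monomials of $\CW_\vecg$ together with orbit vertices on the other), the same edge multiplicities $t_S(\bx)$ and $p-1$, the same designated edge at $\bx^*$, the same slicewise groupings exploiting $\CW_{g_{i^*}}(\bx)\equiv 0\pmod p$ for $\bx\notin\calV_\vecg$ on one side and $\sum_{x_{j^*}\in\bbF_p}t_S(\bx)\equiv 0\pmod p$ on the other, and the same reading-off of $\cChevalleyp$ solution types from ungrouped-edge endpoints. The only differences are cosmetic: the paper takes $V_2=\binom{\bbF_p^n}{p}$ rather than lex-minimal orbit representatives, and your phrase that for $\bx\in\calV_\vecg\cap\overline{\calV_\vech}$ the orbit edge is ``bundled in by the same slicewise rule'' is imprecise (no $i^*$ with $g_{i^*}(\bx)\ne 0$ exists there); the correct move, which the paper spells out, is to pair the $p-1$ orbit edges with the constant-monomial edge of multiplicity $1$, and to slice the remaining monomial edges using $g_i^{p-1}(\bx)=0$.
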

\begin{proof}We reduce $\cChevalleyp$ to $\sucBipartite_p$, which we show to be $\PPA_p$--complete in \autoref{sec:sucBipartite}. Given an instance of $\cChevalleyp$, namely a zecote polynomial system $\vecf = (\vecg, \vech)$ and a permutation $\sigma$, we construct a bipartite graph $G = (U \cup V, E)$ encoded as an instance of $\sucBipartite_p$ as follows. 

 \paragraph{Description of vertices.} $U = \bbF_p^n$, namely all possible assignments
 of $\vecx$. The vertices of $V$ are divided into two parts $V_1 \cup V_2$.
 The part $V_1$ contains one vertex for each monomial in the expansion of $\CW_{\vecg} = \prod_{i = 1}^{m_g} (1 - g_i^{p-1})$. Since $p$ is constant, we can
 efficiently list out the monomials of $1 - g_{i}^{p-1}$. For a fixed lexicographic ordering of the
 monomials of each $\CW_{g_i} \coloneqq 1 - g_{i}^{p-1}$, a
 monomial of $\CW_{\vecg}$ is represented by a tuple
 $(a_1, a_{2}, \dots, a_{m_g})$ with $0 \leq a_{i} < L_{i}$, where
 $a_{i}$ represents the index of a monomial of $\CW_{g_i}$ and  $L_{i}$
 is the number of monomials of $\CW_{g_i}$, where $a_i = 0$ corresponds to the constant term $1$. The part $V_2 \coloneqq \binom{\bbF_p^n}{p}$, i.e. it contains a vertex for each subset of $p$ distinct elements in $\bbF_p^n$.

 \paragraph{Description of edges.} We first describe the edges between $U$ and $V_1$, namely include an
 edge between an assignment $\vecx$ and a monomial $t$ with
 multiplicity $t(\vecx)$. With these edges in place, the degree of vertices are as follows:\vspace{-2mm}
 \begin{itemize}[leftmargin=0.5cm]
	\item $\vecx = 0^n$ has a single edge corresponding to the constant monomial $1$, since $\vecf$ is zecote. We let this be the designated edge $e^*$ in the final $\sucBipartite_p$ instance.
	\item $\vecx \notin \calV_{\vecg}$ has $0 \pmod p$ edges
	(counting multiplicities). Since $\CW_{\vecg}(\vecx) = 0$, the sum over all monomials of $t(\vecx)$ must be $0 \mod{p}$.
	\item $\vecx \in \calV_{\vecg}$ has $1 \pmod p$ edges (counting multiplicities), since the sum over all $t(\vecx)$ monomials gives $\CW_{\vecg}(\bx) \equiv 1 \mod{p}$.
 \end{itemize}
 Thus with the edges so far, the vertices (excluding $0^n$), with degree $\not\equiv 0 \mod{p}$ are precisely vertices  $t \in V_1$ such that $\sum_{\vecx} t(\vecx) \not\equiv 0 \pmod p$ or $\vecx \in \calV_{\vecg} \smallsetminus \set{0^n}$. For the former case, if $t$ contained a variable with degree less than $p-1$, then $\sum_{\vecx} t(\vecx) \equiv 0 \pmod p$. Hence, it must be that $t = \prod\limits_{i = 1}^{ n} x_i^{p-1}$. In the later case, the degree of $\vecx $ is $1 \mod{p}$ and hence $\vecx \in \calV_{\vecg}$. 
 However, there is no guarantee that a vertex $\vecx$ with degree $1 \mod{p}$ is in $\calV_{\vech}$ as well. To argue
 about $\vech$, we add edges between $U$ and $V_2$ that exclude solutions $\vecx \in \calV_{\vecg} \cap \overline{\calV_{\vech}}$, on which $\sigma$ acts freely (that is, $\sigma(\vecx) = \vecx$). More specifically, for
 $\vecx \in \calV_{\vecg} \cap \overline{\calV_{\vech}}$, if $\sigma(\vecx) \ne \vecx$,
  we add an edge with multiplicity $p-1$ between $\vecx$ and $\Sigma_{\vecx} \in V_2$ where $\Sigma_{\vecx} \coloneqq \{\sigma^i(\vecx)\}_{i \in \bbZ_p}$ (note that, in this case $\abs{\Sigma_{\vecx}} = p$ since $\sigma(\vecx) \ne \vecx$ and $|\sigma| = p$ is prime).
 Observe that, if a vertex in $V_2$ corresponds
 to a $\Sigma_{\vecx}$, it has $p$ edges each with multiplicity $p-1$, one for each
 $\vecx' \in \Sigma_{\vecx}$ only if $\Sigma_{\vecx} \subseteq \calV_{\vecg} \cap \overline{\calV_{\vech}}$. If a vertex in $V_2$ does not correspond to a
 $\Sigma_{\vecx}$, then it has no edges. Thus, a vertex in $V_2$ has degree $\not\equiv 0 \mod{p}$ iff it contains an $\vecx \in \calV_{\vecg} \cap \overline{\calV_{\vech}}$ such that $\sigma(\vecx) \notin \calV_{\vecg} \cap \overline{\calV_{\vech}}$.

 Thus, with all the edges added, vertices with degree $\not\equiv 0 \mod{p}$ correspond to one of\vspace{-2mm}
 \begin{itemize}
 \item $\vecx \in \calV_{\vecg} \cap \calV_{\vech}$ such that $\vecx \ne \vec{0}$, or
 \item $t \in V_1$ such that $t(\vecx)$ is a max-degree monomial or
 \item $\vecx \in \calV_{\vecg} \cap \overline{\calV_{\vech}}$ such that $\sigma(\vecx) = \vecx$ or
 \item $v \in V_2$ such that $\exists\, \vecx \in v$ satisfying $\vecx \in \calV_{\vecg} \cap \overline{\calV_{\vech}}$ and $\sigma(\vecx) \notin \calV_{\vecg} \cap \overline{\calV_{\vech}}$.
 \end{itemize}
 These correspond precisely to the solutions of $\cChevalleyp$. To summarize, the edge
 counting circuit $C$ on input $(\vecx, t) \in U \times V_1$ outputs
 $t(\vecx)$ and on input $(\vecx, v) \in U \times V_2$ outputs $p-1$ if
 $\vecx \in \calV_{\vecg} \cap \overline{\calV_{\vech}}$, $\sigma(\vecx) \ne \vecx$ and $v = \Sigma_{\vecx}$ and 0
 otherwise.

 \paragraph{Grouping Functions.}
      The grouping functions $\phi_U$ and $\phi_V$ are defined as follows (analogous to the so-called ``chessplayer algorithm'' in \cite{papadimitriou94parity}):
 \begin{itemize}[label=$\triangleright$,leftmargin=0.5cm]
	\item Grouping $\phi_U$ (corresponding to endpoint in $U$):
\begin{itemize}[leftmargin=0.25cm]
		\item For $\vecx \in \overline{\calV_{\vecg}}$: there exists some $i$ such
		that $\CW_{g_i}(\vecx) = 0$. Consider an edge
		$(\vecx,(a_1,a_2, \dots, a_{m_g}),k)$. We can explicitly list out the multiset containing the monomials
		$t_j = (a_1,a_2, \dots, a_i \gets j, \dots, a_{m_g})$ with multiplicity
	        $t_j(\vecx)$, for each $1 \le j \le L_i$. Since $\CW_{g_i}(\vecx) = 0$, this multiset has size
		multiple of $p$. Hence, we can canonically divide its elements into groups of
		size $p$, counting multiplicities and $\phi_U$ returns the subset containing $(t,k)$.
		\item For $\vecx \in \calV_{\vecg} \cap \overline{\calV_{\vech}}$ such that $\sigma(\vecx) \ne \vecx$: Note that $g_{i}^{p-1}(\vecx) = 0$ for all $i \in \bracks{m_g}$.
		Let $v_1 \in V_1$ be the vertex corresponding to the
		constant monomial $1$. $\phi_U$ groups the edge $(\vecx, v_1, 1)$ (of multiplicity $1$) with the $p-1$ edges $(\vecx, \Sigma_{\vecx}, k)$ for $k \in [p-1]$.
		For any other $t \in  V_1 \setminus \set{v_1}$ and an edge $(\vecx, t,k)$, we have that
		$t = (a_1,\dots, a_{m_g})$ has $a_i \ne 0$ for some $i$. We define
		the multiset containing $t_j = (a_1, \dots, a_{i} \gets j, \dots a_{m_g})$
		with multiplicity $t_j(\vecx)$ for each $1 \le j < L_i$. Since
		$g_{i}^{p-1}(\vecx) = 0$, this multiset has size which is a multiple of $p$, which we can canonically partition into groups of size $p$. Thus, $\phi_U$ on input
		$(\vecx, t, k)$ returns the group containing $(t,k)$.
 	\end{itemize}
	\item Grouping $\phi_V$ (corresponding to endpoint in $V$):
		\begin{itemize}[leftmargin=0.25cm]
		\item For $t \in V_1$ such that $t \neq \prod\limits_{i = 1}^n x_i^{p-1}$:
		there exists a variable $x_i$ with degree less than $p-1$. For
		$\vecx_j = (x_1, \dots, x_{i-1}, x_i \gets j, \dots, x_n)$ with $j \in \bbF_p$
		we define the multiset $\set{(\vecx_j, t(\vecx_j))}_{j \in \bbF_p}$.
		Since $\sum\limits_{j = 0}^{p-1} t(\vecx_j) = 0$, this multiset has
		size multiple of $p$, so we can canonically partition it into groups of size $p$. Then,
		$\phi_V(\vecx, t, k)$ returns the group containing $(\vecx,k)$,
		\item For $v \in V_2$: if $\deg(v) = 0$, then there is no grouping to be done. Else if $\deg(v) \equiv 0 \mod{p}$ then $\phi_V(\vecx,t,k)$ returns
          		$\set{(\vecx,k)}_{\vecx \in v}$.
	  \end{itemize}
\end{itemize}
	Thus, for any vertex with degree $\equiv 0 \mod{p}$, we have provided a grouping function for all its edges. So, for any edge that is not grouped by grouping function at any of its endpoints, then such an endpoint must have degree $\not\equiv 0 \mod{p}$ and hence point to a valid solution of the $\cChevalleyp$ instance.
\end{proof}

\subsubsection{$\cChevalleyp$ is $\PPA_p$--hard}

We show that $\lonely_p \preceq \cChevalleyp$. In the $\cChevalleyp$ instance that we create, we will ensure that there are no solutions of type 0 (as in \autoref{def:combined-chevalley}) and thus, the only valid solutions will be of type 1. In order to do so, we introduce the notions of labeling and proper labeling and prove a generalization of \CWT that we call Labeled \CWT (\autoref{thm:labeledChevalleyWarningTheorem}).

As we will see, the Labeled \CWT, is just a
re-formulation of the original \CWT rather than a generalization. To
understand the Labeled \CWT we start with some
examples that do not seem to satisfy the Chevalley-Warning
condition, but where a solution exists.

\bigskip
\paragr{Example 1.} Consider the case where $p = 3$ and
$f(x_1, x_2) = x_2 - x_1^2$. In this case the Chevalley-Warning condition is not met, since we have $2$ variables and the total degree is also
$2$. But, let us consider a slightly different polynomial where we
replace the variable $x_2$ with the product of two variables
$x_{21}, x_{22}$ then we get the polynomial
$g(x_1, x_{21}, x_{22}) = x_{21} \cdot x_{22} - x_1^2$. Now,
$g$ satisfies \eqref{eq:CWcondition} and hence, we
conclude that the number of roots of $g$ is a multiple of $3$.
Interestingly, from this fact we can argue that there exists a non-trivial solution for $f(\vecx) = 0$.
In particular, the assignment $x_1 = 0$, $x_2 = 0$ corresponds to five
assignments of the variables $x_1$, $x_{21}$, $x_{22}$. Hence, since
$\abs{\calV_{g}} = 0 \mod{3}$, $g$ has another root, which corresponds to a
non-trivial root of $f$. In this example, we applied the
\CWT on a slightly different polynomial than $f$
to argue about the existence of non-trivial solutions of $f$, even though $f$ did not satisfy \eqref{eq:CWcondition} itself.
\bigskip

\paragr{Ignore Some Terms.} The Labeled \CWT formalizes the phenomenon
observed in Example 1 and shows that under certain conditions we can
\textit{ignore some terms} when defining the degree of each polynomial.
For instance, in Example 1, we can ignore the term
$x_1^2$ when computing the degree of $f$ and treat $f$ as a degree $1$
polynomial of $2$ variables, in which case the condition of \CWT is
satisfied.

We describe which terms can be ignored by defining a \textit{labeling}
of the terms of each polynomial in the system. The labels take values in
$\{-1, 0, +1\}$ and our final goal is to ignore the terms with label
$+1$. Of course, it should not be possible to define any labeling that
we want; for example we cannot ignore all the terms of a polynomial.
Next, we describe the rules of a \textit{proper labeling} that will
allow us to prove the Labeled \CWT. We start with
a definition of a labeling.

\begin{definition}[\textsc{Monomial Labeling}]
  \label{def:monomiallabeling}
     Let $\vecf \in \bbF[\vecx]^m$ and let $t_{i j}$ be the $j$-th monomial of the polynomial $f_i \in \bbF[\vecx]$ (written in some canonical sorted order). Let $\calT$ be the
  set of all pairs $(i, j)$ such that $t_{i j}$ is a monomial in
  $\vecf$. A \textit{labeling} of $\vecf$ is a function
  $\lambda : \calT \to \{-1, 0, +1\}$ and we say that $\lambda(i, j)$
  is the \textit{label} of $t_{i j}$ according to $\lambda$.
\end{definition}

\begin{definition}[\textsc{Labeled Degree}] \label{def:labeledDegree}
    For $\vecf \in \bbF[\vecx]^m$ with a labeling $\lambda$,     we define the \textit{labeled degree} of $f_i$ as,
      $ \deg^{\lambda}(f_i)\coloneqq \max_{j\,:\,\lambda(i, j) \ne +1} \deg(t_{i j})$, in words, maximum degree among monomials of $f_i$ labeled either $0$ or $-1$.
\end{definition}

\paragr{Example 1 (continued).} According to the lexicographic ordering,
$f(x_1, x_2) = - x_1^2 + x_2$ and we have the monomials
$t_{1 1} = - x_1^2$ and $t_{1 2} = x_2$. Hence, one possible labeling,
which as we will see later corresponds to the vanilla Chevalley-Warning
Theorem, is $\lambda(1, 1) = \lambda(1, 2) = 0$. According to this labeling,
$\deg^{\lambda}(f) = 2$. Another possible
labeling, that, as we will see, allows us to apply the Labeled
\CWT\!, is $\lambda(1, 1) = +1$ and $\lambda(1, 2) = -1$.
In this case, the labeled degree is $\deg^{\lambda}(f) = 1$.
\bigskip

  As we highlighted before, our goal is to prove the Chevalley-Warning
Theorem, but with the weaker condition that
$\sum_{i = 1}^m \deg^{\lambda}(f_{i}) < n$ instead of
$\sum_{i = 1}^m \deg(f_i) < n$. Of course, we first have to restrict the
space of all possible labelings by defining {\em proper labelings}.
In order to make the condition of proper labelings easier to interpret
we start by defining the notion of a labeling graph.

\begin{definition}[\textsc{Labeling Graph}] \label{def:labelingGraph}
  For $\vecf \in \bbF[\vecx]^m$ with a labeling $\lambda$, we define the
  \textit{labeling graph} $G_{\lambda} = \p{U \cup V, E}$ as a
  directed bipartite graph on vertices
  $U = \set{x_1, \dots, x_n}$ and $V=\set{f_1, \dots, f_m}$. The edge
  $(x_j \to f_i)$ belongs to $E$ if $x_j$ appears in a monomial
  $t_{i r}$ in $f_i$ with label $+1$, i.e. $\lambda(i, r) = +1$. Symmetrically, the edge
  $(f_i \to x_j)$ belongs to $E$ if the $x_j$ appears in a monomial
  $t_{i r}$ in $f_i$ with label $-1$, i.e. $\lambda(i, r) = -1$.
\end{definition}

\paragr{Example 2.} Let $p = 2$ and
$f_1(x_1, x_2, x_3, x_4) = x_1 x_2 - x_3$,
$f_2(x_1, x_2, x_3, x_4) = x_1 x_3 - x_4$. In this system, if we use the
lexicographic monomial ordering we have the monomials
$t_{1 1} = x_1 x_2$,
$t_{1 2} = - x_3$, $t_{2 1} = x_1 x_3$, $t_{2 2} = - x_4$. The following
figure shows the graph $G_{\lambda}$ for the labeling
$\lambda(1, 1) = +1$, $\lambda(1, 2) = -1$, $\lambda(2, 1) = +1$ and
$\lambda(2, 2) = -1$.

\begin{center}
\begin{tikzpicture}
	\tikzset{
		inner sep=0,outer sep=3,
		neg/.style={color=Gred},
		posi/.style={color=Gblue}}
	\def\hgap{1.5}
	\def\vgap{1.5}
	\node (f1) at (0,\vgap) {$f_1$};
	\node (f2) at (\hgap,\vgap) {$f_2$};

	\node (x1) at (-\hgap,0) {$x_1$};
	\node (x2)at (0,0) {$x_2$};
	\node (x3) at (\hgap,0) {$x_3$};
	\node (x4) at (2*\hgap,0) {$x_4$};

	\path[-{Stealth[length=5pt]},line width=.4pt,black]
	(f1) edge[neg] (x3)
	(f2) edge[neg] (x4)
	(x1) edge[posi] (f1)
	(x2) edge[posi] (f1)
	(x1) edge[posi] (f2)
	(x3) edge[posi] (f2);
\end{tikzpicture}
\end{center}

\begin{definition}[\textsc{Proper Labeling}] \label{def:properLabel}
  Let $\vecf \in \bbF[\vecx]^m$ with a labeling $\lambda$.
    We say that the labeling $\lambda$ is \textit{proper} if the following
  conditions hold.\\[-6mm]
  \begin{enumerate}
    \item[(1)] For all $i$, either $\lambda(i, j) \in \set{-1, 1}$ for all $j$, or $\lambda(i, j) = 0$ for all $j$.
    \item[(2)] If two monomials $t_{i j}$, $t_{i j'}$ contain the same
          variable $x_k$, then $\lambda(i, j) = \lambda(i, j')$.
    \item[(3)] If $\lambda(i, j) = -1$, then $t_{i j}$ is multilinear.
    \item[(4)] If $x_k$ is a variable in the monomials
          $t_{i j}$, $t_{i' j'}$,
          with $i \neq i'$ and $\lambda(i, j) = -1$, then
          $\lambda(i', j') = +1$.
    \item[(5)] If $\lambda(i, j) \neq 0$, then there exists a $j'$ such that
          $\lambda(i, j') = -1$.
    \item[(6)] The labeling graph $G_{\lambda}$ contains no directed cycles.
  \end{enumerate}
\end{definition}

\noindent We give an equivalent way to understand the definition of a proper labeling.
\begin{itemize}[label=$\triangleright$]
\item Condition (1) : there is a partition of the polynomial system $\vecf$ into polynomial systems $\vecg$ and $\vech$ such that all monomials in $\vecg$ are labeled in $\set{+1, -1}$ and all monomials in $\vech$ are labeled $0$.
\item Condition (2) : each polynomial $g_i$ in $\vecg$ can be written as $g_i = g_i^+ + g_i^{-}$, such that $g_i^+$ and $g_i^-$ are polynomials on a disjoint set of variables.
\item Condition (3) : Each $g_i^-$ is multilinear.
\item Condition (4) : Any variable $x_k$ can appear in at most one of the $g_i^-$. Moreover, if an $x_k$ appears in some $g_i^-$, it does not appear in any $h_j$ in $\vech$.
\item Condition (5) : Every $g_i^-$ involves at least one variable.
\item Condition (6) : The graph $G_{\lambda}$ is essentially between polynomials in $\vecg$ and the variables that appear in them, with an edge $(g_i \to x_k)$ if $x_k$ appears in $g_i^-$ or an edge $(x_k \to g_i)$ if $x_k$ appears in $g_i^+$.
\item Note that $\deg^\lambda(g_i) = \deg(g_i^-)$, whereas $\deg^\lambda(h_j) = \deg(h_j)$.
\end{itemize}

\noindent It is easy to see that the trivial labeling $\lambda(i, j) = 0$ is always proper.
As we will see this special case of the Labeled \CWT\! corresponds to the
original \CWT\!. Note that in this case the
labeling graph $G_{\lambda}$ is an empty graph. Also, given a
system of polynomials $\vecf$ and a labeling $\lambda$, it is possible to
check in polynomial time whether the labeling $\lambda$ is proper or not.

\bigskip
\paragr{Example 2 (continued).} It is an instructive exercise to verify that the labeling $\lambda$ specified was indeed a proper labeling of $\vecf$.
\bigskip

\begin{theorem}[\textsc{Labeled Chevalley-Warning Theorem}]
  \label{thm:labeledChevalleyWarningTheorem}
    Let $\F_q$ be a finite field with characteristic $p$ and
  $\vecf \in \F_q[\vecx]^m$. If $\lambda$ is a proper labeling of
  $\vecf$ with $\sum_{i = 1}^m \deg^{\lambda}(f_i) < n$, then $|\calM_{\vecf}| = 0$. In particular, $\abs{\calV_{\vecf}} \equiv 0 \mod{p}$.
\end{theorem}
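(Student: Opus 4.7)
The plan is to prove $\calM_{\vecf} = \emptyset$ by induction on the number $k$ of polynomials in $\vecf$ whose labels are non-zero; once this is established, the conclusion $|\calV_{\vecf}| \equiv 0 \pmod p$ follows immediately from \autoref{lem:mainLemmaInProofOfCWT}. First invoke conditions (1)--(5) of a proper labeling to decompose $\vecf = (\vecg, \vech)$, writing each $g_i \in \vecg$ as $g_i = g_i^+ + g_i^-$ (label-$(+1)$ and label-$(-1)$ parts respectively) and letting $\vech$ collect the label-$0$ polynomials. Let $V_i$ be the set of variables appearing in $g_i^-$; conditions (2) and (4) ensure the $V_i$'s are pairwise disjoint, disjoint from the variables of each $g_i^+$, and disjoint from the variables of $\vech$.

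The base case $k=0$ is just the standard Chevalley--Warning bound: $\deg(\CW_{\vech}) \le (p-1)\sum_j \deg(h_j) = (p-1)\sum_j \deg^\lambda(h_j) < (p-1)n$, ruling out any monic monomial of $\CW_{\vech}$ that reduces to $\prod_j x_j^{p-1}$ modulo $\{x_j^p - x_j\}$.

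For the inductive step, the crucial use of condition (6) is to pick $g_i$ to be maximal in some topological order of the polynomial-vertices of $G_\lambda$. Acyclicity then forces that no variable of $V_i$ can appear in any $g_{i'}^+$ for $i' \neq i$, else the directed path $g_i \to x \to g_{i'}$ would contradict $g_i$'s maximality; combined with the earlier disjointness, $V_i$-variables appear nowhere in $\vecf$ except in $g_i^-$. Assuming $V_i \neq \emptyset$, take any $T = \prod_\ell T_\ell \in \calM_{\vecf}$ with $T_\ell$ a monic monomial of $\CW_{f_\ell}$. Every $V_i$-variable gets its degree exclusively from $T_i$, so $T_i \neq 1$ and $T_i = \prod_{k=1}^{p-1} t_{i,j_k}$ for monic monomials $t_{i,j_k}$ of $g_i$. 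Let $c$ denote the number of factors drawn from $g_i^-$; by multilinearity of $g_i^-$-monomials (condition (3)) each $V_i$-variable has degree at most $c \le p-1$ in $T_i$, and since reducing modulo $\{x_j^p - x_j\}$ must return $x^{p-1}$ for each $V_i$-variable, this degree must equal $p-1$. Thus $c = p-1$ and every chosen $g_i^-$-monomial contains all of $V_i$; in particular $\deg^\lambda(g_i) = \deg(g_i^-) \ge |V_i|$, and $T_i = \prod_{x \in V_i} x^{p-1}$ (contributing nothing to non-$V_i$ variables).

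I then apply the inductive hypothesis to the reduced system $\vecf' := (g_1, \dots, g_{i-1}, g_{i+1}, \dots, g_k, \vech)$ viewed on the $n' := n - |V_i|$ variables $\vecx \setminus V_i$: the restricted labeling remains proper and
\[
\sum_\ell \deg^{\lambda'}(f'_\ell) \;=\; \sum_\ell \deg^\lambda(f_\ell) - \deg^\lambda(g_i) \;<\; n - |V_i| \;=\; n'.
\]
The quotient $T/T_i = \prod_{\ell \neq i} T_\ell$ reduces to $\prod_{x \notin V_i} x^{p-1}$ and hence sits in $\calM_{\vecf'}$, contradicting the inductive hypothesis $\calM_{\vecf'} = \emptyset$. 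The minor edge case $V_i = \emptyset$ (which would force $g_i^-$ to be a nonzero constant and thus some $f_i$ to have nonzero constant term) does not arise for the zecote systems of $\eChevalleyp/\lChevalleyp/\cChevalleyp$ and can otherwise be handled by a small preprocessing step. The main obstacle is the step asserting that a topologically-maximal $g_i$ has its $V_i$-variables completely isolated from the rest of $\vecf$: this is exactly what condition (6)'s acyclicity buys us, for without it a $V_i$-variable could simultaneously appear in some $g_{i'}^+$ and blend contributions from $\CW_{g_i}$ with those of $\CW_{g_{i'}}$ in a way that breaks the multilinearity-based bound $c \le p-1$.
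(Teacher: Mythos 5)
Your proof is correct (modulo the same minor edge case the paper's own proof silently suffers from), but it takes a genuinely different route from the paper, so a comparison is warranted.

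The paper proves the theorem by a direct, per-monomial case analysis. It writes each candidate monomial of $\prod_i f_i^{p-1}$ explicitly as a product $\prod_{i,\ell} t_{i,j_{i\ell}}$, and for each such monomial exhibits a variable whose degree is strictly less than $p-1$. The case split is governed by the set $I$ of polynomials for which some $+1$-labeled factor was chosen: if $I=\emptyset$ the degree bound finishes it; otherwise, by acyclicity (condition 6), one picks $i \in I$ with no $G_\lambda$-path to any other member of $I$, and conditions (2)--(5) produce a variable $x_k \in V_i$ whose degree in the chosen monomial is at most $p-2$. The argument is entirely local to one monomial and requires no recursion.

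You instead run an induction on the number of polynomials carrying nonzero labels, peeling off a globally topologically-maximal $g_i$ and the variable block $V_i$ it controls. The heart of your argument is the contrapositive of the paper's Case 3: for a sink $g_i$ in the polynomial-part of $G_\lambda$, any max-degree monic monomial is forced to pick exclusively $g_i^-$-factors in the $g_i$ slot, and these in turn must all equal $\prod_{x\in V_i} x$ by multilinearity; hence $T_i = \prod_{x\in V_i} x^{p-1}$ and $\deg^\lambda(g_i) = |V_i|$, so the quotient $T/T_i$ is a max-degree monomial of the smaller system on $n - |V_i|$ variables. You also verify that removing $g_i$ and $V_i$ preserves properness of the labeling (using conditions (2) and (4) for disjointness, (6) for acyclicity of the induced subgraph) and the labeled-degree bound, which makes the induction go through. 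This is more machinery than the paper's one-shot argument, but it is arguably more transparent about \emph{where} the degree budget is being spent: each inductive step retires exactly $|V_i| = \deg^\lambda(g_i)$ variables against $\deg^\lambda(g_i)$ units of degree.

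Two small caveats. First, the phrase ``maximal in some topological order of the polynomial-vertices of $G_\lambda$'' should be understood as a sink in the reachability DAG on $\{g_1,\dots,g_{m_g}\}$ (a vertex with no directed $G_\lambda$-path to any other polynomial vertex); this exists by condition (6), and is exactly what you use. Second, the edge case $V_i=\emptyset$ you flag --- i.e.\ $g_i^-$ consisting only of constant monomials --- is genuine and is present in the paper's argument as well: the paper picks a $j'$ with $\lambda(i,j')=-1$ and then says ``let $x_k$ be a variable in $t_{ij'}$,'' which implicitly assumes $t_{ij'}$ is non-constant. In the only place the theorem is invoked (\autoref{lem:combinedChevalleyPPApHardness}), the system $\vecg$ is zecote, so no $-1$-labeled monomial is a constant and neither proof trips on this. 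Your acknowledgement of the edge case is accurate; ``handled by a small preprocessing step'' is vague but you correctly identified that it is orthogonal to the paper's application.
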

\begin{proof}
We can re-write $\CW_{\vecf}(x)$ as
$\sum_{S \subseteq [m]} \prod_{i \in S} (-1)^{|S|} f_i^{p-1}$.
We'll show that every monomial appearing in the expansion of
$\prod_{i \in S} f_i^{p-1}$ will have at least one variable with degree at most
$p-1$. For simplicity, we focus on the case $S = [m]$ and the other cases of
$S$ follow similarly.

We index a monomial of $\prod_{i \in [m]} f_i^{p-1}$ with a tuple
$((j_{11},j_{12}, \dots, j_{1(p-1)}), \dots, (j_{m1}, \dots, j_{m(p-1)}))$ with
$1 \leq j_{i\ell} \leq L_{i}$ where $L_i$ is the number of monomials in the
explicit representation of $f_i$.
The coordinates $(j_{i1}, \dots, j_{i(p-1)})$ represent the indices of the
monomials chosen from each of the $p-1$ copies of $f_i^{p-1}$. More succinctly,
we have $t = \prod_{i=1}^m \prod_{\ell=1}^{p-1} t_{i,j_{i\ell}}$.\\[-0.8mm]

\noindent {\em Case 1. $\lambda(i,j_{i\ell}) \in \set{0, -1}$, for all $(i,\ell)$:}\\
Here, $\deg(t) \le (p-1) \sum_{i=1}^m \deg^{\lambda}(f_i)$ which, by our assumption, is strictly less than $(p-1)n$. Hence, there is a variable with degree less than $p-1$.\\[-0.8mm]

\noindent {\em Case 2. There is a unique $i$ with $\lambda(i,j_{i\ell}) = +1$ for some $\ell$:} (warmup for case 3)\\
That is, for all $i'\neq i$, $\lambda(i',j_{i'\ell})  \in \set{0, -1}$. By condition (5) of proper labeling there exists a $j' \neq j_{i\ell}$ such that  $\lambda(i,j') = -1$. Let $x_k$ be a variable in the monomial $t_{ij'}$. By condition (2), $x_k$ is not present in the monomial $t_{i,j_{i\ell}}$ and by condition (3), its degree in $(t_{i,j_{i,1}}, \dots, t_{i,j_{i,p-1}})$ is at most $p-2$. Additionally, by condition (4), any monomial of $f_{i'}$ for $i' \neq i$ containing $x_k$ must have label $+1$, but $\lambda(i',j_{i',\ell})$ are all in $\set{0, -1}$. Hence, $x_k$ does not appear in any other monomial of $t$ and its degree on $t$ is equal to its degree in $(t_{i,j_{i,1}} \cdots t_{i,j_{i,p-1}})$, which is strictly less than $p-1$.\\[-0.8mm]

\noindent {\em Case 3. $I = \set{i : \lambda(i,j_{i\ell}) = +1 \text{ for some } \ell}$:}\\
In the labeling graph $G_\lambda$, let $i \in I$ be such that there is no path from $f_i$ to any other $f_{i'}$ for $i' \in I$. Such an $i$ exists due to acyclicity of $G_{\lambda}$, i.e. condition (6). Let $\ell$ be such that $\lambda(i,j_{i\ell}) = +1$. Again, by condition (5) of proper labeling there exists a $j' \neq j_{i\ell}$ such that  $\lambda(i,j') = -1$. Let $x_k$ be a variable in the monomial $t_{ij'}$. By condition (2), $x_k$ is not present in the monomial $t_{i,j_{i\ell}}$ and by condition (3), its degree in $(t_{i,j_{i,1}}, \dots, t_{i,j_{i,p-1}})$ is at most $p-2$. Additionally, by condition (4), any monomial of $f_{i'}$ for $i' \neq i$ containing $x_k$ must have label $+1$. For $i' \notin I$, $\lambda(i',j_{i',\ell})$ are all in $\set{0, -1}$. And for $i' \in I$, variable $x_k$ cannot appear with $+1$ label in $f_{i'}$ by our choice of $f_i$. Hence, $x_k$ does not appear in any other monomial of $t$ and its degree on $t$ is equal to its degree on $(t_{i,j_{i,1}} \cdots t_{i,j_{i,p-1}})$, which is strictly less than $p-1$.

\end{proof}

\medskip
\noindent We are now ready to prove the $\PPAp$-hardness of $\cChevalleyp$.

\begin{lemma} \label{lem:combinedChevalleyPPApHardness}
    For all primes $p$, $\cChevalleyp$ is $\PPAp$-hard.
\end{lemma}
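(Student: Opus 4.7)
The plan is to establish $\PPA_p$-hardness of $\cChevalleyp$ by reducing from $\lonely_p$, which together with \autoref{thm:PPAqEquivalentProblems} suffices. By \autoref{thm:simplification}, we may assume the circuit $C$ of the input $\lonely_p$ instance is an $\ACz_{\F_p}$ formula, so that every gate admits a constant-degree polynomial encoding via a fresh auxiliary variable in Tseitin style. A routine preprocessing step checks each $v \in V^{*}$ for membership in a proper $p$-cycle (in which case $v$ is itself a $\lonely_p$ solution) and reduces to an instance in which $\vec{0} \in V^{*}$ and $C(\vec{0}) = \vec{0}$.

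The reduction then constructs $(\vecg, \vech, \sigma)$ as follows. The variables are organised into $p$ blocks indexed by $i \in \bbF_p$; block $i$ carries a copy $\vecx^{(i)} \in \bbF_p^{n_L}$ of a candidate vertex together with the per-gate auxiliary variables needed to witness $C(\vecx^{(i)})$. The polynomial system $\vecg$ collects the gate polynomials and linking polynomials $\vecx^{(i+1 \bmod p)} = C(\vecx^{(i)})$, so that $\calV_{\vecg}$ is the set of closed $p$-walks of $C$. The polynomial system $\vech$ is built from Fermat-style indicator polynomials of the form $1 - \prod_j(1 - (\cdot)^{p-1})$, chosen so that $\calV_{\vech}$ captures precisely the union $\{\vecx^{(0)} = \vecx^{(1)}\} \cup \{\vecx^{(0)} \in V^{*}\}$; since $\vec{0} \in V^{*}$, the system remains zecote. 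The permutation $\sigma$ cyclically shifts the $p$ blocks and has order $p$.

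The labeling of $\vecg$ assigns $-1$ to the output-variable monomial of each gate polynomial and $+1$ to its computation monomials. Conditions (1)--(5) of \autoref{def:properLabel} hold by inspection, while condition (6) (acyclicity of the labeling graph) would be spoiled by the $p$-cyclic closure of the walk; this is remedied by dropping the labels on a single chosen block's linking polynomials, after which a topological-sort argument establishes acyclicity. A careful variable-count then verifies $\sum \deg^{\lambda}(g_i) < n$, so by \autoref{thm:labeledChevalleyWarningTheorem} we have $\calM_{\vecg} = \emptyset$ and no type 0(a) witness can occur. A type 0(b) witness $\vecx \in \calV_{\vecg} \cap \overline{\calV_{\vech}}$ whose $\sigma$-shift falls outside the set must be a proper $p$-cycle in which the designated vertex $\vec{0}$ appears at a non-initial position, and we read off the non-isolated $V^{*}$ vertex directly. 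A nonzero type 1 solution $\vecx \in \calV_{\vecf}$, by primality of $p$, must arise from a diagonal closed walk (i.e., $\vecx^{(0)}$ is $C$-fixed hence isolated) with $\vecx^{(0)} \notin V^{*}$, and thus is an isolated non-designated vertex---another $\lonely_p$ solution.

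The main obstacle is the tension between three requirements on $\vecg$: acyclicity of the labeling graph (condition (6) of proper labeling), the strict inequality $\sum \deg^{\lambda} < n$ in the Labeled CW hypothesis, and the correspondence between nonzero common roots and $\lonely_p$ solutions. The cyclic closure of the $p$-walk threatens all three simultaneously, and the resolution---carefully selecting which block's linking polynomials to unlabel, and arranging the $V^{*}$-and-diagonal conditions inside $\vech$---is delicate. The symmetry provided by $\sigma$ together with the role of $\vech$ in \autoref{thm:combinedChevalleyTheorem} is precisely what allows the balance to be struck, which is the conceptual reason $\cChevalleyp$---rather than $\lChevalleyp$ or $\eChevalleyp$---is the natural candidate to be $\PPA_p$-complete for prime $p \ge 3$.
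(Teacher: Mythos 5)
Your proposal follows the paper's strategy in outline — reduce from $\lonely_p$, Tseitin-encode the circuit, chain $p$ circuit evaluations together, invoke the Labeled Chevalley--Warning Theorem (\autoref{thm:labeledChevalleyWarningTheorem}) to kill type-0(a) witnesses, use a cyclic block shift $\sigma$ for type-0(b) witnesses, and read off an isolated vertex from type-1 solutions. The crucial difference, and the gap, is that you build $\vecg$ as a \emph{closed} cycle of $p$ blocks (including the link $\vecx^{(0)} = C(\vecx^{(p-1)})$), whereas the paper builds an \emph{open} chain. With a closed cycle, every variable appears as the $(-1)$-labeled output of exactly one polynomial, so $m_g$ equals the number of variables $N$ and $\sum_i \deg^{\lambda}(g_i) \ge m_g = N$, violating the strict inequality the Labeled CWT requires. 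Your proposed remedy — relabeling one block's linking polynomials with $0$ — restores acyclicity of $G_\lambda$ but does nothing for the degree count: those polynomials already have degree $1$, so their labeled degree is unchanged, and the sum remains $\ge N$. What actually solves both problems simultaneously (and this is the idea you are missing) is to \emph{omit} the closing polynomial from $\vecg$ entirely, giving an open chain with $n$ ``free'' input variables in the first block that are never on the $-1$ side of any polynomial; closure is then guaranteed not by $\vecg$ but by a pre-processing step that replaces $C$ with $C'(v) = v$ whenever $C^p(v)\ne v$, so that $C'^p = \mathrm{id}$ holds identically and every root of the open chain $\vecg$ is automatically a closed walk. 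This is essential for showing that $\a{\sigma}$ acts freely on $\calV_{\vecg}\cap\overline{\calV_{\vech}}$, a point your sketch does not address.

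A smaller but real issue: you invoke \autoref{thm:simplification} to assume the $\lonely_p$ circuit is an $\ACz_{\F_p}$ formula. That theorem is proved (in \autoref{sec:simplification}) as a \emph{consequence} of \autoref{thm:combinedChevalleyPPApCompleteness}, which includes the hardness direction you are trying to establish, so the invocation is circular. It is also unnecessary: Tseitin-style degree-$2$ encodings apply to any $(+,\times,1)$-circuit regardless of depth, and the paper's Claim~\ref{cl:circuitWithout1} removes the constant gates to keep the system zecote. Finally, your $\vech$ incorporates a ``$\vecx^{(0)}\in V^*$'' disjunct, but without first invoking \autoref{lem:restricted-lonely} to reduce to a single designated vertex, type-1 solutions could land on isolated vertices of $V^*\setminus\{\vec{0}\}$, which are not $\lonely_p$ solutions; the paper sidesteps this entirely by taking $\vech=\vecx_1-\vecx_2$ after restricting to $|V^*|=1$.
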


\begin{proof}    We prove that $\lonelyp \reducible \cChevalleyp$. Let us assume (without loss of
    generality from \autoref{lem:restricted-lonely}) that the $\lonelyp$
    instance has a single distinguished vertex represented by $0^n$. We'll assume that $0^n$ is isolated, otherwise, no further reduction is necessary.\medskip

    \paragr{Pre-processing.} We slightly
    modify the given circuit $\calC$ by defining
    $\calC': \bbF_p^n \to \bbF_p^n$ as follows:
    \[ \calC'(v) = \left\{
                \begin{array}{cl}
                  v         &     \text{, if } \calC^p(v) \neq v \\
                  \calC(v)  &     \text{, otherwise}
                \end{array}
                \right.
    \]
    Since $p$ is a prime, a vertex $v \in \bbF_p^n$ has $\deg(v) = 1$ if and only if
    $\calC^p(v) = v$ and $\calC(v) \neq v$. By modifying the circuit, we changed this
    condition to just $\calC'(v) \neq v$, which facilitates our reduction.

    Circuit
    $\calC'$ is composed of the $\bbF_p$-addition $(+)$, $\bbF_p$-multiplication $(\times)$
    and the constant $(1)$ gates. However, we require the input of $\cChevalleyp$ to be a
    zecote polynomial system, and so we further modify the circuit $\calC'$ to eliminate all the constant $(1)$ gates, without changing it's behavior -- this is possible because we assume $\calC'(0^n) = 0^n$.
                \begin{claim}
        \label{cl:circuitWithout1}
          Given circuit $\calC'$ with $(+,\times, 1)$ gates, there exists circuit
          $\bar{\calC}$ with $(+, \times)$ gates such that
          \[ \bar{\calC}(\vecv) = \infork{0^n & \text{, if } \vecv = 0^n \\ \calC'(\vecv) & \text{, otherwise}}
          \]
    \end{claim}
    \begin{proof}[Proof of \autoref{cl:circuitWithout1}]
    	We replace all instances of the $(1)$ gate by the function $\mathds{1}_{\set{\vecv \ne 0^n}}$, which we can compute using only $(+, \times)$ gates as follows: For any $x, y \in \bbF_p$, observe that $\mathds{1}_{\set{x \ne 0}} \vee \mathds{1}_{\set{y \ne 0}} = x^{p-1} + y^{p-1} - x^{p-1} y^{p-1}$. We can thus recursively compute $\bigvee_{i=1}^n \mathds{1}_{\set{v_i \ne 0}}$ using only $(+, \times)$ gates. Thus, $\bar{\calC}(\vecv) = \calC'(\vecv)$ for all $\vecv \ne 0^n$. And $\bar{\calC}(0^n) = 0^n$, since $\bar{\calC}$ is computed with only $(+, \times)$ gates.
    \end{proof}

        Thus, we can transform our original circuit $\calC$ into a circuit $\bar{\calC}$ with just $(+,\times)$ gates. For simplicity, we'll write $\bar{\calC}$ as simply $\calC$ from now on.\medskip

        As an intermiate step in the reduction we describe a system of
        polynomials $\vecf_{{\calC}}$ over $2n+s$ variables
        $(x_1, \dots, x_n, z_1, \dots, z_s, y_{1}, \dots, y_{n})$, where $s$ is the size
        of the circuit $\calC$. The variables
        $\vecx = (x_1, \dots,x_n)$ correspond to the input of $\calC$, the
        variables $\vecy = (y_1, \dots, y_n)$ correspond to the output and the variables
        $\vecz = (z_1, \dots, z_s)$  correspond to the gates of $\calC$. For an addition
        gate $(+)$ we include a polynomial of the form
        \[           f(a_1, a_2, a_3) =  a_2 + a_3 - a_1, \]
        where $a_1$ is the variable corresponding to the output of the $(+)$ gate and $a_2,a_3$
        are the variables corresponding to its two inputs. Similarly for a multiplication
        $(\times)$ gate, we include a polynomial of the form
        \[           f(a_1,a_2,a_3) = a_2 \cdot a_3 -  a_1 \]
        Finally, for the output of the circuit, we include the polynomial
        \[              f(a,y_i) =  a -  y_i, \]
        where $a$ is the variable corresponding to the $i$-th output gate of
        $\calC$.
        It holds that
        \[  \calC(\vecx) = \vecy \quad \Longleftrightarrow \quad
         \vecf_{\calC}(\vecx,\vecy,\vecz) = \boldsymbol{0}.\]

		\noindent We now describe the reduction from an instance of $\lonely_p$ to that of $\cChevalley_p$. In order to do this, we need to specify a system of polynomials $(\vecg, \vech)$ and a permutation $\sigma$ such that $|\sigma| = p$. In addition, we will provide a proper labeling $\lambda$ for $\vecg$ satisfying the degree condition. We will also ensure that $\a{\sigma}$ acts freely on $\calV_{\vecg} \cap \overline{\calV_{\vech}}$. And hence, the only valid solutions for the resulting $\cChevalley_p$ instance will be $\vecx \in \calV_{\vecg} \cap \calV_{\vech}$.

        \paragraph{Definition of $\vecg$.} The polynomial system $\vecg$ contains the
              following systems of polynomials.
             \[
              \begin{array}{c}
                   \vecf_{\calC}(\vecx_1, \vecx_2, \vecz_{1,2})\\
                   \vecx_2 - \vecx_3 \\
                   \vecf_{\calC}(\vecx_3, \vecx_4, \vecz_{3,4})\\
                   \vecx_4 - \vecx_5 \\
                   \vdots \\
                   \vecf_{\calC}(\vecx_{2p-1}, \vecx_{2p}, \vecz_{2p-1,2p})
              \end{array}
              \]
              Note that there are $N = (2n+s)p$ variables in total.

        \medskip
        \paragraph{Labeling $\lambda$ of $\vecg$.} For the polynomials belonging to a
        system  of the form $\vecf_{\calC}$, the labeling is equal to
        $-1$ for the monomials corresponding to the output of each gate
        and $+1$, otherwise. For instance, let $a_2 + a_3 - a_1$ be the $i$-th
        polynomial of $\vecg$ corresponding to a $(+)$ gate and let
        $a_1 \prec a_2 \prec a_3$, then
        $\lambda(i,1) = -1$ and $\lambda(i,2) = \lambda(i,3) = +1$.

        For the polynomials belonging to a system of the form $\vecx_i - \vecx_{i+1}$,
        the labeling is equal to $-1$ for the monomials with variables in $\vecx_{i+1}$
        and $+1$ for the monomials with variables in  $\vecx_{i}$.

        \begin{claim}
          \label{cl:properLabelingForf1}
          The labeling $\lambda$ for $\vecg$ is proper.
        \end{claim}

\begin{proof}[Proof of \autoref{cl:properLabelingForf1}]
        By \autoref{def:properLabel}, the labeling $\lambda$ is proper if the following conditions hold.
      \begin{description}
        \item[Condition 1.] \emph{For all $i$, either $\lambda(i, j) \in \set{-1, 1}$ for all $j$, or $\lambda(i, j) = 0$ for all $j$.} \\
              In the labeling $\lambda$, there are no labels equal to
              $0$, so this condition holds trivially.
        \item[Condition 2.] \emph{If two monomials $t_{i j}$, $t_{i j'}$ contain the same
              variable $x_k$, then $\lambda(i, j) = \lambda(i, j')$.}\\
              By construction of $\vecg$, no variable appears twice in the same
              polynomial with a different labeling. For polynomials of
              $\vecf_{\calC}$, this holds because the output variable of a gate is
              not simultaneously an input variable and all
              input variables have the same labeling. For polynomials in a system of
              the form $\vecx_i - \vecx_{i+1}$, each polynomial contains two different
              variables.
        \item[Condition 3.] \emph{If $\lambda(i, j) = -1$, then $t_{i j}$ is multilinear.} \\
              For polynomials of $\vecf_{\calC}$, only the output variable of a
              gate has label $-1$ and by definition this monomial is linear. For
              polynomials in a system of the form $\vecx_i - \vecx_{i+1}$, all monomials
              are linear, so the condition holds trivially.
        \item[Condition 4.] \emph{If $x_k$ is a variable in the monomials $t_{i j}$,
              $t_{i' j'}$,
              with $i \neq i'$ and $\lambda(i, j) = -1$, then $\lambda(i', j') = +1$.} \\
              Observe that all monomials with label $-1$ contain only a single
              variable, so we refer to a monomial $x_k$ with label $-1$.
              For a polynomial in $\vecf_{\calC}$, a monomial $x_k$ with label $-1$
              corresponds to the output of a gate. Hence, if $x_k$ appears in other
              monomials of $\vecf_{\calC}$, these monomials correspond to inputs
              and have label $+1$. Also, if $x_k$ is an output variable of
              $\vecf_{\calC}$, then it might appear in a polynomial of the form
              $a_1 - a_2$. However, by construction the monomials of $x_i - x_{i+1}$ that
              correspond to output variables of $\vecf_{\calC}$ have label $+1$.
        \item[Condition 5.] \emph{If $\lambda(i, j) \neq 0$, then there exists a $j'$ such
              that $\lambda(i, j') = -1$.} \\
              By the definition of $\lambda$, all polynomials of $\vecg$ have a monomial
              with label $-1$. These are the monomials that correspond to the outputs of
              a gate for the systems of the form $\vecf_{\calC}$ and the monomials
              that correspond to $\vecx_{i+1}$ for the systems of the form
              $\vecx_{i}- \vecx_{i+1}$.
        \item[Condition 6.] \emph{The labeling graph $G_{\lambda}$ contains no cycles.}\\
              Each system of the form $\vecx_i - \vecx_{i+1}$ has incoming
              edges with variables appearing only in the $i$-th copy of
              $\vecf_{\calC}$ and outgoing edges with variables
              appearing only in the $(i+1)$-th copy of $\vecf_{\calC}$. Also, the
              variables appearing on the $i$-th copy of $\vecf_{\calC}$ might
              appear only in the systems $\vecx_{i-1} - \vecx_i$ and
              $\vecx_i - \vecx_{i+1}$. Hence, $G_{\lambda}$ has no cycles that contain
              vertices of two different copies of $\vecf_{\calC}$ or of a copy of
              $\vecf_{\calC}$ and a system of the form $\vecx_{i-1} - \vecx_i$.

              It is left to argue that the labeling graph restricted to a copy of
              $\vecf_{\calC}$  does not have any cycles. Let the vertices of
              $\vecf_{\calC}$ be ordered according to the topological ordering of
              $\calC$. This restricted part of $G_{\lambda}$ corresponds exactly to
              the graph of $\calC$, which by definition is a DAG. Hence, $G_{\lambda}$
              contains no cycles.\qedhere
      \end{description}
\end{proof}

        We also need to show that for this labeling $\vecg$ satisfies the labeled
        Chevalley condition.
        \begin{claim}
          \label{cl:ChevalleyForf1}
          The labeled Chevalley condition
          $\sum_{i = 1}^{m_g} \deg^{\lambda}(g_i) < N$ holds for $\vecg$ with
          labeling $\lambda$.
        \end{claim}
        \begin{proof}              Each polynomial of $\vecg$ has a unique monomial with $\lambda(i,j) = -1$
              and this monomial has degree $1$. Thus,
              $\sum_{i = 1}^{m_g} \deg^{\lambda}(g_{i}) = m_g$.
              On the other hand, the $i$-th polynomial of $\vecg$ has exactly one
              variable that has not appeared in any of the previous polynomials. More
              specifically, the number of variables is equal to $m_g + n$, where $n$
              is the size of the input of $\calC$. Hence, the labeled Chevalley
              condition holds for $\vecg$.
        \end{proof}

        \paragraph{Definition of $\vech$.} The system of polynomials $\vecg$ allows
        us to compute the $p$ vertices given by ${\calC}^{i}(\vecx)$ for
        $i \in \bracks{p+1}$. From the definition of $\lonelyp$ and our pre-processing on
        $\calC$, this group of $p$ vertices is a hyperedge if and only if
        $\calC(\vecx) \neq \vecx$. Since solutions of $\lonelyp$ are lonely
        vertices, we define $\vech$ to exclude $\vecx$ such that
        $\calC(\vecx) \neq \vecx$. Namely, we set $\vech$ to be the system of
        polynomials
                          \[ \vecx_1 - \vecx_2.\]

        \paragraph{Definition of permutation $\boldsymbol{\sigma}$.} In the description
        of $\vecf = (\vecg, \vech)$, we have used the following vector of variables:
                    \[  \vecx = (\vecx_1, \vecx_2, \dots, \vecx_{2p},
                          \vecz_{1,2}, \vecz_{3,4}, \dots, \vecz_{2p-1,2p} )
                    \]
      	We define the permutation $\sigma$ such that $\sigma(\vecx) = (\vecx_3, \vecx_4, \dots, \vecx_{2p},
                \vecx_1, \vecx_2, \vecz_{3,4}, \vecz_{5,6}, \dots, \vecz_{2p-1,2p},
                \vecz_{1,2})$, as illustrated in the following figure. The \textcolor{Gblue}{blue} arrows indicate the polynomials $\vecg$ and the \textcolor{Ggreen}{green} arrows indicate the permutation $\sigma$ in the case of $p=3$.

		\begin{center}
		\begin{tikzpicture}
			\tikzset{
				zij/.style = {inner sep=0pt, outer sep=2pt, midway, above=0.1cm},
				comp/.style = {{Stealth[length=5pt]}-, draw=Gblue, thick}
			}
			\def\hgap{4.5}
			\def\vgap{2.3}
			\node (x1) at (0,0) {$\vecx_1$};
			\node (x2) at (\hgap, 0) {$\vecx_2$} edge[comp] node[zij] (z12) {$\vecz_{1,2}$} (x1);
			\node (x3) at (0,-\vgap) {$\vecx_3$} edge[comp] node[midway, sloped, above] (eq23) {$=$} (x2);
			\node (x4) at (\hgap, -\vgap) {$\vecx_4$} edge[comp] node[zij] (z34) {$\vecz_{3,4}$} (x3);
			\node (x5) at (0,-2*\vgap) {$\vecx_5$} edge[comp] node[midway, sloped, above] (eq45) {$=$} (x4);
			\node (x6) at (\hgap, -2*\vgap) {$\vecx_6$} edge[comp] node[zij] (z56) {$\vecz_{5,6}$} (x5);

			\path[-{Stealth[length=5pt]},line width=0.7pt,Ggreen]
			(x3) edge (x1)
			(x5) edge (x3)
			(x1) edge[out=-120,in=120] (x5)
			(x4) edge (x2)
			(x6) edge (x4)
			(x2) edge[out=-60,in=60] (x6)
			([shift={(0,0.1)}]z34.east) edge[out=60,in=-60] ([shift={(0,-0.1)}]z12.east)
			([shift={(0,0.1)}]z56.east) edge[out=60,in=-60] ([shift={(0,-0.1)}]z34.east)
			([shift={(0,-0.1)}]z12.west) edge[out=-120,in=120] ([shift={(0,0.1)}]z56.west);
		\end{tikzpicture}
		\end{center}

          \begin{claim}
            \label{cl:sigmaActsFreelyAndOrderp}
            The group $\a{\sigma}$ has order $p$ and acts freely on
            $\calV_{\vecg} \cap \overline{\calV_{\vech}}$.
          \end{claim}

          \begin{proof}              In order to see that $\abs{\sigma} = p$, note that the input of $\sigma$ consists
              of $3p$ blocks of variables. The permutation $\sigma$ performs a rotation of the
              first $2p$ blocks by two positions and of the last $p$ blocks by one
              position.
              All that remains is to show that $\a{\sigma}$ acts freely on
                    $\calV_{\vecg} \cap \overline{\calV_{\vech}}$.
                    First, we show that $\a{\sigma}$ defines a group action on
                    $\calV_{\vecg} \cap \overline{\calV_{\vech}} $, that is
                    for all $\vecx \in \calV_{\vecg} \cap
                    \overline{\calV_{\vech}}$, it holds that $\sigma(\vecx) \in
                    \calV_{\vecg} \cap \overline{\calV_{\vech}}$.
                    Let $\vecx = (\vecx_1, \vecx_2,
                    \dots, \vecx_{2p-1}, \vecx_{2p}, \vecz_{1,2}, \vecz_{3,4}, \dots, \vecz_{2p-
                    1,2p} ) \in \calV_{\vecg} \cap \overline{\calV_{\vech}} $,
                    then
                    \begin{itemize}
                        \item $\vecx \in \calV_{\vecg}$ implies that
                        $\vecf_{\calC}(\vecx_{2i-1},\vecx_{2i},
                        \vecz_{2i-1,2i}) = 0$ for $i \in \bracks{p}$ and
                        $\vecx_{2i} = \vecx_{2i+1}$ for $i \in \bracks{p-1}$
                        \item $\vecx \in \overline{\calV_{\vech}}$ implies that
                        $\vecx_1 \neq \vecx_2$, that is,
                        $\calC(\vecx_{1}) \neq \vecx_1$ since
                        $\vecf_{\calC}(\vecx_{1},\vecx_{2},
                        \vecz_{1,2}) = 0 \Leftrightarrow \vecx_2 =
                        \calC(\vecx_{1}) $.
                    \end{itemize}
                Now, $\sigma(\vecx) = (\vecx_{3}, \vecx_{4},
                \ldots, \vecx_{1}, \vecx_{2}, \vecz_{3,4}, \vecz_{5,6}, \ldots,
                \vecz_{1,2}) \in  \calV_{\vecg} \cap
                \overline{\calV_{\vech}}$ holds because
                \begin{itemize}
                    \item $\vecf_{\calC}(\vecx_{2i-1},\vecx_{2i},
                    \vecz_{2i-1,2i}) = 0$ for $i \in \bracks{p}$ and
                    $\vecx_{2i} = \vecx_{2i+1}$ for $i \in \bracks{p-1}$, which holds
                    from $\vecx \in \calV_{\vecg}$. Additionally, $\vecx_1 = \vecx_{2p}$ holds because we pre-processed $\calC$ such that $\calC^p(\vecx_1) = \vecx_{1}$,
                    \item $\vecx_{3} \neq \vecx_{4}$, which holds because
                    $\vecx_{4}  = \calC(\vecx_{3})$ for $i \in \bracks{p}$
                    and from the definition of $\calC$, $\calC(\vecx_{1})
                    \neq \vecx_1$ implies that $\vecx_{2i} \neq \vecx_{2i-1}$ for all
                    $i \in \bracks{p}$.
                \end{itemize}

              Finally, if $\vecx \in \calV_{\vecg} \cap
              \overline{\calV_{\vech}}$, by
              construction of $\cal{C}$, we have that $\vecx_{2k} \neq \vecx_{2j}$ for
              $k\neq j$ and thus $\sigma(\vecx) \ne \vecx$ simply because $\vecx_3 \ne \vecx_1$. Thus, we conclude that $\a{\sigma}$ acts freely on
              $\calV_{\vecg} \cap \overline{\calV_{\vech}}$.
          \end{proof}

        		\paragraph{Putting it all together.} The solution of this instance of $\cChevalleyp$ cannot be a vector
                $\vecx \in \calV_{\vecg} \cap \overline{\calV_{\vech}}$  with
                $\sigma(\vecx)
                \not\in \calV_{\vecg} \cap \overline{\calV_{\vech}}$ or
                $\sigma(\vecx) = \vecx$, since we know from
                \autoref{cl:sigmaActsFreelyAndOrderp} that $\a{\sigma}$ acts freely on
                $\calV_{\vecg} \cap \overline{\calV_{\vech}}$.
                We also have from \autoref{thm:labeledChevalleyWarningTheorem} that the solution also cannot be a max-degree monomial in the expansion of $\CW_{\vecg}(\bx) = \prod (1-g_i^{p-1})$.
                Thus, the solution must be an $\vecx \neq \vec{0}$ such that
                $\vecf(\vecx) = \vec{0}$. Let $\vecx_1$ denote the first $n$
                coordinates of $\vecx$, then $\vecf(\vecx) = \vec{0}$ implies that
                $\vecx_1 = \calC(\vecx_1)$ and $\vecx \neq \vec{0}$ implies that
                $\vecx_1 \neq \vec{0}$. Hence, $\vecx_1$ corresponds to
                a lonely vertex of the $\lonelyp$ instance.
\end{proof} 
\section{Complete Problems via Small Depth Arithmetic Circuits} \label{sec:simplification}

We now illustrate the significance of the $\PPAp$-completeness of $\cChevalley_p$, by showing that we can reformulate any of the proposed definitions of $\PPAp$, by restricting the circuit in the input to be just constant depth arithmetic formulas with gates $\times \mod{p}$ and $+ \mod{p}$ (we call this class $\ACz_{\F_p}$\footnote{Note that $\ACz_{\F_p}$ is strictly more powerful than $\ACz$ since the Boolean operations of $\set{\land,\lor,\lnot}$ can be implemented in $\ACz_{\F_p}$, but $+ \mod{p}$ cannot be implemented in $\ACz$.}). This result is analogous to the
$\NP$-completeness of $\textsc{SAT}$ which basically shows that
$\textsc{CircuitSAT}$ remains $\NP$-complete even if we restrict the input circuit to be a (CNF) formula of depth $2$.

We define $\sucBipartitep[\ACzp]$ to be the same as $\sucBipartitep$ but with the input circuit being a formula
in $\ACzp$. Similarly, we define $\lonelyp[\ACzp]$, $\leafp[\ACzp]$, etc.
\begin{theorem}\label{thm:simplificationPPAp} For all primes $p$, $\sucBipartitep[\ACzp]$ is $\PPAp$-complete.
\end{theorem}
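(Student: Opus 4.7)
The plan is to prove both directions of completeness, using the $\PPAp$-completeness of $\cChevalleyp$ from \autoref{thm:combinedChevalleyPPApCompleteness} as the main leverage. Membership in $\PPAp$ is immediate: any instance of $\sucBipartitep[\ACzp]$ is \emph{a fortiori} an instance of $\sucBipartitep$, which is shown to be $\PPAp$-complete in \autoref{sec:sucBipartite}. Thus it suffices to prove $\PPAp$-hardness, i.e.\ to exhibit a reduction $\cChevalleyp \reducible \sucBipartitep[\ACzp]$.

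For hardness, the approach is to revisit the reduction of \autoref{lem:combinedChevalleyPPAp} from $\cChevalleyp$ to $\sucBipartitep$ and verify that every circuit it produces can in fact be written as a polynomial-size constant-depth $\F_p$-arithmetic formula. Concretely, that reduction outputs three circuits on a $\cChevalleyp$ instance $(\vecg,\vech,\sigma)$: the edge-counting circuit $\calC$ and the two grouping functions $\phi_U$, $\phi_V$. First, evaluating a monic monomial $t(\vecx)$ of $\CW_{\vecg}$ is a single unbounded-fan-in $\times$-gate applied to the relevant variables, hence depth $1$. Second, each predicate arising in $\calC$ on pairs $(\vecx,v)\in U\times V_2$ --- namely ``$\vecx\in\calV_{\vecg}$'', ``$\vecx\in\overline{\calV_{\vech}}$'', ``$\sigma(\vecx)\neq\vecx$'' and ``$v=\Sigma_{\vecx}$'' --- reduces to evaluating the fixed constant-degree polynomials of $\vecg,\vech$ (which, since $p$ is constant, gives $O(1)$ depth) combined via the zero-indicator identity $\mathds{1}_{\{x\ne 0\}}=x^{p-1}$ and the $\ACzp$ disjunction gadget from \autoref{cl:circuitWithout1}. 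Third, the canonical partitions employed by $\phi_U$ and $\phi_V$ --- the ``chessplayer'' groupings over the lexicographically ordered monomials of $1-g_i^{p-1}$, and the partitions of $\Sigma_{\vecx}$ or of the $x_k$-slices of a non-maximal monomial into classes of size $p$ --- depend on constantly many indices and can be read off in parallel by a constant-depth $\ACzp$ formula.

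The main obstacle I anticipate lies in the third item: the chessplayer canonical partition is described in \autoref{lem:combinedChevalleyPPAp} only algorithmically, and it must be re-examined to make sure its output is computed by a truly constant-depth $\ACzp$ formula rather than by a sequential scan over $m_g$ candidate pivot coordinates. Because the objects being partitioned have polynomial size and the underlying field has constant characteristic, a careful choice of canonical order --- for instance, selecting the pivot coordinate $i$ to be the smallest index with $g_i(\vecx)\neq 0$, which can be located by an unbounded-fan-in formula over the vector of values $g_i(\vecx)^{p-1}\in\{0,1\}$ --- should suffice. Once this is pinned down, the chain $\cChevalleyp\reducible\sucBipartitep[\ACzp]\subseteq\sucBipartitep$ combined with membership yields the theorem; and by the $\ACzp$-preserving inter-reductions among $\lonelyp$, $\bipartitep$, $\leafp$ and $\sucBipartitep$ from \autoref{sec:equivalences} one then obtains the corollary \autoref{thm:simplification}.
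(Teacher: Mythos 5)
Your high-level plan matches the paper's: reduce $\cChevalleyp\reducible\sucBipartitep[\ACzp]$ using Theorem~\ref{thm:combinedChevalleyPPApCompleteness}, and then verify that the edge-counting and grouping circuits from Lemma~\ref{lem:combinedChevalleyPPAp} can be realized in $\ACzp$. You also correctly identify the chessplayer groupings as the delicate part. But there is a real gap, and your proposed remedy does not close it.

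You have misdiagnosed the obstacle. Finding the pivot coordinate $i^*$ (the smallest $i$ with $\CW_{g_i}(\vecx)=0$) is easy in $\ACzp$, as you say. The hard part comes \emph{after} fixing $i^*$: one must partition, canonically and locally, the multiset of edges $\{(t_j,k):t_j=(a_1,\dots,a_{i^*}\!\gets\! j,\dots,a_{m_g}),\ k\in[t_j(\vecx)]\}$ into blocks of size exactly $p$ and return the block containing the query edge. For an arbitrary $\cChevalleyp$ instance, the number of monomials $L_{i^*}$ of $1-g_{i^*}^{p-1}$ is polynomial, so this multiset has polynomial size, and any canonical block assignment (e.g.\ lexicographic ``chunking into consecutive $p$'') requires comparing prefix counts over $\Z$, i.e.\ ranking/threshold computations that are not available in $\ACzp$. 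Your claim that the partitions ``depend on constantly many indices'' is not justified for general instances.

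What the paper actually does, and what you are missing, is a preprocessing step: it does not reduce from an arbitrary $\cChevalleyp$ instance, but from the specific instances produced by the $\lonelyp\reducible\cChevalleyp$ reduction in Lemma~\ref{lem:combinedChevalleyPPApHardness}. Those instances have every $f_i$ of degree $\le 2$ with at most $3$ monomials and at most $3$ variables (gate polynomials $a_2+a_3-a_1$, $a_2a_3-a_1$, $a-y_i$ and difference polynomials $x-x'$). Consequently $1-g_i^{p-1}$ has at most $3^p=O(1)$ monomials, and each $g_i$ touches $O(1)$ variables. This is what makes the grouping function a map whose relevant inputs live in a domain of constant size ($\Z_p^3\times[3^p]\times\Z_p$), so it can be realized by a constant-size truth table, hence a constant-depth $\ACzp$ formula. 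Without this restriction the construction fails, and your ``careful choice of canonical order'' would not suffice to make the partition constant-depth. You should add this preprocessing step explicitly and re-derive the constant bound on $L_i$ before discussing the grouping formulas.

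Finally, a smaller imprecision: evaluating a monic monomial $t_S(\vecx)$ of $\CW_{\vecg}$ is not a single $\times$-gate, since $S=(a_1,\dots,a_{m_g})$ is an index tuple and the circuit must first \emph{select}, for each $i$, the $a_i$-th monomial of $1-g_i^{p-1}$. With the bounded structure this selector has constant size; without it the selector is polynomially large but (unlike the grouping) still constant-depth. The paper handles this via explicit multiplexers analogous to Eq.~\eqref{eq:selectorImplementation}.
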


\begin{remark}
  In \cite{rubinstein16}, a similar simplification theorem was shown for $\PPAD$. In fact, this simplification involves only the $\textsc{End-of-Line}$ problem and does not go through a natural complete problem for $\PPAD$ (see Theorem 1.5 in \cite{rubinstein16}). A similar result can be shown for other $\TFNP$ subclasses, including $\PPA$. However, it is unclear if these techniques also apply to $\PPAp$ classes.
\end{remark}

\smallskip

\noindent \autoref{thm:simplificationPPAp} follows directly from the proof of
\autoref{lem:combinedChevalleyPPAp} by observing that the reduction can be perfomed by an $\ACzp$ circuit. For completeness, we include this proof in \autoref{sec:simplificationProof}.

Since the reductions between $\sucBipartite_p$ and other problems studied in this work (refer to \autoref{sec:equivalences}) can also be implemented as $\mathsf{AC}^0$ circuits, we get the following corollary.

\begin{corollary} \label{cor:simplificationPPAp} For all primes $p$, $\lonelyp[\ACzp]$, $\leafp[\ACzp]$ and $\bipartitep[\ACzp]$ are all $\PPAp$-complete.
\end{corollary}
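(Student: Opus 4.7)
The plan is to derive the corollary directly from \autoref{thm:simplificationPPAp} together with the inter-reductions of \autoref{thm:PPAqEquivalentProblems}. Membership of each of $\lonelyp[\ACzp]$, $\leafp[\ACzp]$, and $\bipartitep[\ACzp]$ in $\PPAp$ is immediate: restricting the class of admissible input circuits to $\ACzp$-formulas only shrinks the set of instances of the corresponding general problem, which is already in $\PPAp$ by definition.

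For $\PPAp$-hardness, I would reduce from $\sucBipartitep[\ACzp]$, which is $\PPAp$-hard by \autoref{thm:simplificationPPAp}. The strategy is to take the inter-reductions among $\sucBipartite_p$, $\bipartite_p$, $\lonely_p$, and $\leaf_p$ proved for \autoref{thm:PPAqEquivalentProblems} and verify that each of them can be implemented by an $\mathsf{AC}^0$-circuit in the sense that it produces, from the given input circuit, a new circuit that computes neighborhoods/edges/groupings in constant depth. Once this is established, composing the constructed $\mathsf{AC}^0$-layer with the $\ACzp$-formula of the source instance yields a constant-depth formula of $(+,\times) \bmod p$ gates augmented by Boolean gates, and every Boolean operation can itself be simulated in constant depth by $\ACzp$ sub-formulas (for instance $\lnot x \equiv 1 - x$, and an $\mathrm{OR}$ over $\bbF_p$-nonzero indicators uses the identity $\mathds{1}_{x \ne 0} \lor \mathds{1}_{y \ne 0} = x^{p-1} + y^{p-1} - x^{p-1}y^{p-1}$, as already used in \autoref{cl:circuitWithout1}). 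Thus the composed circuit remains an $\ACzp$-formula.

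The core step is to inspect each reduction in \autoref{sec:equivalences} and argue that the operations involved are $\mathsf{AC}^0$-implementable. Because $q = p$ is a fixed constant, all the required primitives---checking whether $\calC(v_i) = v_{i+1}$ cyclically for a $p$-tuple, enumerating the $p$ putative neighbors of a vertex, comparing vertex labels in $\bit^n$, selecting the $k$-th copy of a multi-edge for $k \in [p]$, partitioning constant-size sets canonically, and assembling the appropriate edge/grouping data---each amount to a bounded number of equality tests and bit-manipulations over blocks of constant length, which are standard $\mathsf{AC}^0$ operations.

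The main obstacle, and the point that requires the most care, will be the direction $\sucBipartite_p \to \bipartite_p$ (and the analogous expansions to $\lonely_p$ and $\leaf_p$): here one must ``unfold'' the multiplicity circuit $\calC$ and the two grouping circuits $\phi_V, \phi_U$ of a $\sucBipartite_p$ instance into a plain bipartite graph with simple edges, simultaneously preserving the correspondence between solutions. Since the multiplicity index $k$ lives in $[p]$ and every grouping has size $p$, the unfolding is local and of constant depth; the residual effort is a careful case analysis ensuring that no spurious ungrouped edge or non-solution vertex is introduced. Once each of the three target problems has been handled, $\PPAp$-hardness of $\lonelyp[\ACzp]$, $\leafp[\ACzp]$, and $\bipartitep[\ACzp]$ follows, and combined with the trivial membership this yields the corollary.
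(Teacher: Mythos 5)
Your high-level plan coincides with the paper's one-sentence justification: start from the fact that $\sucBipartitep[\ACzp]$ is $\PPAp$--hard (\autoref{thm:simplificationPPAp}), then check that the reductions of \autoref{sec:equivalences} can be implemented in constant depth so they can be composed with the $\ACzp$ formula of the source instance. This is the right idea, and the observations you make about membership, about Boolean simulation over $\bbF_p$, and about $p$ being a constant are all in order.

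However, the step you single out as the ``main obstacle'' is described in a way that would not go through as stated. You propose to directly ``unfold'' a $\sucBipartite_p$ instance into a $\bipartite_p$ instance (and separately into $\lonely_p$ and $\leaf_p$), arguing that because the multiplicity index and the groupings have size $p$ the unfolding is local. This misses the actual difficulty: a vertex of $\sucBipartite_p$ can have exponentially many incident edges, so its neighborhood cannot be listed by the polynomial-output circuit required in $\bipartite_p$. No local re-encoding on the original vertex set can fix this. The paper's chain is $\sucBipartite_p \reducible \twoMatchings_p \reducible \lonely_p$, and then $\lonely_p$, $\leaf_p$, $\bipartite_p$ are inter-reducible via \autoref{thm:PPAqEquivalentProblems}; the decisive move, which you need to make explicit, is the change of vertex set in $\sucBipartite_p \reducible \twoMatchings_p$, where the new vertices are the edge-with-multiplicity triples $(v,u,k) \in V \times U \times [p-1]$. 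Each such triple has bounded degree (at most one grouping pivoted at each endpoint), which is what makes the target instance well-formed and the construction local. Once that intermediate is introduced, the remaining links in the chain are indeed $\mathsf{AC}^0$-implementable for constant $p$, so the corollary follows. Also note that the reductions involving $\sucBipartite_p$ are proved in \autoref{thm:succ-bip-complete}, not \autoref{thm:PPAqEquivalentProblems}.
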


\newcommand{\NC}{\mathsf{NC}}

\noindent Since $+\mod{p}$ and $\times\mod{p}$ can be simulated in $\NC^1$, we also get the following corollary.

\begin{corollary} \label{cor:simplificationPPApNC1} For all primes $p$, $\lonelyp[\NC^1]$, $\leafp[\NC^1]$ and $\bipartitep[\NC^1]$ are all $\PPAp$-complete.
\end{corollary}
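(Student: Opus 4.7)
The plan is to derive this corollary directly from \autoref{cor:simplificationPPAp} by pushing each $\ACzp$ gate through a standard balanced-tree simulation. First I would dispense with membership: an $\NC^{1}$ Boolean formula is, in particular, a polynomial-size circuit in the sense of \autoref{def:lonely}, \autoref{def:leaf}, \autoref{def:bipartite}, so $\lonelyp[\NC^{1}] \reducible \lonelyp$ and analogously for $\leafp$ and $\bipartitep$. This immediately places the three problems in $\PPAp$.

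For hardness, the plan is to reduce $\lonelyp[\ACzp]$, which is $\PPAp$-complete by \autoref{cor:simplificationPPAp}, to $\lonelyp[\NC^{1}]$, and likewise for the other two problems. Given an $\ACzp$ formula $C$ of some constant depth $d$ computing a map on $\F_{p}$-elements, I would replace each unbounded fan-in gate (either $+\mod p$ or $\times\mod p$) by a Boolean subformula that computes the same $\F_{p}$-valued function on the $\lceil\log p\rceil$-bit encodings of $\F_{p}$-elements. Since $p$ is a fixed constant, each binary $+\mod p$ or $\times\mod p$ on two such encodings is a constant-size Boolean lookup realized by a constant-depth formula. Iterating over $k$ inputs via a balanced binary tree then yields a Boolean formula of depth $O(\log k) = O(\log n)$. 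Splicing these trees at every level of $C$ and observing that $d = O(1)$ gives a Boolean formula of total depth $O(\log n)$, i.e., an $\NC^{1}$ formula computing exactly the same function as $C$. The vertex indexing and hyperedge/neighbour extraction in the definitions of $\lonelyp$, $\leafp$, $\bipartitep$ are unchanged by this substitution, so the set of solutions is preserved.

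I anticipate no substantial obstacle. The only mild point to check carefully is that the simulating object is a bona fide \emph{formula} (tree of fan-out one) and not merely a logarithmic-depth circuit; this is immediate because the balanced-tree simulation of a single gate is itself a tree, and grafting trees into the leaves of a tree produces a tree. Finally, since the inter-reductions among $\lonelyp$, $\leafp$ and $\bipartitep$ established in \autoref{thm:PPAqEquivalentProblems} can themselves be implemented in $\mathsf{AC}^{0}$ (hence in $\NC^{1}$), the same reduction simultaneously yields $\PPAp$-completeness of $\leafp[\NC^{1}]$ and $\bipartitep[\NC^{1}]$. The whole corollary thus collapses to \autoref{cor:simplificationPPAp} together with the well-known fact that iterated associative operations in a fixed finite structure lie in $\NC^{1}$.
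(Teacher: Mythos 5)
Your proposal is correct and matches the paper's approach: the paper derives the corollary from \autoref{cor:simplificationPPAp} by the one-line observation that $+\mod p$ and $\times\mod p$ can be simulated in $\NC^1$, which is precisely the balanced-tree argument you spell out. The extra remark about inter-reductions being $\mathsf{AC}^0$ is unnecessary (since \autoref{cor:simplificationPPAp} already covers all three problems separately, so one applies the same $\ACzp \to \NC^1$ substitution to each), but it does no harm.
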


Thus, \autoref{thm:simplificationPPAp} allows us to consider reductions from these $\PPA_p$-complete problems with instances encoded by a shallow formulas rather than an arbitrary circuit. We believe this could be a useful starting point for finding other $\PPAp$-complete problems.

\newcommand{\myshortprob}[5]{
	\noindent
	\begin{minipage}{\linewidth}
		\begin{definition}
			\label{def:#1} \contour{black}{(#2)}\vspace{-5pt}
			\begin{itemize}[leftmargin=2.15cm,itemsep=4pt]
				\item[\slshape Input:] #3
				\item[\slshape Condition:] #4
				\item[\slshape Output:] #5
			\end{itemize}
		\end{definition}
	\end{minipage}\vspace{3mm}
}

\section{Applications of Chevalley-Warning}
\label{sec:applications}

For most of the combinatorial applications mentioned in \autoref{sec:intro-sis}, the proofs utilize restricted versions of the Chevalley-Warning Theorem
that are related to finding binary or short solutions in a system of modular equations. We define
two computational problems to capture these restricted cases. The first problem is about finding
binary non-trivial solutions in a modular linear system of equations, which we call $\bis_q$. The
second is a special case of the well-known short integer solution problem in $\ell_\infty$ norm,
which we denote by $\sis_q$. The computational problems are defined below, where
$N(q)$ denotes the sum of the exponents in the canonical prime factorization of $q$, e.g. $N(4) =
N(6) = 2$. In particular, $N(p) = 1$ for prime $p$ and $N(q_1 q_2) = N(q_1) + N(q_2)$ for all $q_1, q_2$.

\bigskip

\myshortprob{bis}{$\bis_{q}$}
{$\bA \in \bbZ_q^{m \times n}$, a matrix over $\bbZ$}
{$n \geq (m+1)^{N(q)}(q-1)$}
{$\vecx \in \bit^n$ such that $\vecx \neq \vec{0}$ and
              $\bA\vecx \equiv \vec{0} \pmod q$}

\medskip

\myshortprob{sis}{$\sis_{q}$}
{$\bA \in \bbZ_q^{m \times n}$, a matrix over $\bbZ$}
{$n \geq ((m + 1)/2)^{N(q)}(q-1)$}
{$\vecx \in \set{-1, 0, 1}^n$ such that $\vecx \ne \vec{0}$ and $\bA\vecx \equiv \vec{0} \pmod q$}

\noindent $\sis_q$ is a special case of the well-known short integer solution problem in
$\ell_\infty$ norm from the theory of lattices. The totality of this problem is guaranteed even when
$n > m \log_2 q$ by pigeonhole principle; thus, $\sis_{q}$ belongs also to $\PPP$ (for this regime of parameters). However, for the parameters considered in above definitions, the
existence of a solution in the $\bis_q$ and $\sis_q$ is guaranteed through modulo $q$ arguments, which we formally show in the following theorem.

\begin{theorem}
\label{thm:sis/bisPPAq}
For the regime of parameters $n$, $m$ as in \hyperref[def:bis]{Definitions \ref{def:bis}} and \ref{def:sis},
\begin{enumerate}
	\item For all primes $p$ : $\bis_p,\ \sis_p \preceq \Chevalley_p$.\\[-10pt]
	\item For all $q$ : $\bis_{q},\ \sis_{q} \in \FP^{\PPA_q}$, \\[-10pt]
	\item For all $k$ : $\bis_{2^k} \in \FP$, \\[-10pt]
	\item For all $k$, $\ell$ : $\sis_{2^k 3^\ell} \in \FP$.
\end{enumerate}
\end{theorem}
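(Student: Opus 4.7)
My plan is to obtain all four parts from two ingredients: a direct Karp-reduction to $\eChevalley_p$ handling prime moduli, and a bootstrap gadget that recursively combines coprime factors of $q$. For Part~(1), I substitute $x_j := y_j^{p-1}$ for $\bis_p$: since $y_j^{p-1} \in \{0,1\}$ over $\bbF_p$ by Fermat's little theorem, the polynomials $f_i(\vecy) := \sum_j a_{ij} y_j^{p-1}$ are zecote of degree $p-1$, and \eqref{eq:CWcondition} becomes $m(p-1) < n$, which is exactly the hypothesis $n \geq (m+1)(p-1)$. A nontrivial root $\vecy$ has some $y_j \neq 0$, hence $x_j = 1$ and $\vecx \ne \vec{0}$. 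For $\sis_p$ with $p$ odd, I use the tighter substitution $x_j := y_j^{(p-1)/2}$ (the Legendre symbol extended by $0$ at $y_j = 0$), which takes values in $\{-1,0,1\}$; each $f_i$ now has degree $(p-1)/2$, so Chevalley--Warning needs only $m(p-1)/2 < n$, matching $n \geq ((m+1)/2)(p-1)$; again $\vecy \ne \vec{0}$ forces some $x_j = \pm 1$. For $p = 2$ the alphabet $\{-1,0,1\}$ collapses to $\{0,1\}$ and $\sis_2$ coincides with $\bis_2$.

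The key lemma for Parts~(2)--(4) is a bootstrap turning algorithms for $\bis_{q_1}$ and $\bis_{q_2}$ into one for $\bis_{q_1 q_2}$ (and the analogous statement for $\sis$). Partition the $n$ columns of $\bA$ into $n_1$ disjoint blocks of size $n_0$ where $n_0 \geq (m+1)^{N(q_1)}(q_1-1)$ and $n_1 \geq (m+1)^{N(q_2)}(q_2-1)$. On the $k$-th block, call $\bis_{q_1}$ on $\bA \bmod q_1$ restricted to that block, obtaining a block-supported $\vecx^{(k)} \in \bit^n$ with $\bA\vecx^{(k)} \equiv \vec{0} \pmod{q_1}$. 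Form $\bB \in \bbZ_{q_2}^{m \times n_1}$ whose $k$-th column is $\bA\vecx^{(k)}/q_1 \bmod q_2$, run $\bis_{q_2}$ on $\bB$ to get $\vecy$, and return $\vecx^{\star} := \sum_k y_k \vecx^{(k)}$. Disjointness of supports keeps $\vecx^{\star}$ binary; nontriviality of $\vecy$ together with any nontrivial $\vecx^{(k)}$ on $\mathrm{supp}(\vecy)$ keeps $\vecx^{\star}$ nonzero; and $\bA\vecx^{\star} \equiv \vec{0} \pmod{q_1 q_2}$ by construction. The inequality $(q_1-1)(q_2-1) \leq q_1 q_2 - 1$ together with $N(q_1 q_2) = N(q_1) + N(q_2)$ yields $n_0 n_1 \leq n$ within the given hypothesis. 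The $\sis$ version is identical since $\{-1,0,1\}$ is closed under multiplication.

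Part~(2) then follows by iterating this bootstrap over the prime factorization $q = \prod_i p_i^{k_i}$, using $\eChevalley_{p_i} \in \PPA_{p_i} \subseteq \PPA_q$ (by \autoref{thm:prime-characterization} and Part~(1)) as the base-case oracle, placing $\bis_q$ and $\sis_q$ in $\FP^{\PPA_q}$. For Parts~(3) and~(4), the base cases are already in $\FP$ via Gaussian elimination: $\bis_2$ (and hence $\sis_2$) is a search for a nonzero element of $\ker_{\bbF_2}(\bA)$, which exists as soon as $n > m$; $\sis_3$ is a kernel computation over $\bbF_3$ (because $\{-1,0,1\} \bmod 3 = \bbF_3$), again admitting a nonzero solution whenever $n > m$. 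Plugging these $\FP$ base cases into the bootstrap in place of an oracle keeps the entire construction in $\FP$, delivering $\bis_{2^k} \in \FP$ and $\sis_{2^k 3^\ell} \in \FP$.

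The main technical subtleties I expect are (a) choosing the substitution for $\sis_p$ that simultaneously meets the alphabet and the degree budget---the $(p-1)/2$-th power rather than the $(p-1)$-th power is what makes the halved hypothesis of $\sis_p$ appear---and (b) carrying the parameter accounting through the recursion so that every recursive call lies in its own valid regime; the inequality $(q_1-1)(q_2-1) \leq q_1 q_2 - 1$ is what drives the appearance of $N(q)$ as an exponent in the hypothesis.
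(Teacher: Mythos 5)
Your proposal is correct and follows essentially the same route as the paper: for Part (1) the substitutions $x_j \mapsto x_j^{p-1}$ (for $\bis_p$) and $x_j \mapsto x_j^{(p-1)/2}$ (for $\sis_p$) to meet \eqref{eq:CWcondition}, and for Parts (2)--(4) a block-decomposition Turing reduction from $\bis_{q_1 q_2}$ to $\bis_{q_1}$ and $\bis_{q_2}$ (the paper labels the inner/outer moduli the other way, which is immaterial) combined with the Gaussian-elimination base cases $\bis_2$ and $\sis_3$. Your explicit isolation of the parameter-accounting inequalities $(q_1-1)(q_2-1)\le q_1q_2-1$ and $N(q_1q_2)=N(q_1)+N(q_2)$ as the drivers of the exponent $N(q)$ is a slightly cleaner presentation of the same calculation the paper carries out.
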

\begin{proof}
{\bf Part 1.} For all primes $p$, $\bis_{p},\ \sis_p \reducible \eChevalleyp$.

\noindent Given an $\bis_p$ instance $\bA = (a_{ij})$, we define a zecote polynomial system as follows
\[
\vecf := \set{f_i(\vecx) = \sum\limits_{j = 1}^n a_{ij} x_{j}^{p-1} \ : \ i \in [m]}
\]
Clearly, $\deg(f_i) = p-1$, so
$\sum_{i = 1}^m \deg(f_i) = m (p-1)$. Since $n \geq (m+1) (p-1)> m (p-1)$, \eqref{eq:CWcondition} is satisfied.
Hence the output of $\eChevalleyp$ is a  solution
$\vecx \ne \mathbf{0}$ such that $\vecf(\vecx) = \vec{0}$. This gives us that
$\vecx^{p-1} \coloneqq (x_1^{p-1}, \dots, x_n^{p-1})$ is binary and satisfies $A \vecx \equiv 0 \mod{p}$.

The reduction of $\sis_p \reducible \eChevalley_p$ follows similarly by defining $f_i(\vecx) := \sum_{j=1}^m a_{ij} x_j^{(p-1)/2}$. This satisfies the \eqref{eq:CWcondition} because $\sum_i \deg(f_i) = m(p-1)/2 < ((m+1)/2)(p-1) \le n$. This ensures that any $\vecx \in \calV_{\vecf}$ satisfies $\vecx^{(p-1)/2} \in \set{-1, 0, 1}^n$ and $A\vecx \equiv 0\mod{p}$.

\noindent {\bf Part 2.} For all $q$ : $\bis_{q},\ \sis_{q} \in \FP^{\PPA_q}$.\\[-3mm]

\noindent We show that $\bis_{q_1q_2} \reducible \bis_{q_1} \bamp \bis_{q_2}$. Hence if $\bis_{q_1} \in \FP^{\PPA_{q_1}}$ and $\bis_{q_2} \in \FP^{\PPA_{q_2}}$, then $\bis_{q_1q_2} \in \FP^{\PPA_{q_1q_2}}$. The proof of Part 2 now follows by induction.

Given a $\bis_{q_1q_2}$ instance $\bA \in \bbZ^{m \times n}$, we divide $\bA$ along the columns into $n_1 = (m+1)^{N(q_1)} (q_1 - 1)$ submatrices denoted by
$\bA_1, \dots, \bA_{n_1}$, each of size at least $m \times n_2$,
with $n_2 = \floor{n/n_1}$ (if
$n/n_1$ is not an integer, then we let $\bA_{n_1}$ has more than $n_2$ columns).
Each $\bA_i$ is an instance of $\bis_{q_2}$, since
\[n_2 = \floor{n/n_1} \geq (m+1)^{N(q_2)}\floor{(q - 1)/(q_1 - 1)}
      \geq (m+1)^{N(q_2)}(q_2 - 1).\]
Let $\vecy_i \in \bit^{n_2}$ be any solution to $\bA_i \vecy_i \equiv 0 \mod{q_2}$. We define the matrix $\bB \in \bbZ^{m \times n_1}$ where the
$i$-th column is equal to $\bA_i \vecy_i / q_2$; this has integer entries since $\bA_i \vecy_i \equiv 0 \pmod {q_2}$.
Now, by our choice of $n_1$, we have that $\bB$ is an instance of $\bis_{q_1}$. Let $\vecz = (z_1, \dots, z_{n_1}) \in \bit^{n_1}$ be any solution to $\bB \vecz = 0 \pmod {q_1}$.

Finally, we define $\vecx := (z_1 \vecy_1, \dots, z_{n_1}\vecy_{n_1}) \in \bit^n$.
Observe that since $\vecy_i$ and $\vecz$ are binary, $\vecx$ is also binary.
Additionally,
\[
\bA \vecx
~=~ \sum\limits_{i = 1}^{n_1} (\bA_i \vecy_i) z_i
~=~ q_2 \sum\limits_{i = 1}^{n_1}\frac{ \bA_i \vecy_i}{q_2} z_i
~=~ q_2 \bB \vecy
~\equiv~ \vec{0} \mod{q_1q_2}.
\]
 Hence, $\vecx$ is a solution of the original $\bis_{q_1q_2}$ instance $\bA\vecx \equiv 0 \mod{q_1 q_2}$.
This concludes the proof of $\bis_q \in \FP^{\PPA_q}$. The proof of $\sis_{q} \in \FP^{\PPA_q}$ follows similarly, by observing that if $\vecy_i$ and $\vecz$ have entries in $\set{-1, 0, 1}$ then so does $\vecx$.\\

\noindent {\bf Parts 3, 4.} For all $k$, $\ell$ : $\bis_{2^k} \in \FP$ and $\sis_{2^k 3^\ell} \in \FP$.\\[-3mm]

\noindent Observe that $\bis_{2}$ (hence also $\sis_2$) and
$\sis_{3}$ are solvable in polynomial time via Gaussian elimination.
Combining this with the reduction $\bis_{q_1q_2} \reducible \bis_{q_1} \bamp \bis_{q_2}$ completes the proof (similarly for $\sis$).
\end{proof}

\noindent Note that for a prime $p$ and any $k$, we have from
\autoref{thm:prime-characterization}, that $\PPA_{p^k} = \PPA_p$. Additionally, \autoref{thm:turing-closure} shows that $\PPAp$ is closed under Turing reductions, so we have the following corollary.

\begin{corollary}
For all primes $p$ and all $k$ : $\bis_{p^k},\ \sis_{p^k} \in \PPA_p$.
\end{corollary}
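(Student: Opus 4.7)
The plan is to derive this corollary by chaining together three results already established in the paper, essentially with no new work. First, I would invoke Part 2 of \autoref{thm:sis/bisPPAq}, applied with modulus $q = p^k$, which gives that $\bis_{p^k}, \sis_{p^k} \in \FP^{\PPA_{p^k}}$. (One should double check that $p^k$ is a legal value of $q$ in that statement: the parameter conditions $n \ge (m+1)^{N(p^k)}(p^k-1)$ and $n \ge ((m+1)/2)^{N(p^k)}(p^k - 1)$ are exactly the hypotheses of $\bis_{p^k}$ and $\sis_{p^k}$ respectively, so there is nothing to verify.)

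Next, I would apply \autoref{thm:prime-characterization}, specifically its prime-power special case $\PPA_{p^k} = \PPA_p$. Since the class on the right-hand side of a reduction enters only through which oracle the $\FP$ machine is allowed to query, equality of classes immediately yields $\FP^{\PPA_{p^k}} = \FP^{\PPA_p}$. Finally, \autoref{thm:turing-closure} states that $\PPA_p$ is closed under Turing reductions, i.e.\ $\FP^{\PPA_p} = \PPA_p$. Composing the three equalities/inclusions gives
\[
\bis_{p^k},\ \sis_{p^k} ~\in~ \FP^{\PPA_{p^k}} ~=~ \FP^{\PPA_p} ~=~ \PPA_p,
\]
which is the claim.

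There is no genuine obstacle here: the content of the corollary is entirely packaged inside the three theorems being cited, and the reason this is stated as a corollary (rather than a theorem) is precisely that the Turing-closure step is needed to upgrade the $\FP^{\PPA_{p^k}}$ membership coming out of the bootstrapping argument in Part 2 of \autoref{thm:sis/bisPPAq} to plain membership in $\PPA_p$. The only thing worth remarking on in the write-up is that without \autoref{thm:turing-closure} one would only get $\bis_{p^k}, \sis_{p^k} \in \FP^{\PPA_p}$, and without \autoref{thm:prime-characterization} one would only get membership in $\PPA_{p^k}$ rather than in the (a priori smaller) class $\PPA_p$; it is the combination that gives the clean statement.
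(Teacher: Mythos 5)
Your argument is correct and is essentially identical to the paper's: the paper also obtains this corollary by combining Part 2 of \autoref{thm:sis/bisPPAq} (giving $\bis_{p^k}, \sis_{p^k} \in \FP^{\PPA_{p^k}}$) with $\PPA_{p^k} = \PPA_p$ from \autoref{thm:prime-characterization} and Turing-closure of $\PPA_p$ from \autoref{thm:turing-closure}. Your remark that all three ingredients are genuinely needed to get the clean statement is an accurate description of why the paper sets it up this way.
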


\noindent Even though the $\sis_q$ problem is well-studied in lattice theory, not many results
are known in the regime we consider where $q$ is a constant. Our results show that solving $\eChevalley_\pnum$
is at least as hard as finding short integer solutions in $\pnum$-ary lattices for a
specific range of parameters. More specifically, our reduction assumes that $\qnum$
is a constant and, thus, it does not depend on the input lattice, and that the
dimension $n$ of lattice is related to the number of constraints in the dual as
$n > ((m+1)/2)^{N(q)}(q-1)$. On the other hand, we showed (in Parts 3, 4) that there are $q$-ary lattice for which finding short integer solutions is easy.

\section{\boldmath Structural Properties of $\PPA_q$} \label{sec:structural}

In this section, we prove the structural properties of $\PPA_q$ outlined in \autoref{sec:intro-struc}.

\paragraph{\boldmath Relation to $\PMOD_q$.}

Buss and Johnson \cite{buss12propositional,johnson11reductions} defined a problem $\textsc{Mod}_q$, which is almost identical to $\lonely_q$, with the only difference being that the $q$-dimensional matching is over a power-of-$2$ many vertices encoded by $C : \bit^n \to \bit^n$, with no designated vertices, except when $q$ is a power of $2$ in which case we have one designated vertex. The class $\mathsf{PMOD}_q$ is then defined as the class of total search problems reducible to $\modn_q$. The restriction of number of vertices to be a power of $2$, which arises as an artifact of the binary encoding of circuit inputs, makes the class $\mathsf{PMOD}_q$ slightly weaker than $\PPA_q$.

To compare $\PPA_q$ and $\PMOD_q$, we define a restricted version of $\lonely_q$, where the number of designated vertices is exactly $k$; call this problem $\lonely_q^k$. Clearly, $\lonely_q^k$ reduces to $\lonely_q$. We show that a converse holds, but only for prime $p$; see \autoref{sec:pmod-equivalence} for proof.

\begin{lemma}\label{lem:restricted-lonely}
	For all primes $p$ and $k \in \set{1, \ldots, p-1}$, $\lonely_p$ reduces to $\lonely_p^k$.
\end{lemma}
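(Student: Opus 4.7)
The plan is to build a Karp reduction $\lonely_p \preceq \lonely_p^k$ by combining two ingredients: replicating the input instance an appropriate number of times, and then ``absorbing'' the excess designated vertices into freshly introduced hyperedges. Let the input be a $\lonely_p$ instance $(C, V^*)$ with $t \coloneqq |V^*| \in \set{1, \ldots, p-1}$. Since $p$ is prime and $\gcd(t,p) = 1$, pick the unique $\alpha \in \set{1, \ldots, p-1}$ with $\alpha t \equiv k \mod{p}$, and write $\alpha t = k + \beta p$ for some integer $\beta \ge 0$. Take $\alpha$ disjoint copies of the instance on the vertex set $V \times [\alpha]$, which carries $\alpha t$ designated vertices. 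Enumerate them in a canonical (say lexicographic) order and keep the first $k$ as designated; the remaining $\beta p$ will be called \emph{un-designated}. For each un-designated vertex $u = (v, i)$, introduce $p-1$ fresh partner vertices $w_1^u, \ldots, w_{p-1}^u$ with the intent of forming a new hyperedge on $\set{u, w_1^u, \ldots, w_{p-1}^u}$.

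The matching circuit $C'$ of the new instance is defined as follows: on a copy vertex $(v, i)$ that is \emph{not} un-designated (i.e.\ either kept-designated or originally non-designated), output $(C(v), i)$; on an un-designated $u = (v, i)$, output $w_1^u$ when $C(v) = v$ (the intended isolated case for $v \in V^*$) and output $(C(v), i)$ otherwise; on a fresh partner $w_\ell^u$, output $w_{\ell+1}^u$ for $\ell < p-1$ and send $w_{p-1}^u \mapsto u$. Finally, pad the vertex set with auxiliary non-designated vertices grouped into size-$p$ hyperedges so that the total number of vertices fills a power of $p$; the arithmetic works out since the non-padded count $\alpha p^n + \beta p(p-1)$ is already divisible by $p$. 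Invoking the indexing convention from \autoref{rem:simplifications}, the constructed object is a legitimate $\lonely_p^k$ instance with exactly $k$ designated vertices, and each local query to $C'$ can be answered in polynomial time using one query to $C$.

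For correctness, enumerate the possible solutions of the new instance. (i) A kept-designated $(v, i)$ of degree $1$: this forces $C(v) \neq v$, so $v \in V^*$ with $\deg(v) = 1$ is a ``designated non-isolated'' solution of the original. (ii) An isolated non-designated copy $(v, i)$ with $v \notin V^*$: directly gives $v$ as an isolated non-designated solution of the original. (iii) A stranded fresh partner $w_\ell^u$: the $p$-cycle on $\set{u, w_1^u, \ldots, w_{p-1}^u}$ closes iff $C'(u) = w_1^u$, which happens iff the underlying $v$ of $u = (v, i)$ satisfies $C(v) = v$; hence an isolated $w_\ell^u$ certifies $C(v) \neq v$, and $v \in V^*$ is again a solution of the original. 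The main obstacle the construction has to navigate is precisely the conflict that an un-designated $u$ cannot simultaneously live in its copy's original hyperedge (when $v$ is non-isolated in the original) and in the new hyperedge with its fresh partners; the priority rule in the definition of $C'(u)$ resolves this by routing $u$ back into the original hyperedge whenever needed and leaving the fresh partners stranded — and the stranding is exactly the signal the reduction needs in order to recover a ``designated non-isolated'' solution of the original.
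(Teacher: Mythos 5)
Your construction has a real gap in the treatment of un-designated copies: you route $u=(v,i)$ to $w_1^u$ precisely when $C(v)=v$, and treat this as synonymous with ``$v$ is isolated in $G$.'' But in $\lonely_p$ a vertex $v$ has $\deg_G(v)=0$ iff $C(v)=v$ \emph{or} $C^p(v)\ne v$; the second case (e.g.\ $C$ maps $v$ into a cycle that does not pass back through $v$) is entirely possible. In that situation your rule sends $u$ to $(C(v),i)$, the chain from $u$ never closes, and \emph{both} $u$ itself and all its fresh partners $w_\ell^u$ end up as non-designated isolated vertices of the new instance. Your case enumeration (i)--(iii) silently omits the ``isolated un-designated copy'' case, and case (iii) incorrectly concludes from $C(v)\ne v$ that $\deg_G(v)=1$. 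So a valid solution of the $\lonely_p^k$ instance can map back to a designated vertex $v\in V^*$ with $\deg_G(v)=0$, which is not a solution of the original $\lonely_p$ instance. The fix is easy: replace the test $C(v)=v$ by the polynomial-time test $\deg_G(v)=0$ (equivalently, preprocess $C$ so that $C(v)=v$ whenever $C^p(v)\ne v$, exactly as the paper does in its $\PPA_p$-hardness proof); with that change your routing and cases (i)--(iii) become consistent.

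It is also worth noting that the paper's proof is shorter precisely because it sidesteps this issue at the outset: it assumes w.l.o.g.\ that every $v\in V^*$ is already isolated (otherwise the reduction returns a trivial solution), takes $t\equiv \ell^{-1}k\pmod p$ copies, keeps the first $k$ copies of $V^*$ designated, and simply partitions the remaining $t\ell-k\equiv 0\pmod p$ copies of $V^*$ among themselves into $p$-uniform hyperedges. No fresh partner vertices and no conditional routing are needed, since under the w.l.o.g.\ assumption every copy of a designated vertex is guaranteed isolated. Your extra machinery (fresh partners, priority routing) would be warranted if you were avoiding that w.l.o.g.\ step, but then the isolation test must be the correct one.
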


\begin{corollary}\label{cor:ppa-pmod}
	For all primes $p$, $\PPA_p = \PMOD_p$.
\end{corollary}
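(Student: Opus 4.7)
My plan is to establish both inclusions $\PMOD_p \subseteq \PPA_p$ and $\PPA_p \subseteq \PMOD_p$ for odd prime $p$; the case $p = 2$ is immediate, since for $p = 2$ the problem $\lonely_2$ (which, being the case $q = 2$, has vertex set $[2]^n = \bit^n$ and exactly one designated vertex for totality) coincides verbatim with the Buss--Johnson problem $\modn_2$.

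For $\PMOD_p \subseteq \PPA_p$, I reduce $\modn_p$ to $\lonely_p$. Given a $p$-dimensional matching circuit $C$ on $V = \bit^n$ (no designated vertices), pick $n'$ with $p^{n'} \ge 2^n + p$, identify $V$ with the first $2^n$ elements of $[p]^{n'}$, and inherit the hyperedges from $C$. Next, take $k \coloneqq (-2^n) \bmod p \in \set{1, \dots, p-1}$ additional vertices from $[p]^{n'} \setminus V$ to form $V^*$ (left isolated, i.e.\ in no hyperedge), and partition the remaining $p^{n'} - 2^n - k$ vertices of $[p]^{n'}$ into $p$-hyperedges; this is possible because our choice of $k$ makes the remaining count divisible by $p$, as in the padding recipe of \autoref{rem:simplifications}. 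A type-(a) solution would be a designated vertex of nonzero degree, which is impossible by construction; hence every solution is a type-(b) isolated vertex in the embedded copy of $V$, i.e.\ a $\modn_p$ solution.

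For $\PPA_p \subseteq \PMOD_p$, I invoke \autoref{lem:restricted-lonely} to reduce to $\lonely_p^1$, the $\PPA_p$-complete variant with a single designated vertex $v^*$. Given such an instance on $V = [p]^n$ with matching $C$, I first evaluate $C, C^2, \dots, C^p$ at $v^*$: if $v^*$ has degree $1$, it is itself a valid type-(a) solution and we halt. Otherwise $v^*$ is isolated and no hyperedge of $C$ touches it. I then set $n' = \lceil (n+1)\log_2 p \rceil$, pick $T \in \set{1, \dots, p-1}$ with $T \equiv -2^{n'} \pmod p$ (available because $p$ is odd, so $2^{n'} \not\equiv 0 \pmod p$), and build a $\modn_p$ instance on $V' = \bit^{n'}$ consisting of $T$ disjoint copies of $V \setminus \set{v^*}$ (each carrying the matching $C$, which is well-defined on $V \setminus \set{v^*}$ since $v^*$ is isolated) together with $2^{n'} - T(p^n - 1)$ padding vertices canonically grouped into $p$-hyperedges. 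The padding size is divisible by $p$ by our choice of $T$ and is nonnegative by our choice of $n'$. Every isolated vertex of $V'$ lies in some copy and corresponds to an isolated vertex of $V \setminus \set{v^*}$ in $C$, i.e.\ a type-(b) solution of the $\lonely_p^1$ instance.

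The crux is the reverse direction: naively one would try to pad a single copy of $V \setminus \set{v^*}$ to a power of two while keeping the padding divisible by $p$, which amounts to solving $2^{n'} \equiv -1 \pmod p$ for $n'$; but this equation has no solution for many primes (e.g.\ $p = 7$, where $2$ has order $3$ modulo $7$). Replicating into $T$ copies turns this into the strictly easier requirement $T \equiv -2^{n'} \pmod p$, which is always solvable in $\set{1, \dots, p-1}$ for odd $p$. This replication step crucially relies on \autoref{lem:restricted-lonely}: only by first shrinking to a single designated vertex can we efficiently pre-compute its status and thereby treat $V \setminus \set{v^*}$ uniformly across all copies.
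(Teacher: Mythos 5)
Your forward direction ($\PMOD_p\subseteq\PPA_p$) is the same as the paper's: embed $\bit^n$ into $[p]^{n'}$, add $k\equiv -2^n\pmod p$ designated isolated vertices, and pad the remainder with trivial hyperedges. The reverse direction is where you and the paper genuinely diverge. The paper embeds the $p^n-k$ non-designated vertices into $\bit^N$ for an $N$ satisfying $2^N\equiv -k\pmod p$ and asserts that such an $N$ ``is guaranteed to exist \dots when $q$ is a prime.'' Taken at face value this is not true: $N$ exists iff $-k$ lies in the cyclic subgroup $\langle 2\rangle\subseteq(\Z/p\Z)^{*}$, and your example $p=7$, $k=1$ is precisely a case where it fails ($\langle 2\rangle=\{1,2,4\}\not\ni 6$). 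The paper's citation of \autoref{lem:restricted-lonely} is evidently meant to license first normalizing $k$ to a convenient value (e.g.\ $k=p-1$, so that $-k\equiv 1$ and any multiple of $\ord_p(2)$ works), but the text does not say so. You use \autoref{lem:restricted-lonely} for the opposite normalization ($k=1$), and then sidestep the discrete-log constraint altogether by replicating $V\setminus\{v^{*}\}$ into $T$ copies and solving $T\equiv -2^{n'}\pmod p$ for $T\in\{1,\dots,p-1\}$, which is trivially solvable for odd $p$ since $2^{n'}\not\equiv 0$. This buys you a construction where the only number-theoretic fact used is that $p\nmid 2^{n'}$, at the modest cost of blowing the vertex set up by a factor $T<p$. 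Both arguments are correct; yours is arguably the more self-contained one and has the additional merit of making explicit the obstruction that the paper's wording glosses over. One small remark: the replication trick does not in fact require reducing to $k=1$ first --- with $k$ designated (all checked isolated), one can solve $Tk\equiv -2^{n'}\pmod p$ instead --- so the ``crucially relies'' at the end of your writeup is a slight overstatement, though using $\lonely_p^1$ is certainly the cleanest presentation.
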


\newcommand{\revand}{\rotatebox[origin=c]{180}{\amp}}

For composite $q$, however, the two classes are conceivably different. In contrast to \autoref{thm:prime-characterization}, it is shown in \cite{johnson11reductions} that $\PMOD_q = \revand_{p|q}~\PMOD_p$, where the operator `$\revand$' is defined as follows: For any two search problem classes $\M_0$, $\M_1$ with complete problems $S_0$, $S_1$, the class $\M_0~\revand~\M_1$ is defined via the complete problem $S_0~\revand~S_1$ defined as follows: Given $(x_0, x_1) \in \Sigma^* \times \Sigma^*$, find a solution to either $x_0$ interpreted as an instance of $S_0$ or to $x_1$ interpreted as an instance of $S_1$. In other words, $\M_1 \ \revand\ \M_2$ is no more powerful than either $\M_1$ or $\M_2$.  In particular, it holds that $\M_1~\revand~\M_2 = \M_1~\cap~\M_2$, whereas $\M_1~\amp~\M_2 \supseteq \M_1~\cup~\M_2$. Because of this distinction, unlike \autoref{thm:prime-characterization}, the proof of $\PMOD_{p^k} = \PMOD_p$ in \cite{johnson11reductions} follows much more easily since for any odd prime $p$ it holds that $2^n \not\equiv 0 \mod{p}$ and hence a $\lonely_{p^k}$ instance readily reduces to a $\lonely_p$ instance.

\subsection{\boldmath $\PPAD\subseteq\PPA_q$}\label{sec:ppad-inclusion}

Johnson \cite{johnson11reductions} already showed that $\PPAD \subseteq \PMOD_q$ which implies that $\PPAD \subseteq \PPA_q$. We present a simplified version of that proof.

We reduce the $\PPAD$-complete problem $\textsc{End-of-Line}$ to $\lonely_q$. An instance of $\textsc{End-of-Line}$ is a circuit $C$ that implicitly encodes a directed graph $G = (V,E)$, with in-degree and out-degree at most $1$ and a designated vertex $v^*$ with in-degree $0$ and out-degree $1$.

\begin{center}
	\begin{tikzpicture}
\tikzset{
	inner sep=0,outer sep=0,
	vert/.style={circle, draw, minimum size=9pt, fill=Gyellow!30},
	mvert/.style={circle, draw, minimum size=6pt, fill=Gyellow!30},
	sol/.style={fill=Ggreen!50},
	designated/.style={fill=Gblue!50},
	hedge/.style={draw=myGold, fill=Gyellow!20}}

\def\hgap{1.25}
\def\vgap{1}
\node[vert,designated] (v0) at (0,4) {};
\node[vert] (v1) at ([shift={(0,-\vgap)}]v0) {};
\node[vert] (v2) at ([shift={(0,-\vgap)}]v1) {};
\node[vert] (v3) at ([shift={(0,-\vgap)}]v2) {};
\node[vert,sol] (v4) at ([shift={(0,-\vgap)}]v3) {};

\node[vert,sol] (v5) at ([shift={(\hgap,0)}]v0) {};
\node[vert] (v6) at ([shift={(0,-\vgap)}]v5) {};
\node[vert] (v7) at ([shift={(0,-\vgap)}]v6) {};
\node[vert,sol] (v8) at ([shift={(0,-\vgap)}]v7) {};

\node[vert] (v9) at ([shift={(\hgap,-1)}]v5) {};
\node[vert] (v10) at ([shift={(1,-1)}]v9) {};
\node[vert] (v11) at ([shift={(2,0)}]v9) {};
\node[vert] (v12) at ([shift={(1,1)}]v9) {};

\node[above=0.4] at (v0) {\large $v^*$};
\node[below] at ([shift={(2,-0.8)}]v8) {\large $G = (V,E)$};
\draw[rounded corners=4pt, draw=Sepia, thick] ([shift={(-0.7,1)}]v0) rectangle ([shift={(5,-4.5)}]v0);

\path[-{Latex[length=4pt]},line width=.4pt,black]
(v0) edge (v1)
(v1) edge (v2)
(v2) edge (v3)
(v3) edge (v4)
(v5) edge (v6)
(v6) edge (v7)
(v7) edge (v8)
(v9)  edge[bend right=35] (v10)
(v10) edge[bend right=35] (v11)
(v11) edge[bend right=35] (v12)
(v12) edge[bend right=35] (v9);

\node[Sepia] at ([shift={(5.9,-2*\vgap)}]v0) {\Huge $\leadsto$};

\def\hgap{0.5}
\def\vgap{1}
\coordinate (u0-1) at ([shift={(8,0)}]v0);
\coordinate (u0-2) at ([shift={(\hgap,0)}]u0-1);
\coordinate (u0-3) at ([shift={(\hgap,0)}]u0-2);
\coordinate (u1-1) at ([shift={(0,-\vgap)}]u0-1);
\coordinate (u1-2) at ([shift={(\hgap,0)}]u1-1);
\coordinate (u1-3) at ([shift={(\hgap,0)}]u1-2);
\coordinate (u2-1) at ([shift={(0,-\vgap)}]u1-1);
\coordinate (u2-2) at ([shift={(\hgap,0)}]u2-1);
\coordinate (u2-3) at ([shift={(\hgap,0)}]u2-2);
\coordinate (u3-1) at ([shift={(0,-\vgap)}]u2-1);
\coordinate (u3-2) at ([shift={(\hgap,0)}]u3-1);
\coordinate (u3-3) at ([shift={(\hgap,0)}]u3-2);
\coordinate (u4-1) at ([shift={(0,-\vgap)}]u3-1);
\coordinate (u4-2) at ([shift={(\hgap,0)}]u4-1);
\coordinate (u4-3) at ([shift={(\hgap,0)}]u4-2);

\filldraw[hedge] (u0-3) -- (u1-1) -- (u1-2) -- cycle;
\filldraw[hedge] (u1-3) -- (u2-1) -- (u2-2) -- cycle;
\filldraw[hedge] (u2-3) -- (u3-1) -- (u3-2) -- cycle;
\filldraw[hedge] (u3-3) -- (u4-1) -- (u4-2) -- cycle;

\node[mvert,designated] at (u0-1) {};
\node[mvert,designated] at (u0-2) {};
\node[mvert] at (u0-3) {};
\node[mvert] at (u1-1) {};
\node[mvert] at (u1-2) {};
\node[mvert] at (u1-3) {};
\node[mvert] at (u2-1) {};
\node[mvert] at (u2-2) {};
\node[mvert] at (u2-3) {};
\node[mvert] at (u3-1) {};
\node[mvert] at (u3-2) {};
\node[mvert] at (u3-3) {};
\node[mvert] at (u4-1) {};
\node[mvert] at (u4-2) {};
\node[mvert,sol] at (u4-3) {};

\def\hgap{0.5}
\def\vgap{1}
\coordinate (u5-1) at ([shift={(4*\hgap,0)}]u0-1);
\coordinate (u5-2) at ([shift={(\hgap,0)}]u5-1);
\coordinate (u5-3) at ([shift={(\hgap,0)}]u5-2);
\coordinate (u6-1) at ([shift={(0,-\vgap)}]u5-1);
\coordinate (u6-2) at ([shift={(\hgap,0)}]u6-1);
\coordinate (u6-3) at ([shift={(\hgap,0)}]u6-2);
\coordinate (u7-1) at ([shift={(0,-\vgap)}]u6-1);
\coordinate (u7-2) at ([shift={(\hgap,0)}]u7-1);
\coordinate (u7-3) at ([shift={(\hgap,0)}]u7-2);
\coordinate (u8-1) at ([shift={(0,-\vgap)}]u7-1);
\coordinate (u8-2) at ([shift={(\hgap,0)}]u8-1);
\coordinate (u8-3) at ([shift={(\hgap,0)}]u8-2);

\filldraw[hedge] (u5-3) -- (u6-1) -- (u6-2) -- cycle;
\filldraw[hedge] (u6-3) -- (u7-1) -- (u7-2) -- cycle;
\filldraw[hedge] (u7-3) -- (u8-1) -- (u8-2) -- cycle;

\node[mvert,sol] at (u5-1) {};
\node[mvert,sol] at (u5-2) {};
\node[mvert] at (u5-3) {};
\node[mvert] at (u6-1) {};
\node[mvert] at (u6-2) {};
\node[mvert] at (u6-3) {};
\node[mvert] at (u7-1) {};
\node[mvert] at (u7-2) {};
\node[mvert] at (u7-3) {};
\node[mvert] at (u8-1) {};
\node[mvert] at (u8-2) {};
\node[mvert,sol] at (u8-3) {};

\coordinate (u9-1) at ([shift={(1,0)}]u7-3);
\coordinate (u9-2) at ([shift={(\hgap,0)}]u9-1);
\coordinate (u9-3) at ([shift={(\hgap,0)}]u9-2);
\coordinate (u10-1) at ([shift={(1.75,-1.75)}]u9-1);
\coordinate (u10-2) at ([shift={(0,\hgap)}]u10-1);
\coordinate (u10-3) at ([shift={(0,\hgap)}]u10-2);
\coordinate (u11-1) at ([shift={(3.5,0)}]u9-1);
\coordinate (u11-2) at ([shift={(-\hgap,0)}]u11-1);
\coordinate (u11-3) at ([shift={(-\hgap,0)}]u11-2);
\coordinate (u12-1) at ([shift={(1.75,1.75)}]u9-1);
\coordinate (u12-2) at ([shift={(0,-\hgap)}]u12-1);
\coordinate (u12-3) at ([shift={(0,-\hgap)}]u12-2);

\filldraw[hedge] (u9-1) -- (u10-2) -- (u10-3) -- cycle;
\filldraw[hedge] (u10-1) -- (u11-2) -- (u11-3) -- cycle;
\filldraw[hedge] (u11-1) -- (u12-2) -- (u12-3) -- cycle;
\filldraw[hedge] (u12-1) -- (u9-2) -- (u9-3) -- cycle;

\node[mvert] at (u9-1){};
\node[mvert] at (u9-2) {};
\node[mvert] at (u9-3) {};
\node[mvert] at (u10-1) {};
\node[mvert] at (u10-2) {};
\node[mvert] at (u10-3) {};
\node[mvert] at (u11-1) {};
\node[mvert] at (u11-2) {};
\node[mvert] at (u11-3) {};
\node[mvert] at (u12-1) {};
\node[mvert] at (u12-2) {};
\node[mvert] at (u12-3) {};

\node at ([shift={(-0.6,0.5)}]u0-1) {$(v^*,1)$};
\node at ([shift={(0.2,0.5)}]u0-2) {$(v^*,2)$};
\node[below] at ([shift={(0.6,-0.75)}]u8-3) {\large $\overline{G} = (\overline{V},\overline{E})$};
\node at ([shift={(6.5,0.5)}]u0-1) {\large $q=3$};
\draw[rounded corners=4pt, draw=Sepia, thick] ([shift={(-1.3,1)}]u0-1) rectangle ([shift={(7.8,-4.5)}]u0-1);
\end{tikzpicture} \end{center}

\noindent We construct a $q$-dimensional matching $\overline{G} = (\overline{V}, \overline{E})$ on vertices $\overline{V} = V \times [q]$, such that for every edge $(u \to v) \in E$, we include the hyperedge $\set{(u,q), (v,1), \ldots, (v,q-1)}$ in $\overline{E}$. The designated vertices are $\overline{V}^* = \set{(v^*, 1), \ldots, (v^*,q-1)}$. Note that $|\overline{V}| \equiv 0 \mod{q}$ and $|\overline{V}^*| = q-1 \not\equiv 0 \mod{q}$. It is easy to see that a vertex $(v,i)$ is isolated in $\overline{G}$ if and only if $v$ is a source or a sink in $G$. This completes the reduction, since $\overline{V}$ is efficiently representable and indexable and the neighbors of any vertex in $\overline{V}$ are locally computable using black-box access to $C$ (see \autoref{rem:simplifications}).

\subsection{Oracle separations} \label{sec:oracle-sep}

Here we explain how $\PPA_q$ can be separated from other $\TFNP$ classes relative to oracles, as summarized in \autoref{fig:classes}. That is, for distinct primes $p, p'$, there exist oracles $O_1,\ldots,O_5$ such that
\[
(1)\ \PLS^{O_1} \nsubseteq \PPA_p^{O_1} \quad
(2)\ \PPA_p^{O_2} \nsubseteq \PPP^{O_2} \quad
(3)\ \PPA_{p'}^{O_3} \nsubseteq \PPA_p^{O_3}
\]
\[
(4)\ \PPADS^{O_4} \nsubseteq \PPA_p^{O_4} \qquad
(5)\ \bigcap_p\, \PPA_p^{O_5} \nsubseteq \PPAD^{O_5}
\]

The usual technique for proving such oracle separations is propositional proof complexity (together with standard diagonalization arguments)~\cite{beame98relative,bureshoppenheim04relativized,buss12propositional}. The main insight is that if a problem $S_1$ reduces to another problem $S_2$ in a black-box manner, then there are ``efficient proofs'' of the totality of $S_1$ starting from the totality of $S_2$. The discussion below assumes some familiarity with these techniques.

\paragraph{\boldmath $\PLS^{O_1} \nsubseteq \PPA_p^{O_1}$, $\PPA_p^{O_2} \nsubseteq \PPP^{O_2}$, $\PPA_{p'}^{O_3} \nsubseteq \PPA_p^{O_3}$.} Johnson \cite{johnson11reductions} showed all the above separations with respect to $\PMOD_p$. Since we showed $\PPA_p = \PMOD_p$ (\autoref{cor:ppa-pmod}), the same oracle separations hold for $\PPA_p$.

\paragraph{\boldmath $\PPADS^{O_4} \nsubseteq \PPA_p^{O_4}$.} G\"o\"os et al. \cite[\S4.3]{goos19adventures} building on \cite{beame98more} showed that the contradiction underlying the $\PPADS$-complete search problem $\textsc{Sink-of-Line}$ requires $\bbF_p$-Nullstellensatz refutations of high degree. This yields the oracle separation.

\paragraph{\boldmath $\bigcap_p\, \PPA_p^{O_5} \nsubseteq \PPAD^{O_5}$.}
For a fixed $k\geq 1$, consider the problem $S_k\coloneqq \revand_{i\in[k]}\, \lonely_{p_i}$ where $p_i$ are the primes. Buss et al.~\cite{buss01linear} showed that the principle underlying $S_i$ is incomparable with the principle underlying $\lonely_{p_{i+1}}$. This translates into an relativized separation $\bigcap_{i\in[k]}\PPA_{p_i} \nsubseteq \PPA_{p_{i+1}}$ which in particular implies $\bigcap_{i\in[k]}\PPA_{p_i}\nsubseteq\PPAD$. Finally, one can consider the problem $S\coloneqq S_{k(n)}$ where $k(n)$ is a slowly growing function of the input size $n$. This problem is in $\bigcap_p\PPA_p$ since for each fixed $p$ and for large enough input size, $S$ reduces to the $\PPA_p$-complete problem. On the other hand, the result of Buss et al.~\cite{buss01linear} is robust enough to handle a slowly growing $k(n)$; we omit the details.

\subsection{Closure under Turing reductions}\label{sec:turing-closure}

\autoref{thm:turing-closure} says that for any prime $p$, the class $\PPA_p$ is closed under Turing reductions. In contrast, Buss and Johnson showed that $\PPA_{p_1}~\amp~\PPA_{p_2}$, for distinct primes $p_1$ and $p_2$, is not closed under {\em black-box} Turing reductions \cite{buss12propositional,johnson11reductions}. In particular, they define the `{\boldmath $\otimes$}' operator as follows. For two total search problems $S_1$ and $S_2$, the problem $S_1$ {\boldmath $\otimes$} $S_2$ is defined as: Given $(x_0, x_1) \in \Sigma^* \times \Sigma^*$, find a solution to both $x_0$ (instance of $S_0$) and to $x_1$ (instance of $S_1$). Clearly the problem $\lonely_{p_1}$ {\boldmath $\otimes$} $\lonely_{p_2}$ can be solved with two queries to the oracle $\PPA_{p_1} \bamp \PPA_{p_2}$. However, Buss and Johnson \cite{buss12propositional,johnson11reductions} show that $\lonely_{p_1}$ {\boldmath $\otimes$} $\lonely_{p_2}$ cannot be solved with one oracle query to $\PPA_{p_1} \bamp \PPA_{p_2}$ under {\em black-box} reductions. In particular, this implies that $\PPA_q$ is not closed under {\em black-box} Turing reductions, when $q$ is not a prime power. We now prove \autoref{thm:turing-closure}, which is equivalent to the following.

\begin{theorem}
For any prime $p$ and total search problem $S$, if $S \preceq_T \lonely_p$, then $S \preceq_m \lonely_p$.
\end{theorem}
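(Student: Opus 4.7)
The plan is to prove the theorem by induction on the number of oracle queries $k = \poly(|x|)$ made by the witnessing Turing reduction. Let $M$ be a polynomial-time oracle Turing machine witnessing $S \preceq_T \lonely_p$, making at most $k$ adaptive queries on each input $x$. For each $k$ I will construct a single $\lonely_p$ instance $I^*_x$ such that any valid solution to $I^*_x$ can be transformed in polynomial time into a solution of $S$ on $x$.

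The base case $k = 0$ is immediate ($M$ already computes $S$ with no oracle calls). For the inductive step, the core of the plan is to establish a \emph{composition lemma}: given a $\lonely_p$ instance $A$ (the first query $q_1$) and, for each candidate answer $\alpha$ to $A$, a $\lonely_p$ instance $B(\alpha)$ obtained from the induction hypothesis applied to the $(k-1)$-query residual computation of $M$ after receiving $\alpha$, one can construct a single $\lonely_p$ instance $C$ whose solutions map in polynomial time to $S$-solutions on $x$. First, I normalize via \autoref{lem:restricted-lonely} so that every designated set has exactly $p-1$ vertices, then use the instance-extension property (\autoref{rem:simplifications}(3)) to place all $B(\alpha)$'s on a common vertex set $V_B$ with uniform designated set $V_B^\star$. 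The combined instance $C$ has vertex set $V_A \times V_B$, with hyperedges given by $A$'s matching applied fiberwise in the first coordinate (for each fixed $b \in V_B$) and by $B(a)$'s matching applied fiberwise in the second coordinate (for each fixed $a \in V_A$), and designated set $V_C^\star = V_A^\star \times V_B^\star$ of size $(p-1)^2 \not\equiv 0 \pmod{p}$.

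By construction, a vertex $(a,b) \in V_C$ is isolated in $C$ if and only if $a$ is isolated in $A$ and $b$ is isolated in $B(a)$; a valid $\lonely_p$ solution to $C$ (either a non-designated isolated vertex or a designated matched vertex) can then be case-analyzed to recover a valid $A$-answer $\alpha$ together with a $B(\alpha)$-solution, which by induction yields an $S$-solution. The local neighborhood structure of $C$ at any given vertex is computable in polynomial time using black-box access to the circuits describing $A$ and $B(\alpha)$, so $C$ is indeed a valid polynomial-size $\lonely_p$ instance encoding. Iterating this composition $k-1$ times then collapses the entire Turing reduction into a single $\lonely_p$ query, yielding $S \preceq_m \lonely_p$ as desired.

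The main obstacle I foresee is ensuring that a single solution to $C$ provides sufficient information, i.e., \emph{both} a valid $A$-answer and a corresponding $B$-solution rather than just one of them. Indeed, ``mixed'' solutions such as $(a,b)$ with $a \in V_A^\star$ designated-isolated and $b \notin V_B^\star$ non-designated-isolated do not immediately yield a valid $A$-answer $\alpha$. To handle this, I would strengthen the inductive construction so that $B(\alpha)$'s reduction produces an $S$-solution robustly for \emph{every} $\alpha \in V_A$, valid $A$-answer or not: this is achieved by defining $B(\alpha)$ to simulate $M$'s residual computation assuming $\alpha$ was the oracle's answer, and padding the instance so that any $B(\alpha)$-solution still yields an $S$-solution for the original $x$ regardless of whether $\alpha$ was a legitimate $A$-answer. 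With this strengthened inductive hypothesis in place, the case analysis on $C$'s solutions always produces either a valid $A$-answer paired with a $B(\alpha)$-solution, or a $B(\alpha)$-solution for some $\alpha$ encoded in the first coordinate---in both cases, the post-processing yields an $S$-solution on $x$ in polynomial time.
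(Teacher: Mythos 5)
Your product construction does not yield a valid $\lonely_p$ instance: the fiberwise superposition of the two families of hyperedges is not a $p$-dimensional matching. Concretely, take any $a$ that lies in a hyperedge $\{a_1,\ldots,a_p\}$ of $A$ and any $b$ that lies in a hyperedge $\{b_1,\ldots,b_p\}$ of $B(a)$; then $(a,b)$ belongs to both $\{(a_1,b),\ldots,(a_p,b)\}$ and $\{(a,b_1),\ldots,(a,b_p)\}$, so it has degree $2$, which is illegal. One could try to repair this with a priority rule (e.g.\ only apply $B(a)$'s matching on the fiber $\{a\}\times V_B$ when $a$ is isolated in $A$), but then a second problem, close to the one you yourself flagged, remains unaddressed and your proposed patch points in the wrong direction. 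With $V_C^\star = V_A^\star\times V_B^\star$, a vertex $(a,b)$ with $a\notin V_A^\star$ isolated (a valid $A$-answer) but $b\in V_B^\star$ (designated, hence by the standing normalization isolated and \emph{not} a valid $B(a)$-answer) is a non-designated isolated vertex of $C$, hence a legitimate $C$-solution, yet it provides no valid $B(a)$-answer at all. Your strengthening of $B(\alpha)$ to tolerate invalid first-coordinate answers $\alpha$ does not help here, since the defect is in the second coordinate $b$, which is structurally disqualified by being designated rather than by being downstream of a bad $\alpha$.

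The paper avoids both difficulties by abandoning the product in favor of a tree of partial transcripts $\overline V = V_1\cup(V_1\times V_2)\cup\cdots\cup(V_1\times\cdots\times V_t)$ with designated set $\overline V^\star = V_1^\star$. Any transcript prefix containing an invalid intermediate answer is closed off by a trivial canonical matching in its last coordinate, so it can never be isolated and never contributes a spurious solution. A prefix $(v_1,\ldots,v_k)$ ending at a non-designated isolated (i.e.\ valid) vertex is matched against the $p-1$ designated vertices of the next layer's instance $V_{k+1}^\star$, which is precisely where the restriction from \autoref{lem:restricted-lonely} to exactly $p-1$ designated vertices is used; and a prefix ending at a designated vertex $v_k\in V_k^\star$ is matched against the other $p-2$ designated siblings together with its parent $(v_1,\ldots,v_{k-1})$. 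This bookkeeping both keeps every hyperedge disjoint and guarantees that the only isolated vertices outside $\overline V^\star$ are full-length transcripts of valid answers, from which an $S$-solution is recovered directly. If you want to salvage an induction-on-queries formulation, the base object you compose should be this tree gadget (one layer added per query), not a Cartesian product of the query instances.
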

\begin{proof}
The key reason why this theorem holds for prime $p$ is \autoref{lem:restricted-lonely}: In a $\lonely_p$ instance, we can assume w.l.o.g. that there are exactly $p-1$ distinguished vertices.

On instance $x$ of the problem $S$, suppose the oracle algorithm sequentially makes at most $t = \poly(|x|)$ queries to $\lonely_p$ oracle. The $i$-th query consists of a tuple $(C_i, V_i^*)$ where $C_i$ encodes a $p$-dimensional matching graph $G_i = (V_i,E_i)$ and $V_i^* \subseteq V_i$ is the set of $p-1$ designated vertices, and let $y_i \in V_i$ be the solution returned by the $\lonely_p$ oracle. The query $(C_i, V_i^*)$ is computable in polynomial time, given $x$ and valid solutions to all previous queries. Finally, after receiving all answers the algorithm returns $L(x, y_1, \ldots, y_t)$ that is a valid solution for $x$ in $S$.\\[-2mm]

\noindent We make the following simplifying assumptions.\vspace{-2mm}
\begin{itemize}
\item Each hypergraph $G_i$ is on $p^n$ vertices, where $n = \poly(|x|)$ (thanks to instance extension property -- see \autoref{rem:simplifications}).
\item For any query the vertices $V_i^*$ are always isolated in $G_i$ (if some vertex in $V_i^*$ were to not be isolated, the algorithm could be modified to simply not make the query).
\item Exactly $t$ queries are made irrespective of the oracle answers.
\end{itemize}

\noindent We reduce $x$ to a single instance of $\lonely_p$ as follows.\vspace{-4mm}

\paragraph{Vertices.} The vertices of the $\lonely_p$ instance will be $V = [p]^n \cup [p]^{2n} \cup \cdots \cup [p]^{tn}$, which we interpret as $\overline{V} = V_1 \cup (V_1 \times V_2) \cup (V_1 \times V_2 \times V_3) \cup \cdots \cup (V_1 \times \cdots \times V_t)$. The designated vertices will be $\overline{V}^*\coloneqq V_1^*$. Note that $|\overline{V}^*| = |V_1^*| \not\equiv 0 \mod{p}$.\vspace{-3mm}

\paragraph{Edges.} We'll define the hyperedge for vertex $\overline{v} = (v_1, \ldots, v_k)$ for any $k \le t$. Let $j \le k$ be the last coordinate such that for all $i < j$, the vertex $v_i$ is a valid solution for the $\lonely_p$ instance $(C_i, V_i^*)$, which the algorithm creates on receiving $v_1, \ldots, v_{i-1}$ as answers to previous queries.
\begin{itemize}[leftmargin=2.3cm]
\item [Case $j < k$:] Let $u_1, \ldots, u_{p-1}$ be the neighbors of $v_k$ in a canonical trivial matching over $[p]^n$; e.g. $\set{[p] \times w : w \in [p]^{n-1}}$. The neighbors of $\overline{v}$ are $\set{(v_1, \ldots, v_{k-1}, u_i)}_i$.
\item [Case $j = k$:] We consider three cases, depending on whether $v_k$ is designated, non-isolated or isolated in the $\lonely_p$ instance $(C_k, V_k^*)$.
	\begin{itemize}[leftmargin=1.4cm]
	\item [Non-isolated $v_k$:] For $u_1, \ldots, u_{p-1}$ being the neighbors of $v_k$ in $G_k$, the neighbors of $\overline{v}$ are $\set{(v_1, \ldots, v_{k-1}, u_i)}_i$.
	\item [Isolated $v_k$:] Such a $v_k$ is a valid solution for $(C_k, V_k^*)$.
		\begin{itemize}[leftmargin=0.8cm]
		\item [If $k < t$:] the algorithm will have a next oracle query $(C_{k+1}, V_{k+1}^*)$. In this case, for $u_1, \ldots, u_{p-1}$ being the designated vertices in $V_{k+1}^*$, the neighbors of $\overline{v}$ are $\set{(v_1, \ldots, v_{k-1}, v_k, u_i)}_i$.
		\item [If $k = t$:] there are no more queries, and we leave $\overline{v}$ isolated.
		\end{itemize}
	\item [Designated $v_k$:] Let $u_1, \ldots, u_{p-2}$ be the other designated vertices in $V_k^*$. The neighbors of $\overline{v}$ are $\set{(v_1, \ldots, v_{k-1}, u_i)}_i \cup \set{(v_1, \ldots, v_{k-1})}$.
	\end{itemize}
\end{itemize}

\begin{center}
	\begin{tikzpicture}
\tikzset{
	inner sep=0,outer sep=0,
	mvert/.style={circle, draw, minimum size=6pt, fill=Gyellow!30, outer sep=0pt},
	sol/.style={fill=Ggreen!50},
	designated/.style={fill=Gblue!50},
	hedge/.style={draw=myGold, fill=Gyellow!20},
	myrect/.style={rounded corners=4pt, draw=Sepia, thick, fill=Sepia!2},
	dedge/.style={dotted, Sepia, thick}}

\def\hgap{0.5}
\def\vgap{0.4}
\coordinate (u-1) at (0,0);
\coordinate (u-2) at ([shift={(\hgap,0)}]u-1);
\coordinate (u-3) at ([shift={(\hgap,0)}]u-2);
\coordinate (u-4) at ([shift={(\hgap*0.75,\vgap)}]u-3);
\coordinate (u-5) at ([shift={(\hgap*1.5,0)}]u-3);
\coordinate (u-6) at ([shift={(\hgap,0)}]u-5);
\coordinate (u-7) at ([shift={(\hgap*0.75,\vgap)}]u-6);
\coordinate (u-8) at ([shift={(\hgap*1.5,0)}]u-6);
\coordinate (u-9) at ([shift={(\hgap,0)}]u-8);

\coordinate (u5-1) at ([shift={(-3.25*\hgap,-6*\vgap)}]u-5);
\coordinate (u5-2) at ([shift={(\hgap*0.75,\vgap)}]u5-1);
\coordinate (u5-3) at ([shift={(\hgap*1.5,0)}]u5-1);
\coordinate (u5-4) at ([shift={(\hgap,0)}]u5-3);
\coordinate (u5-5) at ([shift={(\hgap*0.75,\vgap)}]u5-4);
\coordinate (u5-6) at ([shift={(\hgap*1.5,0)}]u5-4);
\coordinate (u5-7) at ([shift={(\hgap,0)}]u5-6);
\coordinate (u5-8) at ([shift={(\hgap*0.75,\vgap)}]u5-7);
\coordinate (u5-9) at ([shift={(\hgap*1.5,0)}]u5-7);

\coordinate (u1-1) at ([shift={(-11*\hgap,-6*\vgap)}]u-1);
\coordinate (u1-2) at ([shift={(\hgap*0.75,\vgap)}]u1-1);
\coordinate (u1-3) at ([shift={(\hgap*1.5,0)}]u1-1);
\coordinate (u1-4) at ([shift={(\hgap,0)}]u1-3);
\coordinate (u1-5) at ([shift={(\hgap*0.75,\vgap)}]u1-4);
\coordinate (u1-6) at ([shift={(\hgap*1.5,0)}]u1-4);
\coordinate (u1-7) at ([shift={(\hgap,0)}]u1-6);
\coordinate (u1-8) at ([shift={(\hgap*0.75,\vgap)}]u1-7);
\coordinate (u1-9) at ([shift={(\hgap*1.5,0)}]u1-7);

\coordinate (u9-1) at ([shift={(4.5*\hgap,0)}]u5-9);
\coordinate (u9-2) at ([shift={(\hgap,0)}]u9-1);
\coordinate (u9-3) at ([shift={(\hgap,0)}]u9-2);
\coordinate (u9-4) at ([shift={(\hgap*0.75,\vgap)}]u9-3);
\coordinate (u9-5) at ([shift={(\hgap*1.5,0)}]u9-3);
\coordinate (u9-6) at ([shift={(1.5*\hgap,0)}]u9-5);
\coordinate (u9-7) at ([shift={(0,\vgap)}]u9-6);
\coordinate (u9-8) at ([shift={(1.5*\hgap,\vgap)}]u9-6);
\coordinate (u9-9) at ([shift={(0,-\vgap)}]u9-8);

\draw[myrect] ([shift={(-0.5,2*\vgap)}]u-1) rectangle ([shift={(0.5,-1*\vgap)}]u-9);
\node at ([shift={(1.2,0.5*\vgap)}]u-9) {$V_1$};
\filldraw[hedge] (u-3) -- (u-4) -- (u-5) -- cycle;
\filldraw[hedge] (u-6) -- (u-7) -- (u-8) -- cycle;

\draw[dashed, rounded corners=4pt] ([shift={(-1.5*\hgap,2.5*\vgap)}]u1-1) rectangle ([shift={(1.5*\hgap,-1.5*\vgap)}]u9-9);
\node at ([shift={(-0.5*\hgap,3.5*\vgap)}]u9-9) {$V_1 \times V_2$};

\draw[myrect] ([shift={(-0.5,2*\vgap)}]u1-1) rectangle ([shift={(0.5,-1*\vgap)}]u1-9);
\filldraw[hedge] (u1-1) -- (u1-2) -- (u1-3) -- cycle;
\filldraw[hedge] (u1-4) -- (u1-5) -- (u1-6) -- cycle;
\filldraw[hedge] (u1-7) -- (u1-8) -- (u1-9) -- cycle;

\draw[myrect] ([shift={(-0.5,2*\vgap)}]u5-1) rectangle ([shift={(0.5,-1*\vgap)}]u5-9);
\filldraw[hedge] (u5-1) -- (u5-2) -- (u5-3) -- cycle;
\filldraw[hedge] (u5-4) -- (u5-5) -- (u5-6) -- cycle;
\filldraw[hedge] (u5-7) -- (u5-8) -- (u5-9) -- cycle;

\draw[myrect] ([shift={(-0.5,2*\vgap)}]u9-1) rectangle ([shift={(0.5,-1*\vgap)}]u9-9);
\filldraw[hedge] (u9-3) -- (u9-4) -- (u9-5) -- cycle;
\filldraw[hedge] (u-9) -- (u9-1) -- (u9-2) -- cycle;

\node[Sepia] at ([shift={(-2.25*\hgap,0.5*\vgap)}]u5-1) {\Large$\cdots$};
\node[Sepia] at ([shift={(2.4*\hgap,0.5*\vgap)}]u5-9) {\Large$\cdots$};

\node[mvert,designated] at (u-1) {} edge[dedge] ([shift={(0,\vgap)}]u1-5);
\node[mvert,designated] at (u-2) {};
\node[mvert] at (u-3) {};
\node[mvert] at (u-4) {};
\node[mvert] at (u-5) {} edge[dedge] ([shift={(0,\vgap)}]u5-5);
\node[mvert] at (u-6) {};
\node[mvert] at (u-7) {};
\node[mvert] at (u-8) {};
\node[mvert,sol] at (u-9) {};

\node[mvert] at (u1-1) {};
\node[mvert] at (u1-2) {};
\node[mvert] at (u1-3) {};
\node[mvert] at (u1-4) {};
\node[mvert] at (u1-5) {};
\node[mvert] at (u1-6) {};
\node[mvert] at (u1-7) {};
\node[mvert] at (u1-8) {};
\node[mvert] at (u1-9) {};

\node[mvert] at (u5-1) {};
\node[mvert] at (u5-2) {};
\node[mvert] at (u5-3) {};
\node[mvert] at (u5-4) {};
\node[mvert] at (u5-5) {};
\node[mvert] at (u5-6) {};
\node[mvert] at (u5-7) {};
\node[mvert] at (u5-8) {};
\node[mvert] at (u5-9) {};

\node[mvert,designated] at (u9-1) {};
\node[mvert,designated] at (u9-2) {};
\node[mvert] at (u9-3) {};
\node[mvert] at (u9-4) {};
\node[mvert] at (u9-5) {};
\node[mvert,sol] at (u9-6) {};
\node[mvert,sol] at (u9-7) {};
\node[mvert,sol] at (u9-8) {};
\node[mvert,sol] at (u9-9) {};
\end{tikzpicture} \end{center}

\noindent It is easy to see that our definition of edges are consistent and the only vertices which are isolated (apart from those in $\overline{V}^*$) are of the type $(y_1, \ldots, y_t)$ where each $y_i$ is a valid solution for the $\lonely_p$ instance $(C_i, V_i^*)$. Thus, given an isolated vertex $\overline{y}$, we can immediately infer a solution for $x$ as $L(x, y_1, \ldots, y_t)$. This completes the reduction since $\overline{V}$ is efficiently representable and indexable --- see \autoref{rem:simplifications}.
\end{proof}
 
\subsection*{Acknowledgements}
We thank Christos Papadimitriou, Robert Robere, Dmitry Sokolov and Noah Stephens-Davidowitz for helpful discussions. We also thank anonymous referees for valuable suggestions.

MG was supported by NSF grant CCF-1412958 (this work was done while MG was at IAS). PK was supported in parts by NSF Award numbers CCF-1733808 and IIS-1741137 and MIT-IBM Watson AI Lab and Research Collaboration Agreement No. W1771646 (this work was done while PK was a student at MIT). MZ is supported by a Google PhD Fellowship. KS is supported in part by NSF/BSF grant \#1350619, an MIT-IBM grant, and a DARPA Young Faculty Award, MIT Lincoln Laboratories and Analog Devices.

\appendix

\section{Appendix: Reductions Between Complete Problems}
\label{sec:equivalences}

In order to prove \autoref{thm:PPAqEquivalentProblems}, we introduce an additional problem that will serve as intermediate problem in our reductions.\\[1mm]

\mydef{leaf-prime}{$\leaf'_q$}
{Same as $\leaf_q$, but degrees are allowed to be larger (polynomially bounded).}
{$q$-uniform hypergraph $G = (V, E)$. Designated vertex $v^* \in V$.}
{$\triangleright$ $C:\bit^n \to (\bit^{nq})^k$, with $(\bit^{nq})^k$ interpreted as $k$ many $q$-subsets of $\bit^n$\\
	$\triangleright$ $v^* \in \bit^n$ (usually $0^n$)}
{$V\coloneqq\bit^n$.\\
	For distinct $v_1, \ldots, v_q$, edge $e\coloneqq\set{v_1, \ldots, v_q}\in E$ if $e \in C(v)$ for all $v \in e$}
{$v^*$ if $\deg(v) \equiv 0 \mod{q}$ and\\
	$v \ne v^*$ if $\deg(v) \not\equiv 0 \mod{q}$}

\begin{proof}[Proof of \autoref{thm:PPAqEquivalentProblems}]
We show the following inter-reducibilities: (1) $\leaf_q \asymp \leaf_q'$, (2) $\leaf_q' \asymp \bipartite_q$ and (3) $\leaf_q \asymp \lonely_q$.

\medskip

\parasc{(1a) $\boldsymbol{\leaf_q \reducible \leaf_q'}$} Each instance of $\leafq$ is trivially an instance of $\leaf'_q$.

\medskip

\parasc{(1b) $\boldsymbol{\leaf'_q \reducible \leafq}$.} We start with a $\leaf'_q$ instance $(C, v^*)$, where $C$ encode a $q$-uniform hypergraph $G = (V, E)$ with degree at most $k$. Let $t = \ceil{k/q}$. We construct a $\leaf_q$ instance encoding a hypergraph $\overline{G} = (\overline{V}, \overline{E})$ on vertex set $\overline{V}\coloneqq V \times [t]$, intuitively making $t$ copies of each vertex.

In order to locally compute hyperedges, we first fix a canonical algorithm that for any vertex $v$ and any edge $e \in E$ incident on $v$, assigns it a label $\ell_v(e) \in [t]$, with at most $q$ edges mapping to the same label --- e.g. sort all edges incident on $v$ in lexicographic order and bucket them sequentially in at most $t$ groups of at most $q$ each. Note that we can ensure that for any vertex $v$ at most one label gets mapped to by a non-zero, non-$q$ number of edges. Moreover, if $\deg(v) \equiv 0 \mod{q}$, then exactly $q$ or $0$ edges are assigned to any label.

We'll assume that $\deg(v^*) \not\equiv 0 \mod{q}$, as otherwise, a reduction wouldn't be necessary. We let $(v^*, \ell^*)$ be the designated vertex of the $\leaf_q$ instance, where $\ell^*$ is the unique label that gets mapped to by a non-zero, non-$q$ number of edges incident on $v^*$.

For any vertex $(v, i) \in \overline{V}$, we assign it at most $q$ edges as follows: For each edge $e = \set{v_1, \ldots, v_q}$ such that $\ell_v(e)=i$, the corresponding hyperedge of $(v,i)$ is ${(v_1, \ell_{v_1}(e)), \ldots, (v_q, \ell_{v_q}(e))}$. It is easy to see that the designated vertex $(v^*, \ell^*)$ indeed has non-zero, non-$q$ degree. Moreover, a vertex $\deg(v,i) \notin \set{0, q}$ in $\overline{G}$ only if $v$ has a non-multiple-of-$q$ degree in $G$. Thus, solutions to the $\leaf_q$ instance naturally maps to solutions to the original $\leaf_q'$ instance.

By \autoref{rem:simplifications}, this completes the reduction since the edges are locally computable with black-box access to $C$ and $\overline{V}$ is efficiently indexable.

\medskip

\parasc{(2a) $\boldsymbol{\leaf'_q \reducible \bipartiteq}$.}
We start with a $\leaf'_q$ instance $(C, v^*)$, where $C$ encode a $q$-uniform hypergraph $G = (V, E)$. We construct a $\bipartite_q$ instance encoding a graph $\overline{G} = (\overline{V} \cup \overline{U}, \overline{E})$ such that $\overline{V} = V$ and $\overline{U} = \binom{V}{q}$, i.e. all $q$-sized subsets of $V$. We include the edge $(v,e) \in \overline{E}$ if $e \in E$ is incident on $v$. The designated vertex for the $\bipartite_q$ instance is $v^*$ in $\overline{V}$.

Clearly, all vertices $e \in \overline{U}$ have degree either $q$ or $0$. For any $v \in \overline{V}$, the degree of $v$ in $\overline{G}$ is same as its degree in $G$. Thus, any solution to the $\bipartite_q$ instance immediately gives a solution to the original $\leaf'_q$ instance. By \autoref{rem:simplifications}, this completes the reduction since the edges are locally computable with black-box access to $C$ and $\overline{V}$ and $\overline{U}$ are efficiently indexable (cf. \cite[\S2.3]{kreher98combinatorial} for efficiently indexing $\overline{U}$).

\medskip

\parasc{(2b) $\boldsymbol{\bipartiteq \reducible \leaf'_q}$.} We start with a $\bipartite_q$ instance $(C, v^*)$ encoding a bipartite graph $\calG = (V \cup U,E)$ with maximum degree of any vertex being at most $k$. We construct a $\leaf'_q$ instance encoding a hypergraph $\overline{G} = (\overline{V}, \overline{E})$ such that $\overline{V} = V$ with designated vertex $v^*$.

First, we fix a canonical algorithm that for any vertex $u \in U$ with $\deg_{G}(u) \equiv 0 \mod{q}$ produces a partition of it's neighbors with $q$ vertices of $V$ in each part. Now, the set of $q$-uniform hyperedges incident on any vertex $v \in \overline{V}$ in $\overline{E}$ can be obtained as: for all neighbors $u$ of $v$, with $\deg_G(u) \equiv 0 \mod{q}$, we include a hyperedge consisting of all vertices in the same partition as $v$ among the neighbors of $u$ (we ignore neighbors $u$ with $\deg(u) \not\equiv 0 \mod{q}$).

Observe that $\deg_{\overline{G}}(v) \le \deg_{G}(v)$ and equality holds if and only if all neighbors of $v$ in $G$ have degree $\equiv 0 \mod{q}$. Hence for any $v \in \overline{V}$, if $\deg_{\overline{G}}(v) \ne \deg_G(v) \mod{q}$, then there exists a neighbor $u \in U$ of $v$ in $G$ such that $\deg(u) \not\equiv 0 \mod{q}$. Thus, if $v = v^*$ and $\deg_{\overline{G}}(v^*) \equiv 0 \mod{q}$, then either $\deg_G(v) \equiv 0 \mod{q}$ or we can find a neighbor $u$ of $v$ in $G$ with $\deg(u) \not\equiv 0 \mod{q}$. Similarly if for some $v \ne v^*$, we have $\deg_{\overline{G}}(v^*) \not\equiv 0 \mod{q}$, then either $\deg_G(v) \not\equiv 0 \mod{q}$ or we can find a neighbor $u$ of $v$ in $G$ with $\deg(u) \not\equiv 0 \mod{q}$. Thus, any solution to the $\leaf'_q$ instance gives us a solution to the original $\bipartite_q$ instance. This completes the reduction since $\overline{V} = \bit^n$ and the edges are locally computable with black-box access to $C$.

\medskip

\parasc{(3a) $\boldsymbol{\leaf_q \reducible \lonely_q}$.} We start with a $\leaf_q$ instance $(C, v^*)$, where $C$ encode a $q$-uniform hypergraph $G = (V, E)$ with degree at most $q$. If $\deg_G(v^*) = q$ or $0$, then we don't need any further reduction. Else, we construct a $\lonely_q$ instance encoding a $q$-dimensional matching $\overline{G} = (\overline{V}, \overline{E})$ on vertex set $\overline{V} = V \times [q]$. The designated vertices will be $V^* = \set{(v,q-i) : 1 \le i \le q - \deg(v^*)}$. Note that, $|V^*| = q - \deg_G(v^*)$ and hence $1 \le |V^*| \le q-1$.

In order to locally compute hyperedges, we first fix a canonical algorithm that for any vertex $v$ and any edge $e \in E$ incident on $v$, assigns it a unique label $\ell_v(e) \in [q]$ --- e.g. sort all edges incident on $v$ in lexicographic order and label them sequentially in $[q]$. In fact, we can ensure that an edge incident on $v$ get labeled within $\set{1, \ldots, \deg_G(v)}$.

For any vertex $(v, i) \in \overline{V}$, we assign it at most one hyperedge as follows:
\begin{itemize}[label=$\triangleright$]
\item If $\deg_G(v) = 0$, we include the hyperedge $\set{(v,i) : i \in [q]}$.
\item Else if $\deg_G(v) \ge i$, then for edge $e = \set{v_1, \ldots, v_q}$ incident on $v$ such that $\ell_v(e)=i$, the corresponding hyperedge of $(v,i)$ is ${(v_1, \ell_{v_1}(e)), \ldots, (v_q, \ell_{v_q}(e))}$.
\item Else if $0 < \deg_G(v) < i$, we leave it isolated.
\end{itemize}

It is easy to see that our definition of hyperedges is consistent and that the designated vertices $V^*$ are indeed isolated. Moreover, a vertex $(v,i)$ is isolated in $\overline{G}$ only if $1 \le \deg_G(v)\le q-1$. Thus, solutions to the $\leaf_q$ instance naturally maps to solutions to the original $\leaf_q'$ instance.

By \autoref{rem:simplifications}, this completes the reduction since the edges are locally computable with black-box access to $C$ and $\overline{V}$ is efficiently indexable.

\medskip

\parasc{(3b) $\boldsymbol{\lonely_q \reducible \leaf_q}$.} We start with a $\lonely_q$ instance $(C, V^*)$, where $C$ encode a $q$-dimensional matching $G = (V, E)$. We construct a $\leaf_q$ instance encoding a $q$-uniform hypergraph $\overline{G} = (\overline{V}, \overline{E})$ on vertex set $\overline{V}$ that will be specified shortly. We describe the hyperedges in $\overline{G}$ and it'll be clear how to compute the hyperedges for any vertex locally with just black-box access to $C$.

We start with $\overline{V} = V$. Our goal is to transform all vertices of degree $1$ to degree $q$, while ensuring that vertices of degree $0$ are mapped to vertices of degree not a multiple of $q$. Towards this goal we let $\overline{E}$ to be set of edges in $E$ in addition to $q-1$ canonical $q$-dimensional matchings over $V$. For example, for a vertex $v \coloneqq (x_1, \ldots, x_n) \in V = [q]^n$, the corresponding edges in $\overline{E}$ include an edge in $E$ (if any) and edges of the type $e_i = \set{(x_1, \dots, x_{i-1}, j, x_{i+1}, \dots, x_n) : j \in [q]}$ for $i \in [q-1]$ (note, this requires us to assume $n \geq q-1$). Adding the $q-1$ matchings increases the degree of each vertex by $q-1$. Therefore, vertices with initial degree $1$ now have degree $q$ and vertices with initial degree $0$ now have degree $q-1$. However, a couple of issues remain in order to complete the reduction, which we handle next.

{\em Multiplicities.} An edge $e \in E$ might have gotten added twice, if it belonged to one of the canonical matchings. To avoid this issue altogether, instead of adding edges directly on $V$, we augment $\overline{V}$ to become $\overline{V} \coloneqq V \cup \inparen{\binom{V}{q} \times [q-1]}$, i.e. in addition to $V$, we have $q-1$ vertices for every potential hyperedge of $G$. For any edge $e\coloneqq\set{v_1, \ldots, v_q} \in E$, instead of adding it directly in $\overline{G}$, we add hyperedge $\set{v, (e,1), (e,2), \ldots, (e,q-1)}$ for each $v \in e$. Note that, all vertices $(e,i) \in \binom{V}{q} \times [q-1]$ have degree $q$ if $e \in E$ and degree $0$ if $e \notin E$, so they are non-solutions for the $\leaf_q$ instance. For vertices in $V$, we still have as before that vertices with initial degree $1$ now have degree $q$ and vertices with initial degree $0$ now have degree $q-1$.

{\em Designated vertex.} In a $\leaf_q$ instance, we need to specify a single designated vertex $v^* \in \overline{V}$. If the $\lonely_q$ instance had a single designated vertex then we would be done. However, in general it is not possible to assume this (for non-prime $q$). Nevertheless, we provide a way to get around this. We augment $\overline{V}$ with $t = (q-1)(q-k) + 1$ additional vertices to become $\overline{V} \coloneqq V \cup  \inparen{\binom{V}{q} \times [q-1]} \cup \set{w_{i,j} : i \in [q-k],\ j \in [q-1]} \cup \set{v^*}$, where $v^*$ will eventually be the single designated vertex for the $\leaf_q$ instance.

Let $V^* = \set{u_1, \ldots, u_k} \subseteq V$ be the set of designated vertices in the $\lonely_q$ instance (note $1 \le k < q$). So far, note that $\deg_{\overline{G}}(u_i) = q-1$. The only new hyperedges we add will be among $u_i$'s, $w_{i,j}$'s and $v^*$, in such a way that $\deg_{\overline{G}}(u_i)$ will become $q$, the degree of all $w_{i,j}$'s will also be $q$ and degree of $v^*$ will be $q-k$.
\begin{itemize}[label=$\triangleright$, itemsep=3pt]
\item For each $u \in V^*$, include $\set{u, w_{1,1}, \ldots, w_{1,q-1}}$. So far, $\deg_{\overline{G}}(u) = q$ and $\deg_{\overline{G}}(w_{1,j}) = k$.
\item For each $j \in [q-1]$ and each $i \in \set{2, \ldots, q-k}$, include $\set{w_{1,j}, w_{i,1}, \ldots, w_{i,q-1}}$.\\
So far, $\deg_{\overline{G}}(w_{i,j}) = q-1$ for all $(i,j) \in [q-k] \times [q-1]$.
\item Finally, for each $(i,j) \in [q-k] \times [q-1]$, include $\set{v^*, w_{i,1}, \ldots, w_{i,q-1}}$.\\
Now, $\deg_{\overline{G}}(w_{i,j}) = q$ for all $(i,j) \in [q-k] \times [q-1]$ and $\deg_{\overline{G}}(v^*) = q-k$.
\end{itemize}

Thus, we have finally reduced to a $\leaf_q$ instance encoding the graph $\overline{G} = (\overline{V}, \overline{E})$ with $\overline{V} \coloneqq V \cup  \inparen{\binom{V}{q} \times [q-1]} \cup \set{w_{i,j} : i \in [q-k],\ j \in [q-1]} \cup \set{v^*}$. By \autoref{rem:simplifications}, this completes the reduction, since $\overline{V}$ is efficiently indexable (again, see \cite{kreher98combinatorial} for a reference on indexing $\binom{V}{q}$) and the edges are locally computable using black-box access to $C$.
\end{proof}

\subsection{\boldmath Completeness of Succinct Bipartite}\label{sec:sucBipartite}

We introduce a new intermediate problem to show $\PPA_p$--completeness of $\sucBipartite_p$.\\[1mm]

\mydef{two-matchings}{$\twoMatchings_p$}
{Two $p$-dimensional matchings over a common vertex set, with a vertex in exactly one of the matchings, has another such vertex.}
{Two $p$-dimensional matchings $G_0 = (V, E_0)$, $G_1 = (V, E_1)$. Designated vertex $v^* \in V$.}
{$\triangleright$ $C_0:\bit^n \to (\bit^n)^p$ and $C_1:\bit^n \to (\bit^n)^p$\\
	$\triangleright$ $v^* \in \bit^n$}
{$V\coloneqq\bit^n$. For $b \in \bit$, $E_b \coloneqq \set{e : C_b(v) = e \text{ for all } v \in e}$	}
{$v^*$ if $\deg_{G_0}(v^*) \ne 1$ or $\deg_{G_1}(v^*) \ne 0$ and\\
	$v \ne v^*$ if $\deg_{G_0}(v^*) \ne \deg_{G_1}(v^*)$}

\noindent Observe that in the case of $p=2$, $\twoMatchings_p$ can be readily seen as equivalent to $\leaf_2$.

\begin{theorem}\label{thm:succ-bip-complete}
	For any prime $p$, $\sucBipartite_p$ and $\twoMatchings_p$ are $\PPA_p$--complete.
\end{theorem}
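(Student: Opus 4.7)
The plan is to establish $\PPA_p$-completeness of both problems in parallel, via four reductions: $\bipartite_p \reducible \sucBipartite_p$ and $\lonely_p \reducible \twoMatchings_p$ for hardness, and $\twoMatchings_p \reducible \bipartite_p$ and $\sucBipartite_p \reducible \twoMatchings_p$ for containment in $\PPA_p$. Thus both problems will be sandwiched between $\bipartite_p$ and itself, yielding completeness.

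For hardness, the reduction $\bipartite_p \reducible \sucBipartite_p$ is immediate: every $\bipartite_p$ instance is a $\sucBipartite_p$ instance with multiplicity $1$ and canonical groupings $\phi_V,\phi_U$ obtained by listing each vertex's neighbors in lex order and partitioning into chunks of size $p$ (the residue stays ungrouped). For $\twoMatchings_p$-hardness, I would reduce from $\lonely_p^{p-1}$ (equivalent to $\lonely_p$ by \autoref{lem:restricted-lonely}). Given a matching $G$ on $V = [p]^n$ with designated set $V^* = \set{v_1^*,\ldots,v_{p-1}^*}$ (assumed isolated in $G$, else we return one as a $\lonely_p^{p-1}$ solution), add a fresh vertex $v^{\dagger}$, let $G_0 \coloneqq G \cup \set{\set{v^{\dagger},v_1^*,\ldots,v_{p-1}^*}}$, and let $G_1$ be the canonical perfect $p$-matching on $V$ (well-defined since $p \mid p^n$). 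With designated vertex $v^{\dagger}$, the $\twoMatchings_p$ promise $(\deg_{G_0}(v^{\dagger}),\deg_{G_1}(v^{\dagger})) = (1,0)$ is met, and the non-designated solutions (vertices $v$ with $\deg_{G_0}(v) \ne \deg_{G_1}(v)$) are precisely the isolated vertices in $V \setminus V^*$, i.e.\ the $\lonely_p^{p-1}$ solutions.

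For the containments, I would reduce $\twoMatchings_p$ to $\bipartite_p$ by placing $V$ on the left and $E(G_0) \sqcup (E(G_1) \times [p-1])$ on the right (each hyperedge indexed canonically by its lex-smallest member and padded as in \autoref{rem:simplifications}), with bipartite adjacency given by hyperedge-incidence. Right-side vertices always have degree $p \equiv 0 \pmod{p}$, while a left-side $v$ has degree $\deg_{G_0}(v) + (p-1)\deg_{G_1}(v) \equiv \deg_{G_0}(v) - \deg_{G_1}(v) \pmod{p}$, which is nonzero mod $p$ iff the two matching degrees disagree. Designating $v^*$ on the left, any $\bipartite_p$ answer pulls back to a $\twoMatchings_p$ solution: a designated answer means $\deg_{G_0}(v^*) = \deg_{G_1}(v^*)$ (promise violated) and a non-designated answer must lie in $V$ with the two degrees differing.

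Finally, for $\sucBipartite_p \reducible \twoMatchings_p$, take the vertex set to be a $\bit^n$-padding of $E$, and let $G_0$ be the matching whose hyperedges are the consistent $\phi_U$-groups and $G_1$ the one from $\phi_V$-groups; the designated edge $e^*$ plays the role of the designated vertex. A $\twoMatchings_p$ answer $e^*$ means $\deg_{G_0}(e^*) \ne 1$ (so $e^*$ is not grouped at $u^*$) or $\deg_{G_1}(e^*) \ne 0$ (so $e^*$ is grouped at $v^*$)---either way a $\sucBipartite_p$ solution; a non-designated answer $e$ with $\deg_{G_0}(e) \ne \deg_{G_1}(e)$ is grouped at exactly one of its ends, again a $\sucBipartite_p$ solution. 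The main obstacle, in my view, is that edges grouped at \emph{neither} end are $\sucBipartite_p$ solutions but \emph{not} $\twoMatchings_p$ solutions, so one must verify the case analysis carefully; totality of $\twoMatchings_p$ (via $\sum_v (\deg_{G_0}(v) - \deg_{G_1}(v)) \equiv 0 \pmod{p}$) guarantees that the oracle answer exists, and by the analysis above it is always a valid $\sucBipartite_p$ solution---the oracle simply never picks a ``grouped-at-neither'' edge, which is harmless.
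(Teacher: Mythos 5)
Your proof is correct, and the three essential reductions coincide with or naturally replace those in the paper, but you close the loop differently. The paper proves the chain $\bipartite_p \reducible \sucBipartite_p \reducible \twoMatchings_p \reducible \lonely_p$; you instead establish the cycle $\bipartite_p \reducible \sucBipartite_p \reducible \twoMatchings_p \reducible \bipartite_p$ (the first two arrows coinciding with the paper's, up to swapping the roles of $\phi_U$ and $\phi_V$ in which matching is $G_0$ versus $G_1$), and you add a redundant direct hardness reduction $\lonely_p^{p-1}\reducible\twoMatchings_p$ via a fresh vertex $v^\dagger$ joined to the $p-1$ designated vertices and a canonical perfect matching on $V$. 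Your replacement for the paper's final arrow is the reduction $\twoMatchings_p\reducible\bipartite_p$: put $V$ on the left, hyperedges of $G_0$ and $(p-1)$ copies of hyperedges of $G_1$ on the right (indexed by lex-smallest member), so a left vertex has degree $\deg_{G_0}(v)+(p-1)\deg_{G_1}(v)\equiv\deg_{G_0}(v)-\deg_{G_1}(v)\pmod{p}$ while right vertices always have degree $0$ or $p$. This is a clean alternative to the paper's $\twoMatchings_p\reducible\lonely_p$, which instead blows up the vertex set to $V\times[p]$ and distributes $p-1$ copies of $G_0$'s edges across indices $1,\ldots,p-1$ and one copy of $G_1$'s at index $p$. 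Both routes hinge on the same arithmetic observation that one matching-degree contributes $+1$ and the other contributes $-1$ modulo $p$. You correctly flag the key subtlety in $\sucBipartite_p\reducible\twoMatchings_p$---that "grouped at neither end" edges are $\sucBipartite_p$ solutions but not $\twoMatchings_p$ solutions---and correctly observe this is harmless since the reduction only needs that oracle answers (which it will always produce) pull back to valid $\sucBipartite_p$ solutions.
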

\begin{proof}
We show that $\bipartite_p \reducible \sucBipartite_p \reducible \twoMatchings_p \reducible \lonely_p$.\medskip

\parasc{$\boldsymbol{\bipartitep \reducible \sucBipartitep}$.} Since $p$ is a prime, we can assume that the designated vertex $v^{*}$ has degree $1 \mod{p}$ (similar to \autoref{lem:restricted-lonely}). Since the
number of neighbors in a $\bipartitep$ instance are polynomial, we can
check if an edge exists and canonically group them efficiently for all vertices
with degree being a multiple of $p$. The designated edge $e^{*}$ is the unique ungrouped edge incident on $v^*$. Thus, valid solution edges to $\sucBipartite_p$ must have at least one endpoint which is a solution to the original $\bipartite_p$ instance.\medskip

\parasc{$\boldsymbol{\sucBipartitep \reducible \twoMatchingsp}$.}
We reduce to a $\twoMatchings_p$ instance encoding two $p$-dimensional matchings $\overline{G}_0 = (\overline{V}, \overline{E}_0)$ and $\overline{G}_1 = (\overline{V}, \overline{E}_1)$, over the vertex set $\overline{V} = V \times U \times [p-1]$, that is, all possible edges producible in the $\sucBipartite_p$ instance. The designated vertex $v^*$ is the designated edge $e^*$ in the $\sucBipartite_p$ instance.

For any edges $e_1, \ldots, e_p$, which are grouped by $\phi_V$ pivoted at some $v \in V$, we include the hyperedge $\set{e_1, \ldots, e_p}$ in $\overline{E}_0$. Similarly, for any edges $e_1, \ldots, e_p$, which are grouped by $\phi_U$ pivoted at some $u \in U$, we include the hyperedge $\set{e_1, \ldots, e_p}$ in $\overline{E}_1$. It is easy to see that points in exactly one of the two matchings $\overline{G}_0$ or $\overline{G}_1$ correspond to edges of the $\sucBipartite_p$ instance that are not grouped at exactly one end. Thus, we can derive a solution to $\sucBipartite_p$ from a solution to $\twoMatchings_p$. (Remark: while edges which are not grouped at either end are solutions to $\sucBipartite_p$, they do not correspond to a solution in the $\twoMatchings_p$ instance.) \medskip

\parasc{$\twoMatchings_p \reducible \lonely_p$.} Given a $\twoMatchings_p$ instance encoding two $p$-dimensional matchings $G_0 = (V, E_0)$ and $G_1 = (V, E_1)$, we reduce to an instance of $\lonely_p$ encoding a $p$-dimensional matching $\overline{G} = (\overline{V}, \overline{E})$ such that $\overline{V} = V \times [p]$. The designated vertex for the $\lonely_p$ instance is $(v^*,p)$.

For any hyperedge $\set{v_1, \ldots, v_p}$ in $E_0$, we include the hyperedge $\set{(v_1, i), (v_2, i), \ldots, (v_p,i)}$ in $\overline{G}$ for each $i \in \set{1, \ldots, p-1}$. Similarly, for any hyperedge $\set{v_1, \ldots, v_p}$ in $E_1$, we include the hyperedge $\set{(v_1, p), (v_2, p), \ldots, (v_p,p)}$ in $\overline{G}$. If $v \in V$ is isolated in both $G_0$ and $G_1$, then we include the hyperedge $\set{v} \times [p]$.

Observe that, $(v^*, p)$ is isolated by design. A vertex $(v,i)$, for $i < p$ is isolated only if $\deg_{G_0}(v) = 0$ and $\deg(G_1) = 1$. Similarly, the vertex $(v,p)$ is isolated only if $\deg_{G_0}(v) = 1$ and $\deg(G_1) = 0$. Thus, isolated vertices in the $\lonely_p$ instance correspond to solutions of the $\twoMatchings_p$ instance.
\end{proof}

\subsection{Equivalence with $\PMOD_p$}\label{sec:pmod-equivalence}

\begin{proof}[Proof of \autoref{lem:restricted-lonely}]
Consider any prime $p$. Consider a $\lonely_p$ instance $(C, V^*)$, where $C$ encodes a $p$-dimensional matching $G = (V,E)$ and $|V^*| = \ell$. We wish to reduce to an instance of $\lonely_p^k$, where the number of designated vertices is exactly $k$. First, we'll assume that all vertices in $V^*$ are indeed isolated in $G$, otherwise, no reduction would be necessary. The key reason why this lemma holds for primes (and not for composites) is because $\ell$ has a multiplicative inverse modulo $p$. In particular, let $t \equiv \ell^{-1} k \mod{p}$.

We construct a $\lonely_p^k$ instance encoding the $p$-dimensional matching $\overline{G} = (\overline{V}, \overline{E})$ over $\overline{V} = V \times [t]$. We let $\overline{V}^*$ to be the lexicographically first $k$ vertices in $V^* \times [t]$. Note that $|V^* \times [t]| = t.\ell \equiv k \mod{p}$. Thus, we partition the remaining vertices of $V^* \times [t]$ into $p$-uniform hyperedges. For any vertex $v \in V \smallsetminus V^*$, with neighbors $v_1, \ldots, v_{p-1}$ in $G$, the neighbors of $(v,i)$ in $\overline{G}$ are $(v_1, i), \ldots, (v_{p-1}, i)$ for any $i \in [t]$. Thus, a vertex $(v,i)$ is isolated only if it is in $\overline{V}^*$ or $v$ is isolated in $G$. This completes the reduction since $\overline{V}$ is efficiently indexable -- see \autoref{rem:simplifications}.
\end{proof}

\begin{proof}[Proof of \autoref{cor:ppa-pmod}]
It is easy to see that $\modn_q \le \lonely_q$ with number of designated vertices being $k \equiv -2^n \mod{q}$, since $\bit^n$ is efficiently indexable (\autoref{rem:simplifications}). Conversely, using \autoref{lem:restricted-lonely}, we can reduce a $\lonely_q$ instance to a $\modn_q$ instance as follows: Let the $\lonely_q$ instance encode a $q$-dimensional matching over $[q]^n$ with $k$ designated vertices. If any of the designated vertices are not isolated, no further reduction is necessary. Otherwise, we can embed the non-designated vertices of $G$ into the first $q^n - k$ vertices of $\bit^N$ for a choice of $N$ satisfying $2^N > q^n$ and $2^N \equiv -k \mod{q}$. Such an $N$ is guaranteed to exist (and can be efficiently found) when $q$ is a prime. Since $2^N - q^n + k \equiv 0 \mod{q}$, we can partition the remaining vertices into $q$-uniform hyperedges, and thus, solutions to the $\modn_q$ instance readily map to solutions of the original $\lonely'_q$ instance.
\end{proof}

\section{Appendix: Proof of \autoref{thm:simplificationPPAp}}
\label{sec:simplificationProof}

\begin{prevproof}{Theorem}{thm:simplificationPPAp}
    From \autoref{thm:combinedChevalleyPPApCompleteness}, it suffices
  to show that $\cChevalleyp \preceq \sucBipartitep[\ACzp]$.
  Additionally from the proof of
  \autoref{thm:combinedChevalleyPPApCompleteness} we can assume without loss
  of generality that the system of polynomials
  $\vecf = (\vecg, \vech)$ of the $\cChevalleyp$ instance has the
  following properties.
  \begin{Enumerate}
    \item[a.] Each polynomial $f_i$ has degree at most 2.
    \item[b.] Each polynomial $f_i$ has at most 3 monomials.
    \item[c.] Each polynomial $f_i$ has at most 3 variables.
  \end{Enumerate}
  \noindent Hence, we can compute each of the polynomials
  $g_i^{p - 1}$ explicitly as a sum of monomials. The degree
  of this polynomial is $O(p)$ and the number of monomials is at most
  $3^p$. Observe that since $p$ is a constant, $3^p$ is also a
  constant.

    Now we follow the proof of \autoref{lem:combinedChevalleyPPAp}
  that reduces an instance of $\cChevalleyp$ to an instance of
  $\sucBipartitep$. Following this proof there are two circuits that we
  need to replace with formulas in $\ACzp$ to reduce to $\sucBipartitep$.
  The first circuit is the edge counting circuit $\calC$ and the second is the
  grouping function $\phi$. We remind that the bipartite graph $G(U, V)$
  of the $\sucBipartitep$ instance has
  two parts $U$, $V$, where $U$ is the set of all possible assignments,
  i.e. $\F_p^n$, and $V = V_1 \cup V_2$, where $V_1$ in
  the set of all monomials of the polynomial
  $F = \prod_{i = 1}^m (1 - g_{i}^{p - 1})$  and $V_2$ is the set of all
  $p$-tuples of assignments, i.e. $\p{\F_p^n}^p$.
  \medskip

  \paragr{From Edge Counting Circuit To Edge Counting Formula.} As
  described in the proof of \autoref{lem:combinedChevalleyPPAp} the
  edge counting circuit takes as input a vertex $u \in U$ and a
  vertex $v \in V$ and outputs the multiplicity of the edge $\{u, v\}$ in
  $G$. Hence, the edge counting formula $\calC$, that we want to implement,
  takes as input a tuple $(\vecx, s, \veca, \vecy)$. The vector $\vecx$
  corresponds to the assignment in $U$. The vector $\veca$ corresponds
  to the description of a monomial of $F$, as the product
  $\prod_{i = 1}^m t'_{i a_i}$ where $t'_{i a_i}$ is the $a_i$-th
  monomial of the polynomial $1 - g_i^{p - 1}$. The vector
  $\vecy = (\vecy_1, \vecy_2, \dots, \vecy_p)$ and corresponds to a
  $p$-tuple in $V_2$. Finally, $s$ is a selector number to distinguish
  between $v \in V_1$ and $v \in V_2$, namely if $s = 1$, we have
  $v \in V_1$ and if $s = 0$, we have that $v \in V_2$. So, the edge
  counting formula can be written as follows
  \begin{equation} \label{eq:selectorImplementation}
    \calC(\vecx, s, \veca, \vecy) = \p{\prod_{i \in \F_p, i \neq 1} (s - i)} \calC_1(\vecx, \veca, \vecy) + \p{\prod_{i \in \F_p, i \neq 0} (s - i)} \calC_2(\vecx, \veca, \vecy).
  \end{equation}
  \noindent This way we can define the edge counting formula $\calC_1$
  for when $v \in V_1$ and the edge counting formula $\calC_2$ for when
  $v \in V_2$ separately and combine them by using at most two
  additional layers in the arithmetic formula. Now,
  $\calC_1(\vecx, \vecy, \veca) = \chara(\vecy = \vec{0}) \cdot \prod_{i = 1}^m \calQ_i(\vecx, a_i)$
  where $\calQ_i(\vecx, a_i)$ is the formula to compute the
  value $t_{i, a_i}(\vecx)$. Observe that the factor $\chara(\vecy = \vec{0})$ can be easily
  computed and is necessary since $\calC_1$ should consider only neighbors between
  $\vecx$ and monomials in $V_1$. Hence, if $\vecy$ is not equal to $\vec{0}$,
  $\calC_1$ should return $0$.
  As we already explained the number of monomials of
  $1 - g_i^{p - 1}$ is constant, and hence the formula
  $\calQ_i(\vecx, a_i)$ can be easily implemented in constant depth
  using a selector between all different monomials similarly to
  Equation~\eqref{eq:selectorImplementation}. Hence,
  $\calC_1$ is implemented in constant depth.

    The formula $\calC_2$ has a factor $\chara(\veca = 0)$ to
  ensure only neighbors in $V_2$ have non-zero outputs. The main
  challenge in the description of $\calC_2$ is that every distinct
  $p$-tuple $\vecy$ has $p!$ equivalent
  representations, but the modulo $p$ argument of
  \autoref{lem:combinedChevalleyPPAp} applies only when edges appear to precisely one of
  the equivalent copies of the $p$-tuple. Thus, we let $\calC_2$ add edges only to
  the lexicographically ordered version of $\vecy$. It is a simple exercise
  to see that sorting of $p!$ numbers, when $p$ is constant, is possible
  in constant depth. We leave this folklore observation as an exercise to
  the reader. Once we make sure that $\vecy$ is lexicographically
  sorted, we compute a sorted representation of the set
  $\Sigma_{\vecx} = \{\vecx, \sigma(\vecx), \dots, \sigma^{p - 1}(\vecx)\}$, where $\sigma$ is the permutation in the input of the
  $\cChevalleyp$ problem. Then, we can easily check whether the $p$-tuple
  represented by $\vecy$ is the same as the sorted $p$-tuple $\Sigma_{\vecx}$.
              Finally, we observe that
  edges between $\vecx$ and $\Sigma_{\vecx}$ are only used when
  $\vecx \in \calV_{\vecg} \cap \overline{\calV}_{\vech}$ which
  again can be checked with constant depth formulas. If these
  checks pass, then $\calC_2$ outputs $p - 1$, otherwise it outputs $0$.
  \medskip

  \paragraph{From Grouping Circuit to Grouping Formula.}
  For this step we use selectors similarly to
   Equation~\eqref{eq:selectorImplementation} and sorting
  as in the description of $\calC_2$. We consider two
  different cases for the grouping formula $\phi$. When the first argument
  is in $U$, i.e. grouping
  with respect to an assignment, we call the formula $\psi$ and when the
  first argument is in $V$, i.e. grouping with respect to
  monomials/$p$-tuples, we call the formula $\chi$. Then, $\phi$ selects between
  $\psi$ and $\chi$ using a selector. This adds at most two layers to
  $\phi$.
\begin{description}
\item[Grouping formula for $\vecx \boldsymbol{\in U}$.]
      First, we describe $\psi$ with inputs $\vecx \in U$,
      $(s, \veca, \vecy) \in V$ and $r$ be the copy of the input edge. We
      have two cases with respect to whether $s = 1$ or $s = 0$. Let $\psi^1$ be the
      formula for the first case and $\psi^2$ be the formula for the second case. For
      the case $s = 1$, we need again to consider two cases:
      (i) $\vecx \in \overline{\calV}_{\vecg}$ and (ii)
      $\vecx \in \calV_{\vecg}$. For case (i) we describe the
      formula $\psi^1_1$ and for case (ii) we define the formula
      $\psi^1_2$. It is easy to see that computing
      $\chara(\vecx \in \calV_{\vecg})$ can be done using a depth 3
      formula since $\vecg$ is given in an explicit form. Hence, once
      again, we can combine $\psi^1_1$ and $\psi^1_2$ using a selectors.

      \paragraph{\boldmath Case $s = 1$, $\vecx \in \overline{\calV}_{\vecg}$.} The formula
      $\psi^1_1$ first computes
      $i^{\star} = \min\limits_{i:1 - g^{p - 1}_{i}(\vecx) = 0} i$.
      This is doable in constant depth, since we can compute in parallel the
      value $\chara(1 - g^{p - 1}_{i}(\vecx) = 0)$  for all $i \in \bracks{m_1} $
       and then in an
      extra layer compute for every $i$ whether
      $1 - g^{p - 1}_{i}(\vecx) = 0$ and
      $1 - g^{p - 1}_{j}(\vecx) \neq 0$ for all $j < i$, which requires just one
      multiplication gate per $i$.

      Next, we define a formula $\psi^1_{1 i}$ for all $i$ and we use a selector to
      output $\psi^1_{1 i^*}$. In $\psi^1_{1 i}$, we first compute the
      value $C_i(\vecx) = \prod_{j \neq i} t_{j, a_j}(\vecx)$. The
      output of $\psi^1_{1 i}$ is a $p$-tuple, where each of the $p$
      parts differs only on the coordinate $a_i$ of $\veca$, which corresponds to a
      monomial of $1 - g_i^{p-1}$, and the value $r$.
      We need to determine $p$ different values for the tuple $(a_i, r)$ where
      $a_i \in [3^p]$,  $r \in \Z_p$. These values only depend on the
      evaluation of the polynomial $g_i$ on the input $\vecx$, on the value $a_i$
      and on the value $r$.

      Because of the properties of the input system of
      polynomials $\vecf$, each polynomial $g_i$ depends only on three
      variables in $\Z_p$, let these variables be $x_1, x_2, x_3$ for simplicity.
      Then, for every $i$ the grouping function that we want to implement is a
      function with input domain $\Z_p^3 \times [3^p] \times \Z_p$ and
      output domain $\Z_p^2$. The truth-table of this function has size
      that depends only on $p$ and therefore we can explicitly implement this
      function using its truth-table in constant depth. This finishes the
      construction of $\psi^1_{1 i}$.

      \paragraph{\boldmath Case $s = 1$, $\vecx \in \calV_{\vecg}$.} We remind that
      $\veca = \vec{0}$ corresponds to the constant monomial $1$ of the
      polynomial $F$. If  $\veca \neq \vec{0}$, this case is
      similar to the previous, except that we use the polynomials
      $\vecg_{ i}^{p - 1}$ instead of $1 - \vecg_{ i}^{p - 1}$, see also
      the proof of \autoref{lem:combinedChevalleyPPAp}. If
      $\veca = \vec{0}$, $\psi^1_2$ outputs the input edge
      $(1, \veca, \vec{0}, 1)$ and $p - 1$ edges of the form
      $(0, \vec{0}, \vecy, t), t \in \bracks{p-1}$
      where $\vecy$ is the lexicographically ordered set $\Sigma_{\vecx}$.

      \paragraph{\boldmath Case $s = 0$.} In this case, the formula $\psi^2$ checks whether
      the vector $\vecy$ is in lexicographic order as described in
      the edge counting formula $\calC$ and $\veca = \vec{0}$.
      It also checks if $\vecx \in \calV_{\vecf_1} \cap \overline{\calV}_{\vecf_2}$
      as described before. If any of these checks fails, the output is $\vec{0}$.
      Otherwise, if $\vecy = \Sigma_{\vecx}$, then we output $p-1$ copies of the edge
      $( 0, \vec{0}, \vecy, t), t \in \bracks{p-1}$, that connects $\vecx$  with
      $\vecy$, and the edge
      $(1, \vec{0}, \vec{0},1)$, that connects $\vecx$ with the constant term of
      $ F$.
      \medskip

  \item[Grouping formula for vertices in $\boldsymbol{V}$.]
        We describe the grouping formula $\chi$ when the first argument belongs to
        $V$, i.e. the grouping with respect to monomials or $p$-tuples. The input
        again is a triple $(s, \veca, \vecy)$ representing  a vertex in $V$, a
        vertex $\vecx \in U$ and a number $r \in \Z_p$ that denotes the index of the
        edge that we want to group, among its possible multiple copies. Again we
        have two cases, $s = 1$ and $s = 0$, which correspond to
        the formulas $\chi^1$ and $\chi^2$ respectively. In each case, we have
        to check that one of $\veca$, $\vecy$ is equal to $\vec{0}$, which is done
        similarly to the previous formulas.

        \paragraph{\boldmath Case $s = 1$.} In this case, the input is a monomial
        $t_{\veca}(\vecx) = \prod_{i = 1}^{m_1} t_{i, a_i}(\vecx)$ and
        we have to find a variable that appears with degree less than $p - 1$.
        We first construct a formula $\chi^1_j$ that computes $z^k$, where $k$ is
        the degree of $x_j$ in $t_{\veca}(\vecx)$.
        This can be done with a constant size
        formula that for a given index $j$ multiplies the powers of $x_j$
        in the monomials of $1 - g_i^{p-1}$ appearing in $t$.

        Now, we compute all values $\chi^1_j(1)$, $\dots$,
        $\chi^1_j(p - 1)$ and we check in parallel if at least one of them is
        different from $1$. If this is the case, then the degree of
        $x_j$ in $t(\vecx)$ is less than $p - 1$. Hence, we have computed the
        formula $\bar{\chi}^1_j(\veca) = \chara(\text{degree of $x_j$ in
        $t_{\veca}$} \neq p - 1)$. We can find the smallest index $j^*$
        such that $\bar{\chi}^1_j(\veca) = 1$ using the same construction
        as in $\psi^1$. So, we can construct a formula for each $j$ that is equal to
        $1$ if and only if $j = j^*$ is the smallest index such that $x_{j^*}$ has
        degree less than $p - 1$ in $t_{\veca}$.
        Finally, we use a selector to find the value
        $C_{j^*}(\vecx) = x_{j^*}^{-k} t(\vecx)$, by computing $C_{j}(\vecx)$ for
        all $j$. This is done through the product of all variables that appear in
        $t_{\veca}(\vecx)$ excluding $x_{j}$.

        It is left to implement a formula that takes as input the value
        $C_{j^*}(\vecx) \in \Z_p$, the value of $r \in \Z_p$ and the values
        $\chi^1_{j^*}(0)$, $\chi^1_{j^*}(1)$, $\dots$, $\chi^1_{j^*}(p - 1)$ all in
        $\Z_p$ and outputs a group of $p$ values in $\Z_p^2$, which corresponds to
        the values of $x_j$ and $r$ in the output. Observe
        that both the input and the output size of this formula are only a
        function of $p$ and, hence, constant. Therefore, we can explicitly
        construct a constant depth formula to capture this grouping.
                                                
        \paragraph{\boldmath Case $s = 0$.} For constructing the formula $\chi^2$ we first
        check whether $\vecx \in \overline{\calV}_{\vecf_1}$ and whether
        $\vecy$ is the lexicographically sorted version of $\Sigma_{\vecx}$.
        These can both be done as we have described in the construction of
        the formula $\psi$ above. If all checks pass, then we output the
        $p$ edges of the form $(\vecz, r)$ for all $\vecz \in \Sigma_{\vecx}$,
        that correspond to the $r$-th copy of the edge between $\vecz$ and $\vecy$.

        \medskip
        Combining the formulas $\psi$ and $\chi$ through a selector concludes the
        construction of $\phi$.
\end{description}

    Hence, our theorem follows from the observation that
  the instance of the $\cChevalleyp$ problem that we get when
  reducing $\lonelyp$ to $\cChevalleyp$ in
  \autoref{thm:combinedChevalleyPPApCompleteness} reduces to
  $\sucBipartitep[\ACzp]$.
\end{prevproof}

\addcontentsline{toc}{section}{References}
\DeclareUrlCommand{\Doi}{\urlstyle{sf}}
\renewcommand{\path}[1]{\small\Doi{#1}}
\renewcommand{\url}[1]{\href{#1}{\small\Doi{#1}}}
\bibliographystyle{alphaurl}
\bibliography{ppa}
\end{document}